\newtheorem{proposition}{Proposition}
\newcommand{\harpoonl}{\overset{\leftharpoonup}}
\newcommand{\harpoonr}{\overset{\rightharpoonup}}
\begin{document}
%%%%%%%%%%%%%%%%%%%%%%%%%%%%%%%%%
%	Title
%%%%%%%%%%%%%%%%%%%%%%%%%%%%%%%%%
\title{Sparse Activity Detection for Massive Connectivity}
%%%%%%%%%%%%%%%%%%%%%%%%%%%%%%%%%
%	Authors
%%%%%%%%%%%%%%%%%%%%%%%%%%%%%%%%%
\author{Zhilin~Chen,~\IEEEmembership{Student Member,~IEEE,}
        Foad~Sohrabi,~\IEEEmembership{Student Member,~IEEE,}
        and~Wei~Yu,~\IEEEmembership{Fellow,~IEEE}
\thanks{Manuscript accepted and to appear in IEEE Transactions on Signal Processing. This work has been presented in part at IEEE International Conference on Acoustics, Speech, and Signal Processing (ICASSP), March 2017. This work is supported by Natural Sciences and Engineering Research Council (NSERC) of Canada through a Discovery Grant and through a Steacie Memorial Fellowship.\par
The authors are with The Edward S. Rogers Sr.  Department of Electrical and
Computer Engineering, University of Toronto, Toronto, ON M5S 3G4, Canada
(e-mails:\{zchen, fsohrabi, weiyu\}@comm.utoronto.ca).}}
\maketitle
%submitted to the IEEE Transactions on Signal Processing on May 2, 2017, and revised on September 18, 2017 and December 11, 2017
%%%%%%%%%%%%%%%%%%%%%%%%%%%%%%%%%
%	Abstract
%%%%%%%%%%%%%%%%%%%%%%%%%%%%%%%%%
\begin{abstract}
This paper considers the massive connectivity application in which a large number
of potential devices communicate with a base-station (BS) in a sporadic fashion.
The detection of device activity pattern together with the estimation of the
channel are central problems in such a scenario.  Due to the large number of
potential devices in the network, the devices need to be assigned
non-orthogonal signature sequences. The main objective of this paper is to show
that by using random signature sequences and by exploiting sparsity in the user
activity pattern, the joint user detection and channel estimation problem can
be formulated as a compressed sensing single measurement vector (SMV) problem
or multiple measurement vector (MMV) problem, depending on whether the BS has a
single antenna or multiple antennas, and be efficiently solved using an approximate
message passing (AMP) algorithm. This paper proposes an AMP algorithm design
that exploits the statistics of the wireless channel and provides an
analytical characterization of the probabilities of false alarm and missed
detection by using the state evolution. We consider two cases depending on
whether the large-scale component of the channel fading is known at the BS and
design the minimum mean squared error (MMSE) denoiser for AMP according to the
channel statistics. Simulation results demonstrate the substantial advantage of
exploiting the statistical channel information in AMP design; however, knowing
the large-scale fading component does not offer tangible benefits. For the
multiple-antenna case, we employ two different AMP algorithms, namely the AMP
with vector denoiser and the parallel AMP-MMV, and quantify the benefit of
deploying multiple antennas at the BS.
\end{abstract}
%%%%%%%%%%%%%%%%%%%%%%%%%%%%%%%%%
%	Key Words
%%%%%%%%%%%%%%%%%%%%%%%%%%%%%%%%%
\begin{IEEEkeywords}
Device activity detection, channel estimation, approximate message passing, compressed sensing, Internet of Things (IoT), machine-type communications (MTC)
\end{IEEEkeywords}
\IEEEpeerreviewmaketitle

%%%%%%%%%%%%%%%%%%%%%%%%%%%%%%%%%
%	I. Introduction
%%%%%%%%%%%%%%%%%%%%%%%%%%%%%%%%%
\section{Introduction}
One of the key requirements for the next-generation wireless cellular
networks is to provide massive connectivity for machine-type
communications (MTC), envisioned to support diverse applications such
as environment sensing, event detection, surveillance and control
\cite{andrews2014will, Wei2016}.
Machine-centric communications have two distinctive features as
compared to conventional human-centric communications: (i) the overall system needs
to support massive connectivity---the number of devices connected to
each cellular base-station (BS) may be in the order of $10^4$ to
$10^6$; and (ii) the traffic pattern
of each device may be sporadic---at any given time only a small
fraction of potential devices are active. For such a network,
%with a massive number of devices but with sporadic traffic,
accurate user activity detection and channel estimation are crucial for
establishing successful communications between the devices and the BS.

To identify active users and to estimate their channels, each
user must be assigned a unique signature sequence. However, due to the
large number of potential devices but the limited coherence time and
frequency dimensions in the wireless fading channel, the signature
sequences for all users cannot be mutually orthogonal.
Non-orthogonal signature sequences superimposed in the pilot stage
causes significant multi-user interference, e.g., when a simple matched
filtering or correlation operation is applied at the BS for user
activity detection and channel estimation. A key observation of this
paper is that the sporadic nature of the traffic leads to
\emph{sparse} user transmission patterns. By exploiting sparsity and by
formulating the detection and estimation problem with independent identically distributed (i.i.d.) random non-orthogonal
pilots as a compressed sensing problem, this multi-user interference
problem can be overcome, and highly reliable activity detection and
accurate channel estimation can be made possible.
In the compressed sensing terminology, when the BS is equipped with a
single antenna, activity detection and channel estimation can be
formulated as a single measurement vector (SMV)
problem; when the BS has multiple antennas, the problem can be
formulated as a multiple measurement vector (MMV) problem.
%Both allow the use of established compressed sensing techniques.

This paper proposes the use of compressed sensing techniques for
the joint user activity detection and channel estimation problem.
%We show that efficient sparse activity detection algorithms can be designed specifically for the wireless massive connectivity problem, and that their performance can be analytically characterized. Although a variety of techniques exist in the compressed sensing literature, considering the
Due to the large-scale nature of massive device communications, this paper
adopts the computationally efficient approximate message passing (AMP)
algorithm \cite{Donoho2009} as the main technique. AMP is an iterative
thresholding method with a key feature that allows analytic performance
characterization via the so-called state evolution. The main
contributions of this paper are: (i) a novel AMP algorithm design for
user activity detection that exploits the statistical information of the
wireless channel; and (ii) a characterization of the probabilities of
false alarm and missed detection for both SMV and MMV scenarios.

%%%%%%%%%%%%%%%%%%%%%%%%%%%%%%%%%
%	I.A Related Work
%%%%%%%%%%%%%%%%%%%%%%%%%%%%%%%%%
\subsection{Related Work}

%The user activity detection for massive connectivity is related to the sparsity pattern recovery problem in compressed sensing, which has been studied in \cite{Wainwright2009a,FletcherRanganGoyal2009,Fletcher2009,ReevesGastpar2008} from a theoretical perspective. For noisy observations and Gaussian sensing matrices, \cite{Wainwright2009a} derives the sufficient conditions on the number of measurements required for exact support recovery using the maximum likelihood decoder, and the necessary conditions using an arbitrary decoder.  Under the same problem setup, \cite{FletcherRanganGoyal2009} provides a necessary condition for exact support recovery with the maximum likelihood decoder. A summary of these conditions for exact support recovery can be found in \cite{FletcherRanganGoyal2009,Fletcher2009}.  Instead of exact support recovery, \cite{ReevesGastpar2008} studies the partial support recovery, and provides both sufficient and necessary conditions.

The user activity detection problem for massive connectivity has been studied
from information theoretical perspectives in \cite{ChenGuo2017,Wei2016}.
From an algorithmic point of view, the problem is closely related to sparse
recovery in compressed sensing and has been studied in a variety of wireless
communication settings. For example, assuming no prior knowledge of the
channel state information (CSI), joint user activity detection and channel
estimation is considered in \cite{Dekorsy2013,Lau2015,Wunder2015GC,Wunder2015}.
Specifically, \cite{Dekorsy2013} proposes an efficient greedy algorithm based
on orthogonal matching pursuit for sporadic multi-user communication. By
exploiting the statistics of channel path-loss and the joint sparsity
structures, \cite{Lau2015} proposes a modified Bayesian compressed sensing
algorithm in a cloud radio-access network. In the context of orthogonal
frequency division multiplexing (OFDM) systems, \cite{Wunder2015GC} introduces
a one-shot random access protocol and employs the basis pursuit denoising
detection method with a detection error bound based on the restricted isometry
property. The performance of such schemes in a practical setting is illustrated
in \cite{Wunder2015GC,Wunder2015}. When perfect CSI is assumed, joint user
activity and data detection for code division multiple access systems (CDMA)
is investigated in \cite{ZhuGiannakis2011,SchepkerDekorsy2012}, where
\cite{ZhuGiannakis2011} designs the sparsity-exploiting maximum a posteriori
detector by accounting for both sparsity and finite-alphabet constraints of the
signal, and \cite{SchepkerDekorsy2012} proposes a greedy block-wise orthogonal
least square algorithm by exploiting the block sparsity among several symbol
durations. Differing from most of the above works that consider cellular
systems, \cite{LuoGuo2008,Guo2013} study the user activity detection in
wireless ad hoc networks, where each node in the system identifies its neighbor
nodes simultaneously. The authors of \cite{LuoGuo2008} propose a scalable
compressed neighbor discovery scheme that employs random binary signatures and
group testing based detection algorithms. In \cite{Guo2013}, the authors
propose a more dedicated scheme that uses signatures based on
Reed-Muller code and a chirp decoding algorithm to achieve a better
performance.

In contrast to the aforementioned works, this paper adopts the more
computationally efficient AMP algorithm for user activity detection and channel
estimation, which is more suitable for
large-scale networks with a large number of devices. The AMP algorithm is first
proposed in \cite{Donoho2009} as a low-complexity iterative algorithm for
conventional compressed sensing with real-valued signals and real-valued
measurements. A framework of state evolution that tracks the performance of AMP
at each iteration is introduced in \cite{Donoho2009}. The AMP algorithm is then
extended along different directions. For example,
\cite{Bayati2011} generalizes the AMP algorithm to a broad family of iterative
thresholding algorithms, and provides a rigorous proof of the framework of the
state evolution. To deal with complex-valued signals and measurements,
\cite{Maleki2013} proposes a complex AMP algorithm (CAMP). By exploiting the
input and output distributions, a generalized AMP (GAMP) algorithm is designed in \cite{Rangan2011}. Similarly, a Bayesian approach is used to design the AMP algorithm in \cite{Donoho2010,Schniter2010} by accounting for the input distribution.
For the compressed sensing problem with multiple signals sharing joint
sparsity, i.e., the MMV problem,
%two different AMP algorithms are designed in \cite{Kim2011,Ziniel2013}, where
\cite{Kim2011} designs an AMP algorithm via a vector form of message passing;
and \cite{Ziniel2013} designs the AMP-MMV algorithm by directly using message
passing over a multi-frame factor graph.

Although the use of the AMP algorithm for user activity detection has
been previously proposed in \cite{Goertz2015}, the statistical
information of the channel is not exploited in the prior work; also performance
analysis is not yet available. This paper makes progress by showing that
exploiting channel statistics can significantly enhance the Bayesian AMP
algorithm.  Moreover, analytical performance characterization can be obtained
by using state evolution. Finally, the AMP algorithm can be extended to the
multiple-antenna case.

%%%%%%%%%%%%%%%%%%%%%%%%%%%%%%%%%
%	II.B Main Contributions
%%%%%%%%%%%%%%%%%%%%%%%%%%%%%%%%%
\subsection{Main Contributions}

This paper considers the user activity detection and channel estimation problem
in the uplink of a single-cell network with a large number of potential users,
but at any given time slot only a small fraction of them are active. To exploit
the sparsity in user activity pattern, this paper formulates the problem as a
compressed sensing problem and proposes the use of random signature sequences
and the computationally efficient AMP algorithm for device activity detection.
This paper provides the design and analysis of AMP for both cases in which
the BS is equipped with a single antenna and with multiple antennas.

This paper considers two different scenarios: (i) when the large-scale fading
coefficients of all user are known and the detector is designed based on the
statistics of fast fading component only; and (ii) when the large-scale fading
coefficients are not known and the detector is designed based on the statistics
of both fast fading and large-scale fading components as a function of the
distribution of device locations in the cell.  The proposed AMP-based detector
exploits the statistics of the wireless channel by specifically designing the
minimum mean squared error (MMSE) denoiser.  This paper provides analytical
characterization of the probabilities of false alarm and missed detection via
the state evolution for both scenarios.

For the case where the BS is equipped with a single antenna, numerical results
indicate that: (i) the analytic performance characterization via state
evolution is very close to the simulation; (ii) exploiting the statistical
information of the channel and user activity can significantly improve the
detector performance; and (iii) knowing the large-scale fading coefficient
actually does not bring substantial performance improvement as compared to the
case that only the statistical information about the large-scale fading is
available.

For the case where the BS is equipped with multiple antennas, this paper
considers both the AMP with vector denoiser \cite{Kim2011} and the parallel
AMP-MMV \cite{Schniter2010}. For the AMP with vector denoiser, this paper
exploits wireless channel statistics in denoiser design and further
analytically characterizes the probabilities of false alarm and missed
detection based on the state evolution. For the parallel AMP-MMV algorithm,
which is more suitable for distributed computation, performance
characterization is more difficult to obtain. Simulation results show that:
(i) having multiple antennas at the BS can significantly improve the detector
performance; (ii) the predicted performance of AMP with vector denoiser is very
close to its simulated performance; and (iii) AMP with vector denoiser and
parallel AMP-MMV achieve approximately the same performance.

%%%%%%%%%%%%%%%%%%%%%%%%%%%%%%%%%
%	II.C Paper Organization and Notations
%%%%%%%%%%%%%%%%%%%%%%%%%%%%%%%%%
\subsection{Paper Organization and Notations}

The remainder of this paper is organized as follows.
Section~\ref{sec.system} introduces the system model.
Section~\ref{sec.amp} introduces the AMP algorithms
for both SMV and MMV problems. Section~\ref{sec.single}
considers user activity detection and channel estimation when the BS has a single-antenna, while Section~\ref{sec.multiple} considers the multiple-antenna case. Simulation results are provided in
Section~\ref{sec.numerical}. Conclusions are drawn in Section~\ref{sec.con}.

Throughout this paper, upper-case and lower-case letters denote random variables and their realizations, respectively. Boldface lower-case letters denote vectors. Boldface upper-case letters denote matrices or random vectors, where context should make the distinction clear. Superscripts $(\cdot)^{T}$, $(\cdot)^{*}$ and $(\cdot)^{-1}$ denote transpose, conjugate transpose, and inverse operators, respectively. Further, $\mathbf{I}$ denotes identity matrix with appropriate dimensions, $\mathbb{E}[\cdot]$ denotes expectation operation, $\triangleq$ denotes definition, $|\cdot|$ denotes either the magnitude of a complex variable or the determinant of a matrix, depending on the context, and $\|\cdot\|_2$ denotes the $\ell_2$ norm.

%%%%%%%%%%%%%%%%%%%%%%%%%%%%%%%%%
%	II. System Model
%%%%%%%%%%%%%%%%%%%%%%%%%%%%%%%%%
\section{System Model}
\label{sec.system}
Consider the uplink of a wireless cellular system with one BS located at the
center and $N$ single-antenna devices located uniformly in a circular area with
radius $R$, but in each coherence block only a subset of users are active.  Let
$a_{n}\in\{1,0\}$ indicate whether or not user $n$ is active.  For the purpose
of channel probing and user identification, user $n$ is assigned a unique
signature sequence $\mathbf{s}_n=[s_{1n}, s_{2n},\cdots,s_{Ln}]^{T}\in
\mathbb{C}^{L\times 1}$, where $L$ is the length of the sequence.
This paper assumes that the signature sequence $\mathbf{s}_n$ is generated
according to i.i.d. complex Gaussian distribution with zero mean and variance
$1/L$ such that each sequence is normalized to have unit power, and the
normalization factor $1/L$ is incorporated into the transmit power.

We consider a block-fading channel model where the channel is static in each
block. In this paper, we consider two cases where the BS is equipped with
either a single antenna or multiple antennas.
When the BS has only one antenna, the received signal at the BS can be
modeled as
%1
\begin{equation}\label{eq.signal}
\mathbf{y} = \sum_{n=1}^{N}a_n\mathbf{s}_{n}h_n + \mathbf{w}\triangleq \mathbf{Sx}+\mathbf{w},
\end{equation}
where $h_n\in\mathbb{C}$ is the channel coefficient between user $n$
and the BS, $\mathbf{w}\in\mathbb{C}^{L\times 1}$ is the effective
complex Gaussian noise whose variance $\sigma^{2}_{w}$ depends on the
background noise power normalized by the user transmit power. Here,
$\mathbf{x}\triangleq[x_1,x_2,\cdots,x_N]^{T}\in\mathbb{C}^{N\times 1}$
where $x_n\triangleq h_na_n$, and
$\mathbf{S}\triangleq[\mathbf{s}_1,\mathbf{s}_2,\cdots,\mathbf{s}_N]\in\mathbb{C}^{L\times N}$.

We aim to recover the
non-zero entries of $\mathbf{x}$ based on the received signals
$\mathbf{y}$. We are interested in the regime where the number of
potential users is much larger than the pilot sequence length, i.e.,
$N \gg L$, so that the user pilot sequences cannot be mutually
orthogonal; but due to the sporadic traffic, only a small number of
devices transmit in each block, resulting in a sparse $\mathbf{x}$.
The recovering of $\mathbf{x}$ for the single antenna case is in the form of the SMV problem in
compressed sensing.

This paper also considers the case where the BS is equipped with $M$
antennas. In this case, the received signal $\mathbf{Y}\in\mathbb{C}^{L\times M}$ at
the BS can be expressed in matrix form as
%2
\begin{equation}\label{eq.signal.multiple}
\mathbf{Y} = \sum_{n=1}^{N}a_n\mathbf{s}_{n}\mathbf{h}_{n} + \mathbf{W}\triangleq \mathbf{SX}+\mathbf{W},
\end{equation}
where $\mathbf{h}_{n}\in\mathbb{C}^{1\times M}$ is the channel vector between user $n$
and the BS, $\mathbf{W}\in\mathbb{C}^{L\times M}$ is the effective
complex Gaussian noise, and
$\mathbf{X}\triangleq[\mathbf{r}_{1}^{T},\cdots,\mathbf{r}_{N}^{T}]^{T}\in\mathbb{C}^{N\times M}$ where $\mathbf{r}_{n}\triangleq a_n\mathbf{h}_{n}\in\mathbb{C}^{1\times M}$ is the $n$th row vector of $\mathbf{X}$. We also use $\mathbf{c}_m\in\mathbb{C}^{N\times 1}$ to represent the $m$th column vector of $\mathbf{X}$, i.e., $\mathbf{X}=[\mathbf{c}_{1},\cdots,\mathbf{c}_{M}]$.
Note that $a_n$ indicates whether the entire row vector $\mathbf{r}_{n}$ is zero or not.
In other words, columns of $\mathbf{X}$ (i.e., $\mathbf{c}_m$) share the same sparsity pattern.

We are interested in detecting the user activity $a_n$ as well as in
estimating the channel gains of the active users, which correspond to the non-zero
rows of the matrix $\mathbf{X}$, based on the observation $\mathbf{Y}$
in the regime where $N \gg L$. The problem of recovering $\mathbf{X}$
from $\mathbf{Y}$ is in the form of the MMV problem in compressed sensing.

A key observation of this paper is that the design of recovery algorithm can be significantly enhanced by taking advantage
of the knowledge about the statistical information of $\mathbf{x}$ or $\mathbf{X}$. Toward this
end, we provide a model for the distribution of the entries of $\mathbf{x}$, and the distribution of the rows of $\mathbf{X}$. Since $\mathbf{x}$ is a special case of $\mathbf{X}$ when $M=1$, we focus on the model for $\mathbf{X}$.

We assume that each user accesses the channel with a small probability $\lambda$
in an i.i.d. fashion, i.e.,  $\mathrm{Pr}(a_n=1)=\lambda, \forall n$, and there is no correlation between different users' channels,
so that the row vectors of $\mathbf{X}$ follow a mixture distribution
%3
\begin{align}\label{eq.distr.multiple}
p_{\mathbf{R}|G}(\mathbf{r}_n|g_n)=(1-\lambda)\delta_{\mathbf{0}}+\lambda p_{\mathbf{H}|G}(\mathbf{r}_n|g_n),
\end{align}
where $\delta_{\mathbf{0}}$ denotes the point mass measure at $\mathbf{0}$, $p_{\mathbf{H}|G}$
denotes the probability density function (pdf) of the
channel vector $\mathbf{H}$ given prior information $G$, which has a pdf
$p_G$, and $g_n$
denotes the prior information for user $n$. Note that we use $\mathbf{H}$ to denote the random channel vector and $\mathbf{h}_n$ to denote its realization. Based on \eqref{eq.distr.multiple}, the pdf of the entries of $\mathbf{x}$ is
%4
\begin{equation}\label{eq.distr}
p_{X|G}(x_n|g_n)=(1-\lambda)\delta_{0}+\lambda p_{H|G}(x_n|g_n).
\end{equation}

To model the distribution of $\mathbf{H}$, we assume that all users
are randomly and uniformly located in a circular coverage area of
radius $R$ with the BS at the center, and the channels between the
users and the BS follow an independent distribution that depends on
the distance. More specifically,
$\mathbf{H}$ includes path-loss, shadowing, and Rayleigh fading.
The path-loss between a user and the BS is modeled (in dB) as
$\alpha+\beta \log_{10}(d)$, where $d$ is the distance
measured in meter, $\alpha$ is the fading coefficient at $d=1$, and
$\beta$ is the path-loss exponent. The shadowing (in dB) follows a
Gaussian distribution with zero mean and variance
$\sigma_{\mathrm{SF}}^2$. The Rayleigh fading is assumed to be i.i.d. complex Gaussian with zero mean and unit variance across all antennas.

The large-scale fading, which includes path-loss and shadowing, is denoted as
$G$, whose pdf $p_G$ can be modeled by the distribution of BS-user distance and
shadowing parameter $\sigma_{\mathrm{SF}}^2$.  This paper considers both the
case where only the statistics of the the large-scale fading, i.e., $p_G$, is
known as well as the case where the exact large-scale fading coefficient $g_n$
is known at the BS.  The latter case is motivated by the scenario in which the
devices are stationary, so that the
path-loss and shadowing can be estimated and stored at the BS as prior information. When $g_n$ is known, $p_{\mathbf{H}|G}$ captures the distribution of the Rayleigh fading component. When only $p(G)$ is known, we drop $G$ and $g_n$ from $p_{\mathbf{H}|G}(\mathbf{h}_n|g_n)$, and write it as $p_{\mathbf{H}}(\mathbf{h}_n)$, which captures the distribution of both large-scale fading and Rayleigh fading.

%%%%%%%%%%%%%%%%%%%%%%%%%%%%%%%%%
%	III. AMP Algorithm: General Forms
%%%%%%%%%%%%%%%%%%%%%%%%%%%%%%%%%
\section{AMP Algorithm}
\label{sec.amp}

AMP is an iterative algorithm that recovers sparse signal for compressed
sensing. We introduce the AMP framework for both the SMV and the MMV problems
in this section.

%%%%%%%%%%%%%%%%%%%%%%%%%%%%%%%%%
%	III.A AMP for SMV problem
%%%%%%%%%%%%%%%%%%%%%%%%%%%%%%%%%
\subsection{AMP for SMV problem}
AMP is first proposed for the SMV problem in \cite{Donoho2009}.
%The framework of AMP is described as follows.
Starting with $\mathbf{x}^{0}=\mathbf{0}$ and
$\mathbf{z}^{0}=\mathbf{y}$, AMP proceeds at each
iteration as
% 5 and 6
\begin{align}
\mathbf{x}^{t+1}&=\eta(\mathbf{S}^{*}\mathbf{z}^{t}+\mathbf{x}^{t},\mathbf{g},t), \label{eq.amp1}\\
\mathbf{z}^{t+1} &= \mathbf{y}-\mathbf{S}\mathbf{x}^{t+1}+\frac{N}{L}\mathbf{z}^{t}\langle\eta^{\prime}(\mathbf{S}^{*}\mathbf{z}^{t}+\mathbf{x}^{t},\mathbf{g},t)\rangle, \label{eq.amp2}
\end{align}
where $\mathbf{g}\triangleq [g_1,\cdots,g_N]^{T}$, and $t=0,1,\cdots$ is the index of iteration, $\mathbf{x}^{t}$ is
the estimate of $\mathbf{x}$ at iteration $t$, $\mathbf{z}^{t}$ is the
residual, $\eta(\cdot,\mathbf{g},t)\triangleq [\eta_t(\cdot,g_1),\cdots,\eta_{t}(\cdot,g_N)]^{T}$ where $\eta_t(\cdot,g_n): \mathbb{C}\rightarrow \mathbb{C}$ is an appropriately designed non-linear function known as \emph{denoiser} that operates on the $n$th entry of the input vector, $\eta^{\prime}(\cdot)\triangleq[\eta_t^{\prime}(\cdot,g_1),\cdots,\eta_{t}^{\prime}(\cdot,g_N)]^{T}$ where $\eta_t^{\prime}(\cdot,g_n)$ is the
first order derivative of $\eta_t(\cdot,g_n)$ with respect to the first argument, and $\langle\cdot\rangle$ is
averaging operation over all entries of a vector. Note that the third
term in the right hand side of \eqref{eq.amp2} is the correction term
known as the ``Onsager term'' from statistical physics.

In the AMP algorithm, the matched filtered output
$\mathbf{\tilde{x}}^{t}\triangleq\mathbf{S}^{*}\mathbf{z}^{t}+\mathbf{x}^{t}$ can be modeled as signal $\mathbf{x}$ plus
noise (including multiuser interference), i.e.,  $\mathbf{\tilde{x}}^{t}=\mathbf{x}+\mathbf{v}^{t}$, where $\mathbf{v}^{t}$ is Gaussian due to the correction term.
The denoiser is typically designed to reduce the estimation error at each iteration.
In the compressed sensing literature, the prior distribution of
$\mathbf{x}$ is usually assumed to be unknown. In this case, a
minimax framework over the worst case $\mathbf{x}$ leads to a soft
thresholding denoiser
\cite{Donoho2013Minimax}. When the prior
distribution of $\mathbf{x}$ is known, the Bayesian framework then can
be used to account for the prior information on $\mathbf{x}$\cite{Donoho2010}. In
this paper, we adopt the Bayesian approach and design the MMSE
denoiser for the massive connectivity setup as shown in the next section.

The AMP algorithm can be analyzed in the asymptotic regime where
$L,N\rightarrow\infty$ with fixed $N/L$ via the state evolution, which
predicts the per-coordinate performance of the AMP algorithm at each
iteration as follows
%7
\begin{align}\label{eq.stateevo}
\tau_{t+1}^2 = \sigma_w^2+\frac{N}{L}\mathbb{E}\big[\left|\eta_t(X+\tau_tV,G)-X\right|^2\big],
\end{align}
where $\tau_{t}$ is referred to as the \emph{state}, $X$, $V$, and $G$ are
random variables with $X$ following $p_{X|G}$, $V$ following the
complex Gaussian distribution with zero mean and unite variance, and $G$ following $p_G$, and the expectation is taken
over all $X$, $V$, and $G$. We denote $\tilde{X}^{t}\triangleq X+\tau_{t}V$. The random variables $X$, $V$, $G$
$\tilde{X}^{t}$ capture the distributions of the
entries of $\mathbf{x}$, entries of $\mathbf{v}^{t}$ (up to a factor $\tau_{t}$), the prior information $g_n$,
and entries of $\mathbf{\tilde{x}}^{t}$, respectively, with $\mathbb{E}\big[|\eta_t(\tilde{X}^{t},G)-X|^2\big]$ characterizing the per-coordinate MSE of the estimate of
$\mathbf{x}$ at iteration $t$.

%%%%%%%%%%%%%%%%%%%%%%%%%%%%%%%%%
%	III.B AMP for MMV problem
%%%%%%%%%%%%%%%%%%%%%%%%%%%%%%%%%
\subsection{AMP for MMV problem}
%%%%%%%%%%%%%%%%%%%%%%%%%%%%%%%%%
%	III.B.1 Vector AMP
%%%%%%%%%%%%%%%%%%%%%%%%%%%%%%%%%
\subsubsection{AMP with vector denoiser} One extension of the AMP algorithm to solve the MMV problem in \eqref{eq.signal.multiple} is proposed in \cite{Kim2011}, which employs a vector denoiser that operates on each row vector of the matched filtered output:
%Although \cite{Kim2011} only considers the i.i.d. Gaussian signal, inspired by \cite{Rangan2012}, it is straightforward to generalize the algorithm to deal with  general signals as
% 8 and 9
\begin{align}
\mathbf{X}^{t+1}&=\eta(\mathbf{S}^{*}\mathbf{Z}^{t}+\mathbf{X}^{t},\mathbf{g},t), \label{eq.vamp1}\\
\mathbf{Z}^{t+1} &= \mathbf{Y}-\mathbf{S}\mathbf{X}^{t+1}+\frac{N}{L}\mathbf{Z}^{t}\langle\eta^{\prime}(\mathbf{S}^{*}\mathbf{Z}^{t}+\mathbf{X}^{t},\mathbf{g},t)\rangle, \label{eq.vamp2}
\end{align}
where $\eta(\cdot,\mathbf{g},t)\triangleq [\eta_t(\cdot,g_1),\cdots,\eta_{t}(\cdot,g_N)]^{T}$ with $\eta_t(\cdot,g_n): \mathbb{C}^{1\times M}\rightarrow \mathbb{C}^{1\times M}$ is a vector denoiser that operates on the $n$th row vector of $\mathbf{S}^{*}\mathbf{Z}^{t}+\mathbf{X}^{t}$, and the other notations are similar to those used in \eqref{eq.amp1} and \eqref{eq.amp2}. The state evolution of the AMP algorithm for the MMV problem also has a similar form as
%10
\begin{align}\label{eq.stateevo.multiple}
\mathbf{\Sigma}_{t+1} = \sigma_w^2\mathbf{I}+\frac{N}{L}\mathbb{E}\big[\mathbf{D}^{t}(\mathbf{D}^{t})^{*}\big],
\end{align}
where $\mathbf{D}^{t}\triangleq \left(\eta_t(\mathbf{R}+\mathbf{U}^{t},G)-\mathbf{R}\right)^{T}\in \mathbb{C}^{M\times 1}$ , with random vector $\mathbf{R} $ following $p_{\mathbf{R}|G}$ and random vector $\mathbf{U}^{t}$ following
$\mathcal{CN}(\mathbf{0},\mathbf{\Sigma}_{t})$. The expectation is taken over $\mathbf{R}$, $\mathbf{U}^{t}$, and $G$. To minimize the estimation error at each iteration, we can also design the vector denoiser $\eta_t(\cdot,\cdot)$ via the Bayesian approach.

%%%%%%%%%%%%%%%%%%%%%%%%%%%%%%%%%
%	III.B.2 Parallel AMP-MMV
%%%%%%%%%%%%%%%%%%%%%%%%%%%%%%%%%
\subsubsection{Parallel AMP-MMV} A different extension of the AMP algorithm for
dealing with the MMV problem is the parallel AMP-MMV algorithm proposed in
\cite{Ziniel2013}. The basic idea is to solve the MMV problem iteratively by
using multiple parallel AMP-SMVs then exchanging soft information between them.
Parallelization allows distributed implementation of the algorithm, which can
be computationally advantageous, especially when the number of antennas is
large.

The outline of the parallel AMP-MMV algorithm is illustrated in Algorithm~\ref{Alg:1} which
operates on a per-antenna basis, i.e., on the columns of $\mathbf{X}$ and $\mathbf{Z}$, denoted as $\mathbf{c}_m$ and $\mathbf{z}_m$ respectively, and where
$\eta(\cdot,\mathbf{g},t,i,m)\triangleq [\eta_{t,i,m}(\cdot,g_1),\cdots,\eta_{t,i,m}(\cdot,g_N)]^{T}$ is the denoiser used for the $m$th antenna in the iteration $(t,i)$. Note that here we add index  $i$ and $m$ in the notation of denoiser, $\eta_{t,i,m}(\cdot,g_n)$, to indicate the index of outer iteration and the index of SMV stage, respectively. In the first phase which is called the (into)-phase, the messages $\harpoonl{\pi}_{nm} $, are calculated and passed to the $m$th AMP-SMV stage. These messages convey the current belief about the probability of being active for each user. In the first iteration, we have $\harpoonl{\pi}_{nm} = \lambda, \forall{n,m}$, since no further information is available. In the next phase, which is called the (within)-phase, the conventional AMP algorithm is applied to the received signal of each antenna. Note that the denoiser in AMP algorithm is a function of the current belief about the activity of the users which is obtained based on the information sharing between all $M$ AMP-SMV stages. Finally, in the (out)-phase, the estimate of channel gains is used to refine the belief about the activity of the users.

%%%%%%%%%%%%%%%%%%%%%%%%%%%%%%%%%%%
%		Algorithm 1
%%%%%%%%%%%%%%%%%%%%%%%%%%%%%%%%%%%
\begin{algorithm}[t]
\caption{Parallel AMP-MMV Method \cite{Ziniel2013}}
\label{Alg:1}
\begin{algorithmic}[1]
\State Initialize $\harpoonr{\pi}_{nm}  = 0.5, \quad \forall n,m$.
\For{$i = 1 ~\mathrm{to}~ I$}
\Statex Execute the (into)-phase:
\State
$\harpoonl{\pi}_{nm} = \frac{\lambda \prod_{m^{\prime} \not = m} \harpoonr{\pi}_{nm^{\prime}}} {(1-\lambda)\prod_{m^{\prime} \not = m} (1- \harpoonr{\pi}_{nm^{\prime}}) + \lambda \prod_{m^{\prime} \not = m} \harpoonr{\pi}_{nm^{\prime}}}, \, \forall n,m
$
\Statex Execute the (within)-phase:
\For{$m = 1 ~\mathrm{to}~ M$}
\State Initialize $\mathbf{c}^{0}_m = \mathbf{0}$, $\mathbf{z}^{0}_m = \mathbf{y}_m$.
 \For{$t= 0 ~\mathrm{to}~ T$}
\State $\mathbf{c}^{t+1}_m=\eta(\mathbf{S}^{*}\mathbf{z}^{t}_m+\mathbf{c}_m^{t},\mathbf{g},t,i,m)$,
\State $\mathbf{z}^{t+1}_m= \mathbf{y}_m-\mathbf{S}\mathbf{c}^{t+1}_m+\frac{N}{L}\mathbf{z}_{m}^{t}\langle\eta^{\prime}(\mathbf{S}^{*}\mathbf{z}^{t}_m+\mathbf{c}^{t}_m,\mathbf{g},t,i,m)\rangle $,
 \EndFor
 \EndFor
\Statex Execute the (out)-phase:
\State Calculate $\harpoonr{\pi}_{nm}, \forall n,m$,
the probability of user $n$ being active based on the decision at the $m$th AMP-SMV stage.
\EndFor
\end{algorithmic}
\end{algorithm}
%which can be considered as a prior information about the activity of user $n$ for $m$th AMP-SMV stage.

%%%%%%%%%%%%%%%%%%%%%%%%%%%%%%%%%%%
%		IV. AMP-Based User Activity Detection: single-antenna case
%%%%%%%%%%%%%%%%%%%%%%%%%%%%%%%%%%%
\section{User Activity Detection: Single-Antenna Case}
\label{sec.single}

A main point of this paper is that exploiting the statistics of the wireless
channel can significantly enhance detector performance. This section proposes
an MMSE denoiser design for the AMP algorithm for the wireless massive
connectivity problem that specifically takes wireless channel characteristics
into consideration in the single-antenna case.
%, corresponding to the SMV problem in \eqref{eq.signal}
Two scenarios are considered: the large-scale fading $g_n$ of each user
is either directly available or only its statistics is available at the BS.
This section further studies the optimal detection strategy, and analyzes the
probabilities of false alarm and missed detection by using the state evolution
of the AMP algorithm.

%%%%%%%%%%%%%%%%%%%%%%%%%%%%%%%%%%%
%		IV.A  MMSE Denoiser for AMP Algorithm
%%%%%%%%%%%%%%%%%%%%%%%%%%%%%%%%%%%
\subsection{MMSE Denoiser for AMP Algorithm}
In the scenario where only the statistics about the large-scale fading is known at the BS, the distributions of the channel coefficients $p_{H}(h_n)$ are
independent and identical for all devices. In the scenario where the devices are stationary and
their path-loss and shadowing coefficients can be
estimated and thus the exact large-scale fading is known at the BS, the distributions of
the channel coefficients are of the form $p_{H|G}(h_n|g_n)$, which are complex Gaussian
with variance parameterized by $g_n$, and are independent but not identical
across the devices.
To derive the MMSE denoisers via the Baysian approach for both
cases, we first characterize the distributions $p_{G}(g_n)$,
$p_{H}(h_n)$ and $p_{H|G}(h_n|g_n)$ as follows.

%%%%%%%%%%%%%%%%%%%%%%%%%%%%%%%%%%%
%		Proposition 1
%%%%%%%%%%%%%%%%%%%%%%%%%%%%%%%%%%%
\begin{proposition}\label{P.Prop1}
Consider a circular wireless cellular coverage area of radius $R$ with BS at the center and uniformly distributed devices where the channels between the BS and the devices are modeled with large-scale fading $g_n$ with parameters $\alpha, \beta$ and shadowing fading parameter $\sigma_{\mathrm{SF}}$ as defined in the system model. Then, $g_n$ follows a distribution as
%11
\begin{equation}\label{eq.distr.of.lsf}
p_{G}(g_n) = ag_n^{-\gamma}Q(g_n),
\end{equation}
where $Q(g_n)\triangleq \int_{(b\ln g_n+c)}^{\infty}\exp(-s^2)ds$, $\gamma \triangleq 40/\beta+1$, and $a, b$, and $c$ are constants depending on parameters $\alpha, \beta, \sigma_{\mathrm{SF}}$ and $R$ as
\begin{align}\label{A.constant}
a&=\frac{40}{R^2\beta\sqrt{\pi}}\exp\left(\frac{2(\ln{10})^2\sigma_{\mathrm{SF}}^2}{\beta^2}-\frac{2\ln(10)\alpha}{\beta}\right),\nonumber\\
b&=\frac{-10\sqrt{2}}{(\ln10)\sigma_{\mathrm{SF}}},\quad c=\frac{-\alpha-\beta\log_{10}(R)}{\sqrt{2}\sigma_{\mathrm{SF}}}-\frac{20}{\beta b}.\nonumber
\end{align}
%. The expressions of $a, b$, and $c$ can be found in Appendix \ref{A:probG}.
\end{proposition}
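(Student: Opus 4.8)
The plan is to obtain the density of the large-scale fading $g$ (I drop the index $n$ throughout) by viewing it as a transformation of two independent sources of randomness: the BS--device distance $d$ and the shadowing $S$. First I would record the two ingredient distributions. Because the devices are uniformly distributed on the disk of radius $R$, the distance has density $p_D(d)=2d/R^2$ on $[0,R]$ (the area within radius $d$ scales like $d^2$), and the shadowing in dB is $S\sim\mathcal{N}(0,\sigma_{\mathrm{SF}}^2)$. Writing the large-scale fading in dB as $-(\alpha+\beta\log_{10}d)+S$ and converting to the linear coefficient gives the defining relation $20\log_{10}g=-(\alpha+\beta\log_{10}d)+S$; it is precisely the factor $20$ here that will produce the exponent $\gamma=40/\beta+1$. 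With $S$ fixed, $g$ is a strictly monotone function of $d$, which is what makes a change of variables clean.

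Next I would condition on the shadowing, $S=s$, and change variables from $d$ to $g$. Solving the defining relation gives $d=10^{(s-\alpha-20\log_{10}g)/\beta}$, with Jacobian $|dd/dg|=20d/(\beta g)$. Combining this with $p_D(d)=2d/R^2$ yields $p_{G|S}(g\mid s)=\dfrac{40\,d^2}{R^2\beta\,g}$, and substituting $d^2=10^{-2\alpha/\beta}\,10^{2s/\beta}\,g^{-40/\beta}$ isolates the power-law factor $g^{-(40/\beta+1)}=g^{-\gamma}$ together with an exponential factor $10^{2s/\beta}$ in the shadowing. The crucial structural point at this stage is the domain: the physical constraint $0\le d\le R$ translates, through monotonicity, into the one-sided bound $s\le s_{\max}(g)\triangleq\alpha+\beta\log_{10}R+20\log_{10}g$, so the conditional density is supported only on a half-line in $s$.

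I would then marginalize, $p_G(g)=\int_{-\infty}^{s_{\max}(g)}p_{G|S}(g\mid s)\,p_S(s)\,ds$. Completing the square in the exponent $\tfrac{2\ln 10}{\beta}s-\tfrac{s^2}{2\sigma_{\mathrm{SF}}^2}$ factors out the constant $\exp\!\big(2(\ln 10)^2\sigma_{\mathrm{SF}}^2/\beta^2\big)$ and recenters the Gaussian at $\sigma_{\mathrm{SF}}^2 k$ with $k=2\ln 10/\beta$. The substitution $u=(s-\sigma_{\mathrm{SF}}^2 k)/(\sqrt 2\,\sigma_{\mathrm{SF}})$ followed by the reflection $u\mapsto -u$ turns the truncated Gaussian into exactly $\int_{b\ln g+c}^{\infty}e^{-u^2}\,du=Q(g)$; matching the coefficient of $\ln g$ and the constant in the resulting lower limit against this substitution is what pins down $b=-10\sqrt2/((\ln 10)\sigma_{\mathrm{SF}})$ and $c$. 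Finally, collecting the prefactors $40/(R^2\beta)$, $10^{-2\alpha/\beta}=\exp(-2\alpha\ln 10/\beta)$, $1/(\sqrt{2\pi}\,\sigma_{\mathrm{SF}})$, and the $\sqrt2\,\sigma_{\mathrm{SF}}$ from the $u$-substitution (note $\tfrac{\sqrt2}{\sqrt{2\pi}}=\tfrac{1}{\sqrt\pi}$) reproduces the claimed constant $a$, giving $p_G(g)=a\,g^{-\gamma}Q(g)$.

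The step I expect to be most delicate is the treatment of the finite support $d\le R$. It is exactly this truncation that converts what would otherwise be a complete Gaussian integral (yielding a pure power law $g^{-\gamma}$) into an incomplete one, and hence produces the $Q$-function; dropping the constraint would give the wrong density. (Conditioning on $d$ first instead of $S$ also works, since $g\mid d$ is log-normal, but then the distance constraint no longer maps to a single integration limit, which is why I prefer conditioning on $S$.) The remaining effort is careful bookkeeping---tracking the Jacobian, reconciling the $\log_{10}$-versus-$\ln$ conversions, and correctly reflecting the integration limit so that it lands on $b\ln g+c$---but no genuinely hard analysis is involved once the change-of-variables-then-marginalize scheme is in place.
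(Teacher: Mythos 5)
Your proposal is correct and follows essentially the same route as the paper's Appendix~A: decompose $g$ into the distance-induced path-loss and the independent shadowing, change variables using $p_D(d)=2d/R^2$, and integrate out the shadowing, with the finite-radius constraint $d\le R$ truncating the Gaussian integral to produce $Q(g)$ (and your constants $a$, $b$, $c$, including the $\tfrac{\sqrt{2}}{\sqrt{2\pi}}=\tfrac{1}{\sqrt{\pi}}$ bookkeeping, all check out). The only cosmetic difference is that you keep the shadowing in its dB parameterization $s$ and marginalize $\int_{-\infty}^{s_{\max}(g)}p_{G|S}(g\mid s)p_S(s)\,ds$, whereas the paper converts shadowing and path-loss to linear scale ($y$ and $z$) and writes the same computation as the product-density integral $p_G(g)=\int|y|^{-1}p_Y(y)p_Z\left(y^{-1}g\right)dy$ --- a change of variables away from yours.
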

\begin{proof}
See Appendix~\ref{A:probG}.
\end{proof}

%%%%%%%%%%%%%%%%%%%%%%%%%%%%%%%%%%%
%		Proposition 2
%%%%%%%%%%%%%%%%%%%%%%%%%%%%%%%%%%%
\begin{proposition}\label{P.Prop2}
Denote $h_n$ as the channel coefficient which contains both the large-scale fading $g_n$ and Rayleigh fading. If only $p_{G}(g_n)$ is known at the BS, the pdf of $h_n$ is given by
%12
\begin{equation}\label{eq.distr.of.channel}
p_{H}(h_n) = \int_{0}^{\infty}\frac{a}{\pi}g_n^{-\gamma-2}Q(g_n)\exp \left(\frac{-|h_n|^2}{g_n^2}\right)dg_n.
\end{equation}
If $g_n$ is known at the BS, the pdf of $h_n$ given $g_n$ is
%13
\begin{equation}\label{eq.distr.of.channel.lsf}
p_{H|G}(h_n|g_n) = \frac{1}{\pi g_n^2}\exp\left(\frac{-|h_n|^2}{g_n^2}\right).
\end{equation}
\end{proposition}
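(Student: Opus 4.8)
The plan is to establish the conditional density \eqref{eq.distr.of.channel.lsf} directly from the channel model, and then to obtain the marginal density \eqref{eq.distr.of.channel} by integrating the large-scale fading out using Proposition~\ref{P.Prop1}. Since the second statement is just a marginalization of the first, I would prove \eqref{eq.distr.of.channel.lsf} first and reuse it.

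First I would treat the case in which $g_n$ is known at the BS. From the system model the channel coefficient factors as $h_n = g_n v_n$, where $g_n>0$ is the large-scale fading amplitude (path-loss and shadowing) and $v_n$ is the Rayleigh fading, assumed to be a zero-mean, unit-variance, circularly-symmetric complex Gaussian. Conditioned on $g_n$, scaling a unit-variance complex Gaussian by the deterministic constant $g_n$ yields a zero-mean complex Gaussian with variance $\mathbb{E}\big[|h_n|^2\,\big|\,g_n\big]=g_n^2\,\mathbb{E}[|v_n|^2]=g_n^2$. Writing out the standard circularly-symmetric complex Gaussian density with variance $g_n^2$ then gives exactly \eqref{eq.distr.of.channel.lsf}.

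Next, for the case in which only the statistics $p_G$ is available, I would marginalize over the large-scale fading by the law of total probability,
\[
p_{H}(h_n)=\int_{0}^{\infty}p_{H|G}(h_n|g_n)\,p_{G}(g_n)\,dg_n,
\]
and substitute \eqref{eq.distr.of.channel.lsf} for the conditional density together with \eqref{eq.distr.of.lsf} from Proposition~\ref{P.Prop1} for $p_G$. The integrand then becomes the product of $\frac{1}{\pi g_n^{2}}\exp\!\big(-|h_n|^2/g_n^2\big)$ and $a\,g_n^{-\gamma}Q(g_n)$; collecting the powers of $g_n$ merges $g_n^{-2}$ with $g_n^{-\gamma}$ into $g_n^{-\gamma-2}$, which is precisely the integrand of \eqref{eq.distr.of.channel}. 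The whole argument is essentially mechanical, and the only point that requires care---more a modeling identification than a genuine obstacle---is to read off from the system model that $g_n$ enters as the channel amplitude (standard deviation) rather than as a power, so that the conditional variance is $g_n^2$ and the exponent in the Gaussian density is $-|h_n|^2/g_n^2$. Once this is pinned down, combining the conditional density with Proposition~\ref{P.Prop1} through the single marginalization integral completes the proof.
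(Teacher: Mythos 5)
Your proof is correct and follows essentially the same route as the paper: the paper's Appendix~\ref{A:probH} likewise writes $h=gx$ and marginalizes $p_G(g)\,p_{X_R,X_I}(h_R/g,h_I/g)\,g^{-2}$ over $g$, which is exactly your $\int_0^\infty p_{H|G}(h_n|g_n)\,p_G(g_n)\,dg_n$ with $p_{H|G}$ identified as $\mathcal{CN}(0,g_n^2)$. The only cosmetic difference is that the paper verifies the conditional density via an explicit change of variables with Jacobian $g^{-2}$, whereas you invoke the standard scaling property of a circularly-symmetric complex Gaussian---these are the same computation.
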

\begin{proof}
See Appendix~\ref{A:probH}.
\end{proof}

Note that for the first scenario, the channel distribution
\eqref{eq.distr.of.channel} only depends on a few parameters such as the
path-loss exponent in the path-loss model and the standard deviation in the
shadowing model, which are assumed to be known and can be estimated in
practice. For the second scenario, the channel distribution
\eqref{eq.distr.of.channel.lsf} is just a Rayleigh fading model parameterized
by the large-scale fading. The large-scale fading information can be
obtained by tracking the estimated
channel over a reasonable period. This second scenario is applicable to the
case where the users are mostly stationary, so the large-scale fading changes
only slowly over time. %Overall for both scenarios, the statistical channel information can be obtained easily and accurately with reasonable small overhead. Therefore, in the rest of the paper, we always assume that for both scenarios the required statistical information is perfectly known without any deterioration.
It is worth noting that although this paper restricts attention to the Rayleigh
fading model, the approach developed here is equally applicable for Rician or
any other statistical channel model.

%It is also worthy to note that the channel distributions \eqref{eq.distr.of.channel} and \eqref{eq.distr.of.channel.lsf} also depends on some modeling assumptions such as all users are uniformly distributed in a circular area, and the small-scale fading is Rayleigh, which may not be true in some applications. However, as long as the statistical channel can be explicitly modeled, the approach developed in this paper is still useful for both algorithm design and performance analysis.

In the following, we design the MMSE denoisers for the AMP algorithm by exploiting $p_{H}(h_n)$ and $p_{H|G}(h_n|g_n)$.

%%%%%%%%%%%%%%%%%%%%%%%%%%%%%%%%%%%
%		IV.A.1 With Statistical Knowledge of Large-Scale Fading Only
%%%%%%%%%%%%%%%%%%%%%%%%%%%%%%%%%%%
\subsubsection{With Statistical Knowledge of Large-Scale Fading Only}
Since $g_n$ is unknown and only the distribution $p_{G}(g_n)$ is available at the BS, the denoiser $\eta_t(\cdot,g_n)$ reduces to $\eta_t(\cdot)$, which indicates that the denoiser for each entry of the matched filtered output is the same. By using \eqref{eq.distr.of.channel}, the pdf of the entries of $\mathbf{x}$ can be expressed as
%14
\begin{equation}\label{eq.distr.of.x}
p_X(x_n)=(1-\lambda)\delta_{0}+\int_{0}^{\infty}\frac{\exp (-|x_n|^2g_n^{-2})}{\pi g_{n}^{\gamma+2}/(a\lambda Q(g_n))} dg_n.
\end{equation}
The MMSE denoiser is given by the conditional expectation, i.e.,  $\eta_t(\tilde{x}^{t}_n)=\mathbb{E}[X|\tilde{X}^{t}=\tilde{x}^{t}_n]$ where random variable $\tilde{X}^{t}=X+\tau_tV$, and $\tilde{x}^{t}_n$ is a realization of $\tilde{X}^{t}$. Note that the denoiser $\eta_t(\tilde{x}^{t}_n)$ depends on $t$ through $\tau_t$. The expression of the conditional expectation is given in the following proposition.

%%%%%%%%%%%%%%%%%%%%%%%%%%%%%%%%%%%
%		Proposition 3
%%%%%%%%%%%%%%%%%%%%%%%%%%%%%%%%%%%
\begin{proposition}\label{P.Prop3}
Based on the pdf of $x_n$ in \eqref{eq.distr.of.x} and the signal-plus-noise model $\tilde{X}^{t}=X+\tau_tV$ at each iteration in AMP, the conditional expectation of $X$ given $\tilde{X}^{t}=\tilde{x}^{t}_n$ is given by
%15
\begin{align}\label{eq.mmseEst}
\mathbb{E}[X|\tilde{X}^{t}=\tilde{x}^{t}_n]=\tilde{x}^{t}_n\frac{\nu_{1}(|\tilde{x}^{t}_n|^2)}{\xi_{1}(|\tilde{x}^{t}_n|^2)},
\end{align}
where functions $\nu_i(s)$ and $\xi_i(s)$ are defined as
%16
\begin{align}
\nu_{i}(s)\triangleq&\int_{0}^{\infty}\frac{g_n^{2-\gamma}Q(g_n)}{(g_n^2+\tau_{t}^2)^{i+1}}\exp\left(\frac{-s}{g_n^2+\tau_{t}^2}\right)dg_n, \label{A.func.nu}\\
\xi_{i}(s)\triangleq& \frac{1-\lambda}{\lambda
a\tau_t^{2i}}\exp \left(\frac{-s}{\tau_t^{2}}\right)\nonumber\\
&+\int_{0}^{\infty}\frac{g_n^{-\gamma}Q(g_n)}{(g_n^2+\tau_{t}^2)^i}\exp\left({\frac{-s}{g_n^2+\tau_{t}^2}}\right)dg_n.\label{eq.xi}
\end{align}
\end{proposition}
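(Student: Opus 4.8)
The plan is to evaluate the posterior mean directly from Bayes' rule. Writing $\tilde x\triangleq\tilde x^t_n$ and $\tau\triangleq\tau_t$, the signal-plus-noise model $\tilde X^t=X+\tau_t V$ with $V\sim\mathcal{CN}(0,1)$ gives the likelihood $p_{\tilde X^t|X}(\tilde x|x)=\frac{1}{\pi\tau^2}\exp(-|\tilde x-x|^2/\tau^2)$, so that
\[
\mathbb{E}[X\mid \tilde X^t=\tilde x]=\frac{\int_{\mathbb{C}} x\,p_{\tilde X^t|X}(\tilde x|x)\,p_X(x)\,dx}{\int_{\mathbb{C}} p_{\tilde X^t|X}(\tilde x|x)\,p_X(x)\,dx}.
\]
First I would substitute $p_X$ from \eqref{eq.distr.of.x}, noting via Proposition~\ref{P.Prop1} that its continuous part is $\lambda\int_0^\infty p_G(g)\,\frac{1}{\pi g^2}e^{-|x|^2/g^2}\,dg$ with $p_G(g)=a g^{-\gamma}Q(g)$, and split each of the two integrals into a point-mass contribution at $x=0$ and a continuous contribution.

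The core computation is two elementary complex-Gaussian integrals. Conditioned on $X\sim\mathcal{CN}(0,g^2)$ the pair $(X,\tilde X^t)$ is jointly complex Gaussian, so completing the square in $x$ yields the convolution identity $\int_{\mathbb{C}}\frac{1}{\pi\tau^2}e^{-|\tilde x-x|^2/\tau^2}\frac{1}{\pi g^2}e^{-|x|^2/g^2}\,dx=\frac{1}{\pi(g^2+\tau^2)}e^{-|\tilde x|^2/(g^2+\tau^2)}$, and inserting the extra factor $x$ multiplies this by the complex LMMSE gain $g^2/(g^2+\tau^2)$ times $\tilde x$. These two facts collapse the inner $x$-integral and leave a single integral over the large-scale fading variable $g$.

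Finally I would assemble the pieces and match notation. The point mass at $x=0$ contributes $(1-\lambda)\frac{1}{\pi\tau^2}e^{-|\tilde x|^2/\tau^2}$ to the denominator and nothing to the numerator. After interchanging the $x$- and $g$-integrals (justified by Tonelli since the integrands are nonnegative), the continuous parts yield, up to the common prefactor $a\lambda/\pi$, precisely $\xi_1(|\tilde x|^2)$ in the denominator and $\tilde x\,\nu_1(|\tilde x|^2)$ in the numerator: the point-mass term reproduces the $\frac{1-\lambda}{\lambda a\tau^2}e^{-s/\tau^2}$ summand of $\xi_1$ in \eqref{eq.xi}, the continuous denominator term reproduces its integral summand, and the LMMSE gain $g^2/(g^2+\tau^2)$ raises the exponent in the numerator's integrand from $(g^2+\tau^2)^1$ to $(g^2+\tau^2)^2$, producing $\nu_1$ in \eqref{A.func.nu} at $i=1$. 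Dividing cancels $a\lambda/\pi$ and leaves $\tilde x\,\nu_1/\xi_1$, which is \eqref{eq.mmseEst}. I expect the only genuine work to be these two Gaussian integrals and the careful tracking of the $a,\lambda,\pi,\tau$ prefactors so that the point-mass term lands exactly as the first summand of $\xi_1$; conceptually the argument is just scalar complex-Gaussian MMSE conditioned on $g$, marginalized over $p_G$.
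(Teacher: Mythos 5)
Your proposal is correct and follows essentially the same route as the paper's Appendix~\ref{A:mmse}: Bayes' rule for the posterior mean, the point mass contributing only to the denominator, completing the square (your Gaussian convolution/LMMSE-gain identity is exactly the paper's $\delta=g^2/(g^2+\tau^2)$ computation), and then matching the marginal $p_{\tilde{X}}(\tilde{x})$ from Appendix~\ref{A:probHN} against $\xi_1$ so the common factor $a\lambda/\pi$ cancels. The prefactor bookkeeping you describe lands exactly where the paper's does, so no gap.
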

\begin{proof}
See Appendix~\ref{A:mmse}.
\end{proof}

Note that to implement $\eta_{t}(\tilde{x}^{t}_n)$ at each iteration, the value of $\tau_t$ is needed. In practice, an empirical estimate
$\hat{\tau}_t=\frac{1}{\sqrt{L}}\|\mathbf{z}^t\|_2$, where $\|\cdot\|_2$ denotes the $\ell_2$ norm,
can be used \cite{Montanari2012}.
Although
$\eta_{t}(\tilde{x}^{t}_n)$ is in a complicated form, we
note that it can be pre-computed and stored as table lookup, so it does not add to run-time
complexity. To gain some intuition, we illustrate the shape of
the MMSE denoiser as compared to the widely used soft thresholding
denoiser in Fig.~\ref{fig.denoiser}. We observe that the MMSE
denoiser plays a role similar to the soft thresholding denoiser,
shrinking the input towards the origin, especially when the input is
small, thereby promoting sparsity.

%%%%%%%%%%%%%%%%%%%%%%%%%%%%%%%%%%%
%		Fig 1
%%%%%%%%%%%%%%%%%%%%%%%%%%%%%%%%%%%
\begin{figure}
\centerline{\epsfig{figure=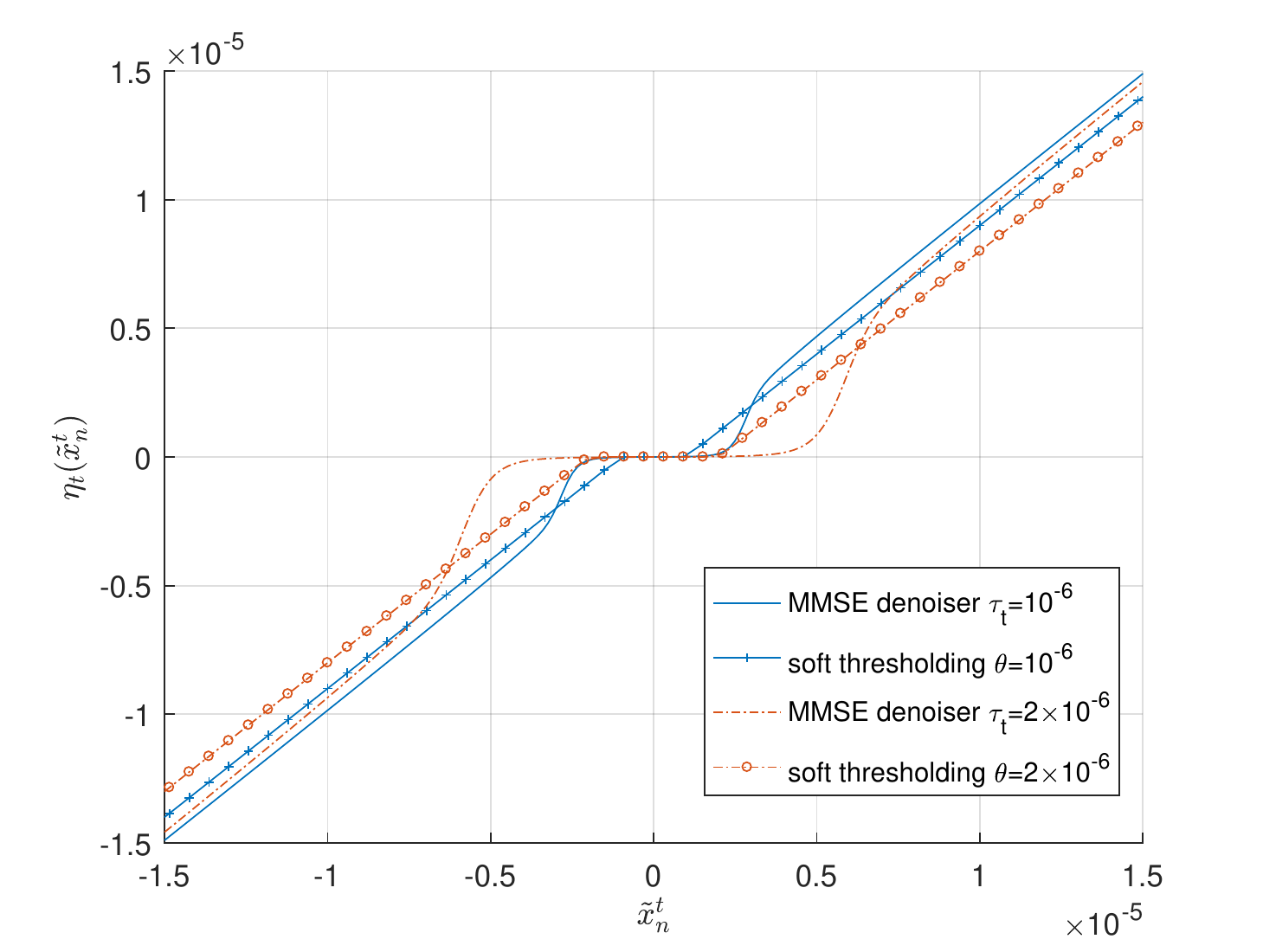,width=9cm}}
\caption{MMSE denoiser vs. soft thresholding denoiser \cite{Maleki2013} $\eta_t^{\mathrm{soft}}(\tilde{x}^{t}_n)\triangleq(\tilde{x}^{t}_n-\frac{\theta\tilde{x}^{t}_n}{|\tilde{x}^{t}_n|})\mathbb{I}(|\tilde{x}^{t}_n|>\theta)$, where $\mathbb{I}(\cdot)$ is the indicator function.}
\label{fig.denoiser}
\end{figure}

%%%%%%%%%%%%%%%%%%%%%%%%%%%%%%%%%%%
%		IV.A.2 With Exact Knowledge of Large-Scale Fading
%%%%%%%%%%%%%%%%%%%%%%%%%%%%%%%%%%%
\subsubsection{With Exact Knowledge of Large-Scale Fading}
When $g_n$ is available at the BS, we substitute \eqref{eq.distr.of.channel.lsf} into \eqref{eq.distr}, and the pdf of the entries of $\mathbf{x}$ is simplified to Bernoulli-Gaussian as
%17
\begin{equation}\label{eq.distr.of.x.lsf}
p_{X|G}(x_n|g_n)=(1-\lambda)\delta_{0}+ \frac{\lambda}{\pi g_n^2}\exp\left(\frac{-|x_n|^2}{g_n^2}\right).
\end{equation}
The MMSE denoiser is given by $\eta_t(\tilde{x}^{t}_n,g_n)=\mathbb{E}[X|\tilde{X}^{t}=\tilde{x}^{t}_n,G=g_n]$, where the conditional expectation is \cite{Schniter2010}
%18
\begin{equation}\label{eq.mmseEst.lsf}
\mathbb{E}[X|\tilde{X}^{t}=\tilde{x}^{t}_n,G=g_n]=\frac{g_n^2(g_n^2+\tau_t^2)^{-1}\tilde{x}^{t}_n}{1+\frac{1-\lambda}{\lambda}\frac{g_n^2+\tau_t^2}{\tau_t^2}\exp\left(-\Delta|\tilde{x}^{t}_n|^2\right)},
\end{equation}
where
\begin{equation}
\Delta\triangleq \tau_t^{-2}-(g_n^2+\tau_t^2)^{-1}.
\label{eq:Delta}
\end{equation}
Compared with the MMSE denoiser in \eqref{eq.mmseEst}, we add $g_n$ to the left hand side of \eqref{eq.mmseEst.lsf} to emphasize the dependency on prior information $g_n$.

%%%%%%%%%%%%%%%%%%%%%%%%%%%%%%%%%%%
%		IV.B  User Activity Detection Strategy and Detection Error Probability Analysis
%%%%%%%%%%%%%%%%%%%%%%%%%%%%%%%%%%%
\subsection{User Activity Detection}
After the AMP algorithm has converged, we employ the likelihood ratio
test to perform user activity detection. For the hypothesis testing problem
%19
\begin{align}
\left\{
\begin{aligned}
H_0 &:  X=0, \,\,\mathrm{inactive\,\, user},\\
H_1 &: X\neq0, \,\,\mathrm{active\,\, user};
\end{aligned}
\right.
\end{align}
the optimal decision rule is given by
%20
\begin{align}\label{eq.decision}
\operatorname{LLR}=\log\left(\frac{p_{\tilde{X}^{t}|X}(\tilde{x}^{t}_n|X\neq0)}{p_{\tilde{X}^{t}|X}(\tilde{x}^{t}_n|X=0)}\right)\mathop{\lessgtr}\limits_{H_1}^{H_0} l_n,
\end{align}
where $\operatorname{LLR}$ denotes the log-likelihood ratio, and $l_n$ denotes the
decision threshold typically determined by a cost function. The performance metrics of interest are the probability of missed detection $P_M$, defined as the probability that a device is active but the detector declare the null hypothesis $H_0$, and the probability of false alarm, $P_F$, defined as the probability that a device is inactive, but the detector declare it to be active. We consider the threshold for two cases depending on whether the large-scale fading coefficient $g_n$ is available at the BS or not.

%%%%%%%%%%%%%%%%%%%%%%%%%%%%%%%%%%%
%		IV.B.1 With Statistical Knowledge of Large-Scale Fading Only
%%%%%%%%%%%%%%%%%%%%%%%%%%%%%%%%%%%
\subsubsection{With Statistical Knowledge of Large-Scale Fading Only}
We first derive the likelihood probabilities
in the following.
%of $\tilde{X}^{t}$ given $X=0$ or $X\neq 0$.%
%%%%%%%%%%%%%%%%%%%%%%%%%%%%%%%%%%%
%		Proposition 4
%%%%%%%%%%%%%%%%%%%%%%%%%%%%%%%%%%%
\begin{proposition}\label{P.Prop4}
Suppose that $X$ follows \eqref{eq.distr.of.x}, and $V$ follows complex
Gaussian distribution with zero mean and unit variance, the likelihood
of $\tilde{X}^{t}= X+\tau_{t}V$ given $X=0$ or $X\neq 0$ is given by
%21 and 22
\begin{align}
p_{\tilde{X}^{t}|X}(\tilde{x}^{t}_n|X=0) &= \frac{1}{\pi\tau_t^2}\exp\left(\frac{-|\tilde{x}^{t}_n|^2}{\tau_{t}^2}\right), \label{eq.likelihood1}\\
p_{\tilde{X}^{t}|X}(\tilde{x}^{t}_n|X\neq 0) &= \int_{0}^{\infty}\frac{ag_n^{-\gamma}Q(g_n)}{\pi(g_n^2+\tau_t^2)}\exp\left(\frac{-|\tilde{x}^{t}_n|^2}{g_n^2+\tau_{t}^2}\right)dg_n.\label{eq.likelihood2}
\end{align}
\end{proposition}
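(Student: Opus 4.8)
The plan is to treat the two hypotheses separately, exploiting the fact that under each hypothesis $\tilde{X}^{t}=X+\tau_t V$ is a (mixture of) circularly symmetric complex Gaussian(s). Under $H_0$ the argument is immediate: $X=0$ forces $\tilde{X}^{t}=\tau_t V$, and since $V\sim\mathcal{CN}(0,1)$ the scaled variable $\tau_t V$ is $\mathcal{CN}(0,\tau_t^2)$. Writing down the density of a zero-mean circularly symmetric complex Gaussian with variance $\tau_t^2$ gives \eqref{eq.likelihood1} directly.

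The substantive case is $H_1$. Here I would not attack the density \eqref{eq.distr.of.x} head on, but instead condition on the large-scale fading $G=g_n$ and use the tower property. By Proposition~\ref{P.Prop2}, conditioned on $G=g_n$ and on the user being active, $X$ is $\mathcal{CN}(0,g_n^2)$, with density \eqref{eq.distr.of.channel.lsf}. Since $V\sim\mathcal{CN}(0,1)$ is independent of $X$, the sum $\tilde{X}^{t}=X+\tau_t V$ is a sum of two independent circularly symmetric complex Gaussians and hence is itself $\mathcal{CN}(0,g_n^2+\tau_t^2)$, with conditional density $\tfrac{1}{\pi(g_n^2+\tau_t^2)}\exp\!\left(-|\tilde{x}^{t}_n|^2/(g_n^2+\tau_t^2)\right)$. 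The only structural fact invoked here is the standard one that the variances of independent complex Gaussians add. To remove the conditioning I would then marginalize over $G$ using its law from Proposition~\ref{P.Prop1}, $p_G(g_n)=a g_n^{-\gamma}Q(g_n)$ of \eqref{eq.distr.of.lsf}, namely
\[
p_{\tilde{X}^{t}|X}(\tilde{x}^{t}_n|X\neq 0)=\int_0^\infty p_G(g_n)\,\frac{1}{\pi(g_n^2+\tau_t^2)}\exp\!\left(\frac{-|\tilde{x}^{t}_n|^2}{g_n^2+\tau_t^2}\right)dg_n,
\]
which upon substituting $p_G$ is exactly \eqref{eq.likelihood2}.

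An equivalent route is to convolve the active-user channel density $p_H$ of \eqref{eq.distr.of.channel} with the noise density directly over $\mathbb{C}$; after interchanging the $g_n$-integral with the two-dimensional Gaussian convolution, the inner convolution collapses to the same $\mathcal{CN}(0,g_n^2+\tau_t^2)$ kernel and recovers the identical expression. I expect no genuine obstacle here, since the closed forms of $p_G$ and $p_{H|G}$ have already been established in Propositions~\ref{P.Prop1} and \ref{P.Prop2}; the only point requiring a word of care is the legitimacy of interchanging the order of integration (equivalently, of writing the marginal as the integral of the conditional against $p_G$). This is routine because every integrand is nonnegative, so Tonelli's theorem applies and the manipulation is unconditionally valid.
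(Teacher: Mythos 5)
Your proposal is correct and in substance identical to the paper's proof: the paper likewise obtains \eqref{eq.likelihood1} from $\tilde{X}^{t}=\tau_t V$ under $X=0$, and derives \eqref{eq.likelihood2} by convolving the marginal channel density $p_H$ of \eqref{eq.distr.of.channel} with the Gaussian noise density and then substituting the $g_n$-integral form of $p_H$ --- precisely the ``equivalent route'' you describe at the end. Your primary presentation performs the same computation in the reverse order (conditioning on $G$ first, invoking the additivity of variances of independent circularly symmetric complex Gaussians, then marginalizing against $p_G$ from \eqref{eq.distr.of.lsf}), and your appeal to Tonelli makes explicit the interchange of integration that the paper carries out silently in its step $(b)$.
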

\begin{proof}
See Appendix \ref{A:probHN}.
\end{proof}
Based on \eqref{eq.likelihood1} and \eqref{eq.likelihood2}, the log-likelihood ratio is given as
%23
\begin{align}
\operatorname{LLR} = \log\int_{0}^{\infty}\frac{a\tau_t^2 g_n^{-\gamma}}{g_n^2+\tau_t^2}Q(g_n)\exp(|\tilde{x}^{t}_n|^2\Delta)dg_n,
\end{align}
where $\Delta$ is defined in \eqref{eq:Delta}. By observing that $\operatorname{LLR}$ is monotonic in $|\tilde{x}^{t}_n|$, we can simplify the decision rule in \eqref{eq.decision} as
$|\tilde{x}^{t}_n| \mathop{\lessgtr}\limits_{H_1}^{H_0} l_n$, indicating that user activity detection can be performed based on the magnitude of
$\tilde{x}^{t}_n$ only.

Based on the likelihood probabilities and the threshold $l_n$, the probabilities of false alarm and missed
detection can be characterized as follows
%24 and 25
\begin{align}
P_F &= \int_{|\tilde{x}^{t}_n|>l_n}p_{\tilde{X}^{t}|X}(\tilde{x}^{t}_n|X=0)d\tilde{x}^{t}_n=\exp\left(\frac{-l_n^2}{\tau_t^2}\right),\label{eq.pf}\\
P_M &= \int_{|\tilde{x}^{t}_n|<l_n}p_{\tilde{X}^{t}|X}(\tilde{x}^{t}_n|X\neq0) d\tilde{x}^{t}_n,\label{eq.pm}
\end{align}
where \eqref{eq.pf} is simplified by using \eqref{eq.likelihood1}. Note that since only statistical information of the large-scale fading is known at the BS, $P_F$ and $P_M$ are the averaged false alarm and missed detection probabilities which do not depend on $g_n$ .
%Due to the simplicity of $p_{\tilde{X}^{t}|X}(\tilde{x}_n^{t}|x_n=0)$ in , we can further obtain an expression for the probability of false alarm, $P_F$, as
%26
%\begin{align}
%P_F = \int_{|\tilde{x}^{t}_n|>l_n} \frac{1}{\pi\tau_t^2}\exp\left(-\frac{|\tilde{x}^{t}_n|^2}{\tau_{t}^2}\right) d\tilde{x}^{t}_n=\exp\left(-\frac{l_n^2}{\tau_t^2}\right).
%\end{align}

%%%%%%%%%%%%%%%%%%%%%%%%%%%%%%%%%%%
%		IV.B.2 With Exact Knowledge of Large-Scale Fading
%%%%%%%%%%%%%%%%%%%%%%%%%%%%%%%%%%%
\subsubsection{With Exact Knowledge of Large-Scale Fading}
When $g_n$ is known at the BS, the distribution of $X$ is simplified to Bernoulli-Gaussian. The likelihood probabilities become
%27 and 28
\begin{align}
p_{\tilde{X}^{t}|X,G}(\tilde{x}_n^{t}|X=0,G=g_n) &= \frac{\exp\left(-|\tilde{x}_n|^2\tau_{t}^{-2}\right)}{\pi\tau_t^2}, \label{eq.likelihood.kls1}\\
p_{\tilde{X}^{t}|X,G}(\tilde{x}_n^{t}|X\neq0,G=g_n) &= \frac{\exp\left(-|\tilde{x}_n|^2(g_n^2+\tau_{t}^2)^{-1}\right)}{\pi(\tau_t^2+g_n^2)}.\label{eq.likelihood.kls2}
\end{align}
%where the notation $p_{\tilde{X}^{t}|X,G}(\tilde{x}^{t}_n|x=0,g_n)$ and $p_{\tilde{X}^{t}|X,G}(\tilde{x}^{t}_n|x\neq0,g_n)$ indicate the prior known large-scale fading $g_n$.
The log-likelihood ratio is then given as
%29
\begin{align}
\operatorname{LLR}(g_n) = \log\left(\frac{\tau_t^2}{g_n^2+\tau_t^2}\exp(|\tilde{x}^{t}_n|^2\Delta)\right),
\end{align}
where the notation $\operatorname{LLR}(g_n)$ emphasizes the dependency on the prior information $g_n$.
Similar to the case where only the statistics of $g_n$ is known, $\operatorname{LLR}$ here is also monotonic in $|\tilde{x}^{t}_n|$, which means that the user activity detection can be performed based on $|\tilde{x}^{t}_n|$ only.

We also use $l_n$ to denote the threshold in the detection. Based on \eqref{eq.likelihood.kls1} and \eqref{eq.likelihood.kls2}, the probabilities of false alarm and missed detection probability are given as follows
%30 and 31
\begin{align}
P_F(g_n) = & \int_{|\tilde{x}^{t}_n|>l_n} p_{\tilde{X}^{t}|X,G}(\tilde{x}^{t}_n|X=0,G=g_n)d\tilde{x}^{t}_n\nonumber\\
=&\exp\left(-l_n^2\tau_t^{-2}\right),\label{eq.pf.lsf}\\
P_M(g_n) = & \int_{|\tilde{x}^{t}_n|<l_n}p_{\tilde{X}^{t}|X,G}(\tilde{x}^{t}_n|X\neq 0,G=g_n)d\tilde{x}^{t}_n\nonumber\\
=&\,1-\exp\left(-l_n^2(g_n^2+\tau_t^2)^{-1}\right), \label{eq.pm.lsf}
\end{align}
where we use the notation $P_F(g_n)$ and $P_M(g_n)$ to indicate the prior known $g_n$.
%Nte that $P_F(g_n)$ actually does not depend on $g_n$. This is because that $\tau_t$ in \eqref{eq.pf.lsf} depends only on the distribution of $g_n$ via the state evolution and does not depend on the realization of $g_n$. Also
Note that the false alarm probability in \eqref{eq.pf.lsf} has the form as that in \eqref{eq.pf} even through the value of $\tau_t$ may be different due to different denoisers.

A natural question then arises: how to design the threshold $l_n$ as a function of the known large-scale fading $g_n$? In theory, we can treat each user separately, i.e., set the thresholding value of each user separately according to its own cost function. For example, if a specific target false alarm probability is needed for user $n$, we can design its thresholding parameter, $l_n$, using the expression in \eqref{eq.pf.lsf}. In order to bring fairness, this paper considers a common target false alarm probability for all users. Under this condition, all users share the same thresholding parameter, i.e., $l_n = l, \forall n$, since the expression of $P_F(g_n)$ in \eqref{eq.pf.lsf} does not depend on $g_n$. In such a case, different users may have different probabilities of missed detection depending on their large-scale fading $g_n$. To measure the performance of the detector for the entire system, we employ the average probability of missed detection as
%32
\begin{align}
\label{common_l_PM}
\overline{P_M} &= \frac{1}{N}\sum_{n=1}^N \left(1-\exp\left(\frac{-l^2}{\tau_t^{2}+g_n^2}\right) \right) \nonumber\\
&\rightarrow
\int p_G(g) \left(1-\exp\left(\frac{-l^2}{\tau_t^{2}+g^2}\right) \right) dg,\, \mathrm{as} \,\, N \rightarrow \infty,
\end{align}
where the distribution $P_G(g)$ is given in \eqref{eq.distr.of.lsf}. When $N$ is large, once $\tau_t$ is given, the averaged performance only depends on the statistics of the large-scale fading $g_n$.
%Equation \eqref{common_l_PM} suggests that for the proposed thresholding strategy for the scenario where $N$ is large, knowing the statistics of $g_n$ is sufficient. We sets a common threshold $l$ for all user in the system so that it minimizes \eqref{common_l_PM}, regardless of whether the exact large-scale fading is known to the BS or only its statistcs.

%%%%%%%%%%%%%%%%%%%%%%%%%%%%%%%%%%%
%		IV.C State Evolution Analysis
%%%%%%%%%%%%%%%%%%%%%%%%%%%%%%%%%%%
\subsection{State Evolution Analysis}
We have characterized the probabilities of false alarm $P_F$ and missed detection $P_M$ for user activity detection in \eqref{eq.pf}, \eqref{eq.pm} and \eqref{eq.pf.lsf}, \eqref{eq.pm.lsf}, but the parameter $\tau_t$ that represents the standard deviation of the residual noise still needs to be determined. As AMP proceeds, $\tau_t$ converges to
$\tau_{\infty}$.
To compute $\tau_t$, we use the state evolution
\eqref{eq.stateevo}, where $\mathbb{E}\big[|\eta_t(\tilde{X}^{t},G)-X|^2\big]$ in \eqref{eq.stateevo} can be interpreted as the MSE of the denoiser. Note that for the MMSE denoiser, MSE can also be expressed as $\mathbb{E}\big[|\eta_t(\tilde{X}^{t},G)-X|^2\big]=\mathbb{E}\big[\operatorname{Var}(X|\tilde{X}^{t},G)\big]$, where $\operatorname{Var}(X|\tilde{X}^{t},G)$ is the conditional variance of $X$ given $\tilde{X}^{t}$ and $G$, and the expectation is taken over both $\tilde{X}^{t}$ and $G$. (Note that we drop $G$ if the large-scale fading coefficient is unknown.) By using conditional variance, we characterize the MSE of the designed denoisers in the following propositions.

%%%%%%%%%%%%%%%%%%%%%%%%%%%%%%%%%%%
%		Proposition 5
%%%%%%%%%%%%%%%%%%%%%%%%%%%%%%%%%%%
\begin{proposition}\label{P.Prop5}
The MSE of the denoiser for the case where only the statistics of $g_n$ is known to the BS is given by
%33
\begin{align}\label{eq.mseofmmse}
\mathrm{MSE}(\tau_t)=&\int_{0}^{\infty}\frac{aQ(g_n)}{g_n^{\gamma}}\cdot\frac{\lambda g_n^2\tau_t^2}{g_n^2+\tau_t^2}dg_n\nonumber\\
&+\int_{0}^{\infty}a\lambda s\left(\mu_{1}(s)-\nu^{2}_{1}(s)\xi_{1}^{-1}(s)\right)ds,
\end{align}
where functions $\nu_{i}(s)$ and $\xi_{i}(s)$ are defined in \eqref{A.func.nu} and \eqref{eq.xi}, respectively, and function $\mu_{i}(s)$ is defined as
\begin{align}
\mu_{i}(s)\triangleq&\int_{0}^{\infty}\frac{g_n^{4-\gamma}Q(g_n)}{(g_n^2+\tau_{t}^2)^{i+2}}\exp\left(\frac{-s}{g_n^2+\tau_{t}^2}\right)dg_n. \label{A.func.mu}
\end{align}
\end{proposition}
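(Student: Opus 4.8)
The plan is to start from the variance form of the MSE stated just before the proposition, namely $\mathrm{MSE}(\tau_t)=\mathbb{E}\big[\operatorname{Var}(X|\tilde{X}^{t})\big]$ (with $G$ dropped since only $p_G$ is known), and to expand the conditional variance as $\operatorname{Var}(X|\tilde{X}^{t}=\tilde{x})=\mathbb{E}[|X|^2\mid\tilde{x}]-|\mathbb{E}[X\mid\tilde{x}]|^2$. The conditional mean is already available from Proposition~\ref{P.Prop3}, so the squared-mean term contributes $|\tilde{x}|^2\nu_1^2(|\tilde{x}|^2)/\xi_1^2(|\tilde{x}|^2)$. The work is therefore concentrated in evaluating the conditional second moment $\mathbb{E}[|X|^2\mid\tilde{x}]$ and then performing the outer average.

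First I would record the marginal density of $\tilde{X}^{t}=X+\tau_tV$. Since $X$ is $0$ with probability $1-\lambda$ and otherwise a Gaussian scale mixture, $\tilde{X}^{t}$ is, conditionally on the large-scale fading $g$ within the active branch, complex Gaussian; marginalizing gives $p_{\tilde{X}^{t}}(\tilde{x})=\tfrac{\lambda a}{\pi}\,\xi_1(|\tilde{x}|^2)$, where $\xi_1$ is exactly \eqref{eq.xi}. This identification is what will later make the $\xi_1$ factors cancel cleanly.

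Then I would compute $\mathbb{E}[|X|^2\mid\tilde{x}]$ by conditioning first on $g$, where the posterior of $X$ is complex Gaussian with mean $\tfrac{g^2}{g^2+\tau_t^2}\tilde{x}$ and variance $\tfrac{g^2\tau_t^2}{g^2+\tau_t^2}$, so that $\mathbb{E}[|X|^2\mid\tilde{x},g]=\tfrac{g^4|\tilde{x}|^2}{(g^2+\tau_t^2)^2}+\tfrac{g^2\tau_t^2}{g^2+\tau_t^2}$; the inactive component drops out since it carries $|x|^2=0$. Weighting by the same scale-mixture kernel that appears in $p_{\tilde{X}^{t}}$ and dividing by $p_{\tilde{X}^{t}}$ then yields $\mathbb{E}[|X|^2\mid\tilde{x}]=(\,|\tilde{x}|^2\mu_1(|\tilde{x}|^2)+\tau_t^2\nu_1(|\tilde{x}|^2)\,)/\xi_1(|\tilde{x}|^2)$, where the two pieces are precisely the defining integrals of $\mu_1$ in \eqref{A.func.mu} and of $\nu_1$ in \eqref{A.func.nu}. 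Combining, $\operatorname{Var}(X|\tilde{x})=\big(s\mu_1(s)+\tau_t^2\nu_1(s)\big)/\xi_1(s)-s\,\nu_1^2(s)/\xi_1^2(s)$ with $s=|\tilde{x}|^2$.

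Finally I would integrate $\operatorname{Var}(X|\tilde{x})$ against $p_{\tilde{X}^{t}}(\tilde{x})=\tfrac{\lambda a}{\pi}\xi_1(s)$. Switching to polar coordinates replaces $\int_{\mathbb{C}}f(|\tilde{x}|^2)\,d\tilde{x}$ by $\pi\int_0^\infty f(s)\,ds$; the prefactors combine to $\lambda a$ and one factor of $\xi_1$ cancels, leaving $\lambda a\int_0^\infty\big[s\mu_1(s)+\tau_t^2\nu_1(s)-s\nu_1^2(s)\xi_1^{-1}(s)\big]\,ds$. The $s\mu_1$ and $s\nu_1^2/\xi_1$ terms are already the claimed second integral. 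For the remaining $\lambda a\tau_t^2\int_0^\infty\nu_1(s)\,ds$, I would apply Fubini and the elementary identity $\int_0^\infty e^{-s/A}\,ds=A$ with $A=g^2+\tau_t^2$, which collapses $\int_0^\infty\nu_1(s)\,ds$ to $\int_0^\infty g^{2-\gamma}Q(g)/(g^2+\tau_t^2)\,dg$ and reproduces the first integral in \eqref{eq.mseofmmse}. The main obstacle is the middle step: correctly evaluating $\mathbb{E}[|X|^2\mid\tilde{x}]$ for the scale-mixture prior and recognizing that the $g$-marginalization produces exactly the $\mu_1$ and $\nu_1$ kernels; once these and the identity $p_{\tilde{X}^{t}}=\tfrac{\lambda a}{\pi}\xi_1$ are in hand, the remaining manipulations are routine applications of Fubini and Gamma-function integrals.
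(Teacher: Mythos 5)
Your proposal is correct and follows essentially the same route as the paper's proof in Appendix~\ref{A:MSE}: the variance decomposition $\operatorname{Var}(X|\tilde{X}^{t})=\mathbb{E}[|X|^2\mid\tilde{X}^{t}]-|\mathbb{E}[X\mid\tilde{X}^{t}]|^2$, evaluation of the conditional second moment over the Gaussian scale mixture (your conjugate-posterior shortcut is the same computation the paper does by completing the square, yielding its $\hat{f}(g)$), and the outer average against $p_{\tilde{X}^{t}}=\tfrac{\lambda a}{\pi}\xi_1$. Your final steps (polar coordinates, cancellation of $\xi_1$, and Fubini with $\int_0^\infty e^{-s/(g^2+\tau_t^2)}\,ds=g^2+\tau_t^2$ to recover the first integral) correctly supply the ``algebraic manipulations'' that the paper leaves implicit.
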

\begin{proof}
See Appendix~\ref{A:MSE}.
\end{proof}
It is worth noting that $\lambda g_n^2\tau_t^2(g_n^2+\tau_t^2)^{-1}$ in the first term of the right hand side of \eqref{eq.mseofmmse} corresponds to the MSE of the estimate of $x_{n}$ if the large-scale fading coefficient $g_n$ as well as the user activity is assumed to be a priori known, and the integral of $g_n$ corresponds to the averaging over all possible $g_n$. The second term then represents the cost of unknown $g_n$ and unknown user activity in reality. Similarly, for the case where $g_n$ is exactly known, the MSE can be characterized as follows.

%%%%%%%%%%%%%%%%%%%%%%%%%%%%%%%%%%%
%		Proposition 6
%%%%%%%%%%%%%%%%%%%%%%%%%%%%%%%%%%%
\begin{proposition}\label{P.Prop6}
The MSE of the denoiser for the case where $g_n$ is known exactly at the BS is
%35
\begin{align}\label{eq.mseofmmse.lsf}
\mathrm{MSE}(\tau_t)=&\int_{0}^{\infty}\frac{aQ(g_n)}{g_n^{\gamma}}\cdot\frac{\lambda g_n^2\tau_t^2}{g_n^2+\tau_t^2}dg_n\nonumber\\
&+\int_{0}^{\infty}\frac{a\lambda Q(g_n)g_n^4}{g_n^{\gamma}(g_n^2+\tau_t^2)}\left(1-\varphi_1(g_n^2\tau_t^{-2})\right)dg_n,
\end{align}
where function $\varphi_{i}(s)$ of $s$ is defined as
\begin{align}\label{A.func.varphi}
\varphi_{i}(s)\triangleq\int_{0}^{\infty}\frac{t^i\exp(-t)}{1+(1-\lambda)(1+s)^i\exp(-st)/\lambda}dt.
\end{align}
\end{proposition}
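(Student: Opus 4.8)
The plan is to exploit the identity $\mathrm{MSE}(\tau_t)=\mathbb{E}\big[\operatorname{Var}(X|\tilde{X}^{t},G)\big]$ noted just before the proposition and to evaluate the conditional variance explicitly from the Bernoulli--Gaussian posterior. First I would write the posterior of $X$ given $\tilde{X}^{t}=\tilde{x}^{t}_n$ and $G=g_n$. Since the prior \eqref{eq.distr.of.x.lsf} is a mixture of a point mass at $0$ and a $\mathcal{CN}(0,g_n^2)$ density, and the model $\tilde{X}^{t}=X+\tau_t V$ contributes an independent $\mathcal{CN}(0,\tau_t^2)$ likelihood, the posterior is again a two-component mixture: with posterior activity probability
\[
\rho\triangleq\left(1+\frac{1-\lambda}{\lambda}\frac{g_n^2+\tau_t^2}{\tau_t^2}\exp(-\Delta|\tilde{x}^{t}_n|^2)\right)^{-1}
\]
the signal is active and conditionally $\mathcal{CN}(m,\sigma^2)$ with $m=g_n^2(g_n^2+\tau_t^2)^{-1}\tilde{x}^{t}_n$ and $\sigma^2=g_n^2\tau_t^2(g_n^2+\tau_t^2)^{-1}$, and with probability $1-\rho$ it is exactly zero. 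This is the same $\rho$ and $m$ already appearing in the MMSE denoiser \eqref{eq.mmseEst.lsf}, so $\eta_t(\tilde{x}^{t}_n,g_n)=\rho m$ serves as a consistency check.

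From this mixture the conditional variance decomposes as
\[
\operatorname{Var}(X|\tilde{x}^{t}_n,g_n)=\rho\sigma^2+\rho(1-\rho)|m|^2=\rho\frac{g_n^2\tau_t^2}{g_n^2+\tau_t^2}+\rho(1-\rho)\frac{g_n^4}{(g_n^2+\tau_t^2)^2}|\tilde{x}^{t}_n|^2,
\]
which already separates into the two terms of \eqref{eq.mseofmmse.lsf}. For the first term I would take the expectation over $\tilde{X}^{t}$ at fixed $g_n$ using the key identity $\mathbb{E}_{\tilde{X}^{t}|G}[\rho]=\lambda$; this holds because $\rho\,p_{\tilde{X}^{t}|G}(\tilde{x}^{t}_n|g_n)$ equals $\lambda$ times the active-component density $\mathcal{CN}(\tilde{x}^{t}_n;0,g_n^2+\tau_t^2)$, which integrates to $\lambda$. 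Averaging the resulting $\lambda g_n^2\tau_t^2/(g_n^2+\tau_t^2)$ over $G$ with $p_G(g_n)=ag_n^{-\gamma}Q(g_n)$ from Proposition~\ref{P.Prop1} reproduces the first integral exactly.

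The main work is the second term $\mathbb{E}_{\tilde{X}^{t}|G}\big[\rho(1-\rho)|\tilde{x}^{t}_n|^2\big]$. Because every density here depends on $\tilde{x}^{t}_n$ only through $u\triangleq|\tilde{x}^{t}_n|^2$, I would reduce the integral over $\mathbb{C}$ to a one-dimensional integral in $u$ via $\int_{\mathbb{C}}f(|\tilde{x}^{t}_n|^2)\,d\tilde{x}^{t}_n=\pi\int_0^\infty f(u)\,du$, after which the product $\rho(1-\rho)\,p_{\tilde{X}^{t}|G}$ collapses to a single ratio. The crucial step is the change of variables $t=u/(g_n^2+\tau_t^2)$ together with $s=g_n^2\tau_t^{-2}$: under these, using $(g_n^2+\tau_t^2)\Delta=g_n^2\tau_t^{-2}$ and $\tau_t^2/(g_n^2+\tau_t^2)=1/(1+s)$, the exponents align so the integrand takes precisely the form defining $\varphi_1$ in \eqref{A.func.varphi}. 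I then expect to recognize the result as $\lambda\big(1-\varphi_1(g_n^2\tau_t^{-2})\big)$ by writing $\lambda=\lambda\int_0^\infty t e^{-t}\,dt$ and subtracting, which turns $1-\varphi_1$ into the integral of $t e^{-t}$ against the saturated ratio. Multiplying by $g_n^4/(g_n^2+\tau_t^2)$ and averaging over $G$ with $p_G$ then yields the second integral of \eqref{eq.mseofmmse.lsf}.

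The principal obstacle is entirely the bookkeeping in this second term: getting the substitution and the $\lambda\int_0^\infty t e^{-t}\,dt=\lambda$ identity to line up so that the unwieldy posterior ratio matches the defined $\varphi_1$ cleanly. The posterior derivation, the conditional-variance decomposition, and the first term are routine by comparison, the only subtlety there being the neat observation that $\mathbb{E}_{\tilde{X}^{t}|G}[\rho]=\lambda$.
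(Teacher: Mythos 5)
Your argument is correct, and it takes a genuinely cleaner route than the paper's. The posterior under the Bernoulli--Gaussian prior \eqref{eq.distr.of.x.lsf} is indeed the spike-plus-Gaussian mixture you describe; the decomposition $\operatorname{Var}(X|\tilde{x}^{t}_n,g_n)=\rho\sigma^2+\rho(1-\rho)|m|^2$ is the correct total-variance formula for that mixture; the identity $\mathbb{E}_{\tilde{X}^{t}|G}[\rho]=\lambda$ holds for exactly the reason you give, since $\rho\,p_{\tilde{X}^{t}|G}$ equals $\lambda$ times the $\mathcal{CN}(0,g_n^2+\tau_t^2)$ density; and your substitution $t=u/(g_n^2+\tau_t^2)$, $s=g_n^2\tau_t^{-2}$, using $(g_n^2+\tau_t^2)\Delta=s$, lands precisely on the definition of $\varphi_1$ in \eqref{A.func.varphi}, yielding $\mathbb{E}_{\tilde{X}^{t}|G}\big[\rho(1-\rho)|\tilde{x}^{t}_n|^2\big]=\lambda(g_n^2+\tau_t^2)\big(1-\varphi_1(s)\big)$, which the prefactor $g_n^4(g_n^2+\tau_t^2)^{-2}$ converts into the second integrand of \eqref{eq.mseofmmse.lsf}. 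The paper's proof in Appendix~\ref{A.MSE.lsf} reaches the same intermediate identity by a blunter route: it computes the posterior second moment $\mathbb{E}\big[|X|^2\big|\tilde{X}=\tilde{x},G=g\big]$ by completing the square (mirroring Appendix~\ref{A:MSE}), subtracts the squared denoiser \eqref{eq.mmseEst.lsf}, and integrates the resulting variance against $p_{\tilde{X}|G}\,p_G$, with the emergence of $\varphi_1$ hidden inside the step labelled ``Gaussian integral over $\tilde{x}$.'' Your organization buys two things: the split into the two terms of \eqref{eq.mseofmmse.lsf} is structural rather than an artifact of algebra---$\rho\sigma^2$ averages to the known-activity MSE via $\mathbb{E}[\rho]=\lambda$, while $\rho(1-\rho)|m|^2$ is visibly the penalty for activity uncertainty---so the interpretive remark the paper makes after the proposition falls out of the proof itself; what the paper's template buys in exchange is uniformity, since the same moment-minus-squared-mean computation also handles Proposition~\ref{P.Prop5}, where $g_n$ is marginalized out and no clean two-component posterior mixture is available.
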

\begin{proof}
See Appendix \ref{A.MSE.lsf}.
\end{proof}

We also observe from \eqref{eq.mseofmmse.lsf} that the first term in the right hand side corresponds to the averaged MSE if the user activity is assumed to be known, and the second term corresponds the extra error brought by unknown user activity.

Based on the expressions of MSE in \eqref{eq.mseofmmse} and \eqref{eq.mseofmmse.lsf}, the state evolution in \eqref{eq.stateevo} can be expressed as
\begin{align}
\tau_{t+1}^2 =\sigma_w^2+\frac{N}{L}\mathrm{MSE}(\tau_{t}),
\end{align}
based on which $P_F$ and $P_M$ can be evaluated according to \eqref{eq.pf}, \eqref{eq.pm}, and \eqref{eq.pf.lsf}, \eqref{eq.pm.lsf}, as functions of the iteration number. As the AMP algorithm converges, $\tau_t$ converges to the fixed point $\tau_{\infty}$ of the above equation.

Now we compare the resulting MSEs in these two cases. According to the decomposition of variance, we have
%37
\begin{align}
\mathbb{E}\big[\operatorname{Var}\big(X|\tilde{X}^{t}\big) \big]& = \mathbb{E}\big[\operatorname{Var}\big(X|\tilde{X}^{t},G\big)\big] \nonumber\\
&\quad+ \mathbb{E}\big[\operatorname{Var}\big(\mathbb{E}[X|\tilde{X}^{t},G]\big|\tilde{X}^{t}\big)\big]\nonumber\\
&\geq\mathbb{E}\big[\operatorname{Var}\big(X|\tilde{X}^{t},G\big)\big],
\end{align}
which indicates that knowing the large-scale fading can help to improve the estimation on $X$ given $\tilde{X}^{t}$. However, the simulation results in Section~\ref{sec.numerical} show that surprisingly for the model of the large-scale fading considered in this paper, the performance improvement is actually minor, indicating that knowing the large-scale fading does not help to get a much better estimation. Knowing the exact value of $g_n$ is not crucial in user activity detection and the statistical information of $g_n$ is sufficient for device detection.

\section{User Activity Detection: Multiple-Antenna Case}
\label{sec.multiple}
This section designs the AMP algorithms that account for wireless channel
propagation for the massive connectivity problem in the multiple-antenna case.
As mentioned earlier, two different AMP algorithms can be used for the MMV
problem: the AMP with a vector denoiser operating on each row of the
input matrix, or the parallel AMP-MMV that divides the MMV problem into
parallel SMV problems and iteratively solves the SMV problem on each antenna
separately with soft information exchange between the antennas. The AMP with
vector denoiser admits a state evolution, which allows an easier
characterization of its performance, whereas AMP-MMV can be implemented in a
distributed way which is helpful for reducing the running time of the
algorithm, especially when the BS is equipped with large antenna arrays.
%%%%%%%%%%%%%%%%%%%%%%%%%%%%%%%%%%
% 		V.A User Activity Detection by Vector AMP
%%%%%%%%%%%%%%%%%%%%%%%%%%%%%%%%%%
\subsection{User Activity Detection by AMP with Vector Denoiser}
As in the scenario with single antenna, we consider both the cases where only the statistical knowledge or the exactly knowledge of the large-scale fading is known at the BS. To design the denoisers, we first characterize the pdfs of the row vectors of $\mathbf{X}$ in the following.

%%%%%%%%%%%%%%%%%%%%%%%%%%%%%%%%%%%
%		Proposition 7
%%%%%%%%%%%%%%%%%%%%%%%%%%%%%%%%%%%
\begin{proposition}\label{P.Prop7}
Denote $\mathbf{r}_n$ as the row vector of $\mathbf{X}$. If only $p_{G}(g_n)$ is known at the BS, the pdf of $\mathbf{r}_n$ is given by
%xx
\begin{align}\label{eq.vamp.nolsf.distri}
p_{\mathbf{R}}(\mathbf{r}_n)=(1-\lambda)\delta_{\mathbf{0}}+\int_{0}^{\infty}\frac{\exp \left(-\|\mathbf{r}_n\|_2^2g_n^{-2}\right)}{\pi^{M} g_n^{\gamma+2M}/(a\lambda Q(g_n))} dg_n.
\end{align}
If $g_n$ is known, the pdf of $\mathbf{r}_n$ is Bernoulli-Gaussian as
%xx
\begin{align}\label{eq.vamp.z}
p_{\mathbf{R}|G}(\mathbf{r}_n|g_n)=(1-\lambda)\delta_{\mathbf{0}}+\frac{\lambda\exp(-\|\mathbf{r}_n\|_2^2g_n^{-2})}{(\pi g_n^2)^M}.
\end{align}
\end{proposition}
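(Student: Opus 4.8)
The plan is to obtain both densities as direct vector generalizations of the single-antenna results in Propositions~\ref{P.Prop1} and~\ref{P.Prop2}, exploiting that the Rayleigh fading is i.i.d.\ across the $M$ antennas. The starting point is the mixture representation \eqref{eq.distr.multiple}, namely $p_{\mathbf{R}|G}(\mathbf{r}_n|g_n)=(1-\lambda)\delta_{\mathbf{0}}+\lambda\, p_{\mathbf{H}|G}(\mathbf{r}_n|g_n)$, so that the only work is to characterize the conditional channel density $p_{\mathbf{H}|G}$ and then, for the first claim, to average it over the large-scale fading $g_n$.

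First I would establish $p_{\mathbf{H}|G}$. Given $g_n$, the entries of $\mathbf{h}_n$ are independent complex Gaussians, each governed by the single-antenna density \eqref{eq.distr.of.channel.lsf}. Denoting the $m$th entry of $\mathbf{h}_n$ by $h_{nm}$ and multiplying the $M$ marginals, the exponents add and $\sum_{m=1}^{M}|h_{nm}|^2=\|\mathbf{h}_n\|_2^2$ collapses them into a single term, yielding $p_{\mathbf{H}|G}(\mathbf{h}_n|g_n)=(\pi g_n^2)^{-M}\exp(-\|\mathbf{h}_n\|_2^2 g_n^{-2})$. Substituting this into \eqref{eq.distr.multiple} gives \eqref{eq.vamp.z} at once; it is simply the Bernoulli--Gaussian form with the scalar variance $g_n^2$ replaced by the isotropic covariance $g_n^2\mathbf{I}$.

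For the statistics-only case I would marginalize \eqref{eq.vamp.z} against $p_G$ from \eqref{eq.distr.of.lsf}, computing $p_{\mathbf{R}}(\mathbf{r}_n)=\int_0^{\infty} p_{\mathbf{R}|G}(\mathbf{r}_n|g_n)\,p_G(g_n)\,dg_n$. The point-mass term factors out of the integral, and since $p_G$ integrates to one it contributes exactly $(1-\lambda)\delta_{\mathbf{0}}$. For the continuous term I would insert $p_G(g_n)=a g_n^{-\gamma}Q(g_n)$ and merge the powers of $g_n$: the factor $g_n^{-2M}$ from the Gaussian normalization combines with $g_n^{-\gamma}$ from $p_G$ to give $g_n^{-(\gamma+2M)}$, reproducing exactly the integrand displayed in \eqref{eq.vamp.nolsf.distri}.

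There is no genuine analytical obstacle; the statement is a dimension lift of the $M=1$ results, and the $g_n$-integral is left unevaluated. The only points demanding care are bookkeeping: tracking that the normalization scales as $(\pi g_n^2)^{-M}$ rather than $(\pi g_n^2)^{-1}$, and verifying that the powers of $g_n$ accumulate to $\gamma+2M$ after combining the Gaussian factor with $p_G$. Independence of the Rayleigh components across antennas is the single structural hypothesis the argument uses, and it is precisely what enables the product-to-sum collapse in the exponent.
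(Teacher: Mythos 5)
Your proposal is correct and matches the paper's (one-line) proof, which states exactly this: the results are the multivariate extensions of \eqref{eq.distr.of.x} and \eqref{eq.distr.of.x.lsf}, obtained via the conditional independence of the $M$ Rayleigh-fading components given $g_n$ and, for \eqref{eq.vamp.nolsf.distri}, marginalization against $p_G$ from \eqref{eq.distr.of.lsf}. Your bookkeeping of the normalization $(\pi g_n^2)^{-M}$ and the exponent $\gamma+2M$ is accurate, so nothing is missing.
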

\begin{proof}
The results are extensions of \eqref{eq.distr.of.x} and \eqref{eq.distr.of.x.lsf} by considering  multivariate random variables.
\end{proof}
Given $\mathbf{\tilde{R}}^{t}=\mathbf{R}+\mathbf{U}^{t}_n$ with $\mathbf{U}^t$ following complex Gaussian distribution with zero mean and covariance $\mathbf{\Sigma}_{t}$, the MMSE denoisers $\eta_t({\mathbf{\tilde{r}}}^t_n)$ and $\eta_t({\mathbf{\tilde{r}}}^t_n,g_n)$ for both cases are given by the conditional expectation in the following.

%%%%%%%%%%%%%%%%%%%%%%%%%%%%%%%%%%%
%		Proposition 8
%%%%%%%%%%%%%%%%%%%%%%%%%%%%%%%%%%%
\begin{proposition}\label{P.Prop8}
If only $p_{G}(g_n)$ is known at the BS, the conditional expectation of $\mathbf{R}$ given $\mathbf{\tilde{R}}^{t}=\mathbf{\tilde{r}}_{n}^{t}$ is
%15
\begin{align}\label{eq.vamp.nolsf.mmse}
\mathbb{E}[\mathbf{R}|\mathbf{\tilde{R}}^{t}=\mathbf{\tilde{r}}_{n}^{t}]=\frac{\int_{0}^{\infty}Q(g_n)\psi_a(g_{n})(g_{n}^{-2}\mathbf{\Sigma}_t+\mathbf{I})^{-1}\mathbf{\tilde{r}}^t_ndg_n}
{\psi_c(g_n)+\int_{0}^{\infty}Q(g_n)\psi_b(g_n)dg_n},
\end{align}
where $\psi_a(g_n)$, $\psi_b(g_n)$ and $\psi_c(g_n)$ are defined as follows
\begin{align}
\psi_a(g_n) \triangleq & \frac{\exp\left(-\mathbf{\tilde{r}}^t_n\left(\mathbf{\Sigma}_t^{-1}-(\mathbf{\Sigma}_t+g_n^{-2}\mathbf{\Sigma}_t^2)^{-1}\right)(\mathbf{\tilde{r}}^t_n)^{*}\right)}{g_n^{\gamma}|\mathbf{\Sigma}_t+g_n^2\mathbf{I}|},\\
\psi_b(g_n) \triangleq & \frac{\exp\left(-\mathbf{\tilde{r}}^t_n\left(g_n^{-2}\mathbf{I}-(g_n^{2}\mathbf{I}+g_n^4\mathbf{\Sigma}_t^{-1})^{-1}\right)(\mathbf{\tilde{r}}^t_n)^{*}\right)}{g_n^{\gamma}|\mathbf{\Sigma}_t+g_n^2\mathbf{I}|},\\
\psi_c(g_n) \triangleq &(1-\lambda)(a\lambda)^{-1}
\exp\left(-\mathbf{\tilde{r}}^t_n\mathbf{\Sigma}_{t}^{-1}(\mathbf{\tilde{r}}^t_n)^{*}\right)|\mathbf{\Sigma}_t|^{-1}.
\end{align}
If $g_n$ is known at the BS, the conditional expectation is
\begin{align}\label{eq.vamp.mmse}
\mathbb{E}[\mathbf{R}|\mathbf{\tilde{R}}^{t}=\mathbf{\tilde{r}}_{n}^{t},G=g_n]=\frac{( g_n^{-2}\mathbf{\Sigma}_t +\mathbf{I})^{-1}\mathbf{\tilde{r}}^{t}_n}{1+\frac{1-\lambda}{\lambda}|g_n^2\mathbf{I}+\mathbf{\Sigma}_t|\psi_d(g_n)},
\end{align}
where $\psi_d(g_n)$ is defined as follows
\begin{align}
\psi_d(g_n)\triangleq\exp\left(-\mathbf{\tilde{r}}^{t}_n\left(\mathbf{\Sigma}_t^{-1}-(\mathbf{\Sigma}_t+g_n^2\mathbf{I})^{-1}\right)(\mathbf{\tilde{r}}^{t}_n)^{*}\right)|\mathbf{\Sigma}_t|^{-1}.
\end{align}
\end{proposition}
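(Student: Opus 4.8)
The plan is to obtain both denoisers as the Bayesian MMSE estimator of the row vector $\mathbf{R}$ directly from Bayes' rule, using the mixture priors of Proposition~\ref{P.Prop7} together with the additive observation model $\mathbf{\tilde{R}}^{t}=\mathbf{R}+\mathbf{U}^{t}_n$, where $\mathbf{U}^{t}_n$ is complex Gaussian with zero mean and covariance $\mathbf{\Sigma}_t$. The starting identity is
\[
\mathbb{E}[\mathbf{R}\,|\,\mathbf{\tilde{R}}^{t}=\mathbf{\tilde{r}}^{t}_{n}]
=\frac{\int \mathbf{r}\,p_{\mathbf{\tilde{R}}^{t}|\mathbf{R}}(\mathbf{\tilde{r}}^{t}_{n}|\mathbf{r})\,p_{\mathbf{R}}(\mathbf{r})\,d\mathbf{r}}
{\int p_{\mathbf{\tilde{R}}^{t}|\mathbf{R}}(\mathbf{\tilde{r}}^{t}_{n}|\mathbf{r})\,p_{\mathbf{R}}(\mathbf{r})\,d\mathbf{r}},
\]
where the likelihood is the complex Gaussian density $p_{\mathbf{\tilde{R}}^{t}|\mathbf{R}}(\mathbf{\tilde{r}}^{t}_{n}|\mathbf{r})=\pi^{-M}|\mathbf{\Sigma}_t|^{-1}\exp\!\left(-(\mathbf{\tilde{r}}^{t}_{n}-\mathbf{r})\mathbf{\Sigma}_t^{-1}(\mathbf{\tilde{r}}^{t}_{n}-\mathbf{r})^{*}\right)$. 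Everything reduces to evaluating these two integrals against the Bernoulli--Gaussian (or scale-mixture) prior.

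First I would dispose of the known-$g_n$ case, since it is the building block for the other one. With the prior \eqref{eq.vamp.z}, I split both integrals into the inactive branch (the point mass $\delta_{\mathbf{0}}$, which contributes nothing to the numerator and only a noise-only term $\mathcal{CN}(\mathbf{0},\mathbf{\Sigma}_t)$ at $\mathbf{\tilde{r}}^{t}_{n}$ to the denominator) and the active branch, on which $\mathbf{R}\sim\mathcal{CN}(\mathbf{0},g_n^{2}\mathbf{I})$. On the active branch the product of the Gaussian prior and the Gaussian likelihood is again Gaussian in $\mathbf{r}$, so completing the square yields the conjugate posterior mean $(\mathbf{I}+g_n^{-2}\mathbf{\Sigma}_t)^{-1}\mathbf{\tilde{r}}^{t}_{n}$ and an evidence factor equal to $\mathcal{CN}(\mathbf{0},g_n^{2}\mathbf{I}+\mathbf{\Sigma}_t)$ evaluated at $\mathbf{\tilde{r}}^{t}_{n}$. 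Dividing, the posterior activity probability collapses to the scalar weight multiplying this posterior mean, and forming the ratio of the two Gaussian evidences produces exactly $\frac{1-\lambda}{\lambda}|g_n^{2}\mathbf{I}+\mathbf{\Sigma}_t|\,\psi_d(g_n)$ in the denominator of \eqref{eq.vamp.mmse}. This is the matrix generalization of the scalar result \eqref{eq.mmseEst.lsf}.

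For the statistics-only case I would reuse this computation conditionally on $g_n$ and then integrate out $g_n$ against its density $a g_n^{-\gamma}Q(g_n)$ from Proposition~\ref{P.Prop1}. With the scale-mixture prior \eqref{eq.vamp.nolsf.distri}, the numerator becomes $\int_0^\infty a\lambda g_n^{-\gamma}Q(g_n)(\mathbf{I}+g_n^{-2}\mathbf{\Sigma}_t)^{-1}\mathbf{\tilde{r}}^{t}_{n}$ times the Gaussian evidence $\mathcal{CN}(\mathbf{0},g_n^{2}\mathbf{I}+\mathbf{\Sigma}_t)$ at $\mathbf{\tilde{r}}^{t}_{n}$, while the denominator is the full marginal $p_{\mathbf{\tilde{R}}^{t}}(\mathbf{\tilde{r}}^{t}_{n})$, namely the same integral without the posterior-mean factor plus the inactive contribution $(1-\lambda)\mathcal{CN}(\mathbf{0},\mathbf{\Sigma}_t)$ at $\mathbf{\tilde{r}}^{t}_{n}$. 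Pulling the common constant $\pi^{M}/(a\lambda)$ out of numerator and denominator is what converts the bare Gaussian evidences into the functions $\psi_a,\psi_b$ and the inactive term into $\psi_c$, yielding \eqref{eq.vamp.nolsf.mmse}.

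The step I expect to be the main obstacle is the matrix algebra needed to recognize the $\psi$ functions \emph{as written}, since their quadratic-form matrices $\mathbf{\Sigma}_t^{-1}-(\mathbf{\Sigma}_t+g_n^{-2}\mathbf{\Sigma}_t^{2})^{-1}$ and $g_n^{-2}\mathbf{I}-(g_n^{2}\mathbf{I}+g_n^{4}\mathbf{\Sigma}_t^{-1})^{-1}$ look different from the $(g_n^{2}\mathbf{I}+\mathbf{\Sigma}_t)^{-1}$ that the Gaussian evidence naturally produces. The crux is to verify, via the factorizations $\mathbf{\Sigma}_t+g_n^{-2}\mathbf{\Sigma}_t^{2}=g_n^{-2}\mathbf{\Sigma}_t(g_n^{2}\mathbf{I}+\mathbf{\Sigma}_t)$ and $g_n^{2}\mathbf{I}+g_n^{4}\mathbf{\Sigma}_t^{-1}=g_n^{2}\mathbf{\Sigma}_t^{-1}(\mathbf{\Sigma}_t+g_n^{2}\mathbf{I})$ followed by inversion, that both quadratic-form matrices collapse to $(g_n^{2}\mathbf{I}+\mathbf{\Sigma}_t)^{-1}$. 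Consequently $\psi_a$ and $\psi_b$ are in fact equal, each coinciding with $g_n^{-\gamma}\pi^{M}$ times the Gaussian evidence, so the only remaining bookkeeping is checking that the scalar factors $a$, $\lambda$, $\pi^{M}$, and $|\mathbf{\Sigma}_t|$ cancel consistently and that $\psi_c$ emerges with exactly the stated prefactor $(1-\lambda)(a\lambda)^{-1}$. Once these identities are in hand, both displayed formulas follow by substitution.
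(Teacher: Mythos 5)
Your proposal is correct and takes essentially the same route as the paper's proof in Appendix~\ref{A.mmse.v}: Bayes' rule for $\mathbb{E}[\mathbf{R}\,|\,\mathbf{\tilde{R}}^{t}=\mathbf{\tilde{r}}^{t}_{n}]$ with the marginal $p_{\mathbf{\tilde{R}}^{t}}$ in the denominator, evaluated via multivariate Gaussian integration — your completing-the-square conjugacy argument is exactly the computation the paper compresses into ``using multivariate Gaussian integral of $\mathbf{r}$ with some algebraic manipulations,'' and your matrix identities correctly verify that the quadratic forms in $\psi_a$ and $\psi_b$ both collapse to $(g_n^{2}\mathbf{I}+\mathbf{\Sigma}_t)^{-1}$. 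The only difference is cosmetic: you derive the known-$g_n$ case directly as the Bernoulli--Gaussian building block, whereas the paper cites \cite{Kim2011} for that formula.
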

\begin{proof}
See Appendix~\ref{A.mmse.v}.
\end{proof}

The covariance matrix $\mathbf{\Sigma}_{t}$ in both \eqref{eq.vamp.nolsf.mmse} and \eqref{eq.vamp.mmse} is tracked via the state evolution \eqref{eq.stateevo.multiple}, and $\mathbf{\Sigma}_{t}$ can be further simplified by the following proposition.

%%%%%%%%%%%%%%%%%%%%%%%%%%%%%%%%%%
%      Proposition 9
%%%%%%%%%%%%%%%%%%%%%%%%%%%%%%%%%%
\begin{proposition}\label{prop.evo}
Based on the pdfs in \eqref{eq.vamp.nolsf.distri} and \eqref{eq.vamp.z} and the
state evolution \eqref{eq.stateevo.multiple}, if the initial covariance matrix
$\mathbf{\Sigma}_{0}$ is a diagonal matrix with identical diagonal entries,
i.e., $\mathbf{\Sigma}_{0}=\tau_0^2\mathbf{I}$, then $\mathbf{\Sigma}_{t}$
stays as a diagonal matrix with identical diagonal entries, i.e., $\mathbf{\Sigma}_{t}=\tau_{t}^2\mathbf{I}$, for $t\geq1$, where $\tau_{t}$ is determined by
\begin{align}
\tau_{t+1}^2 = \sigma_w^2 + \frac{N}{L}\mathrm{MSE}(\tau_t).
\end{align}
If only $p_{G}(g_n)$ is known at the BS, $\mathrm{MSE}(\tau_t)$ is
given by
\begin{align}
\mathrm{MSE}(\tau_t)=&\int_{0}^{\infty}\frac{a\lambda g_n^{2}\tau_t^2Q(g_n)}{g_n^{\gamma}(g_n^2+\tau_t^2)}dg_n \nonumber \\
&+\int_{0}^{\infty}\frac{\mu_{M}(s)-\nu_{M}^{2}(s)\xi_{M}^{-1}(s)}{\Gamma(M+1)/(\lambda as^M)}ds,
\end{align}
where functions $\mu_{i}(s)$ , $\nu_i(s)$ and $\xi_{i}(s)$ are defined in
\eqref{A.func.mu}, \eqref{A.func.nu}, and \eqref{eq.xi}, respectively, and $\Gamma(\cdot)$ is the Gamma function. If the
exact large-scale fading $g_n$ is known at the BS, $\mathrm{MSE}(\tau_t)$
is given by
\begin{align}
\mathrm{MSE}(\tau_t)=&\int_{0}^{\infty}\frac{a\lambda g_n^2\tau_t^2Q(g_n)}{g_n^{\gamma}(g_n^2+\tau_t^2)}dg_n\nonumber\\
&+\int_{0}^{\infty}\frac{a\lambda Q(g_n)g_n^4}{g_n^{\gamma}(g_n^2+\tau_t^2)}\left(1-\frac{\varphi_{M}(g_n^2\tau_t^{-2})}{\Gamma(M+1)}\right)dg_n,
\end{align}
where function $\varphi_{i}(s)$ is defined in \eqref{A.func.varphi}.
%By using the equations above, the fixed point $\tau_{\infty}$ can be obtained numerically.
\end{proposition}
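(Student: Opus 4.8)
The plan is to proceed by induction on $t$ to show that $\mathbf{\Sigma}_t = \tau_t^2\mathbf{I}$, and then to evaluate the resulting per-component MSE in closed form by collapsing the $M$-dimensional Gaussian integrals to one-dimensional radial integrals. The base case is the hypothesis $\mathbf{\Sigma}_0 = \tau_0^2\mathbf{I}$. For the inductive step, I would first observe that when $\mathbf{\Sigma}_t = \tau_t^2\mathbf{I}$, every matrix appearing in the denoisers of Proposition~\ref{P.Prop8} collapses to a scalar multiple of $\mathbf{I}$: for instance $(g_n^{-2}\mathbf{\Sigma}_t + \mathbf{I})^{-1} = \tfrac{g_n^2}{g_n^2+\tau_t^2}\mathbf{I}$, the quadratic form in $\psi_d(g_n)$ reduces to $\Delta\|\mathbf{\tilde{r}}_n^t\|_2^2$ with $\Delta$ as in \eqref{eq:Delta}, the determinant becomes $|g_n^2\mathbf{I}+\mathbf{\Sigma}_t| = (g_n^2+\tau_t^2)^M$, and the quadratic forms inside $\psi_a,\psi_b,\psi_c$ likewise reduce to scalars times $\|\mathbf{\tilde{r}}_n^t\|_2^2$. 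Consequently both MMSE denoisers in \eqref{eq.vamp.nolsf.mmse} and \eqref{eq.vamp.mmse} take the radial form $\eta_t(\mathbf{\tilde{r}},g) = f(\|\mathbf{\tilde{r}}\|_2,g)\,\mathbf{\tilde{r}}$ for a scalar function $f$, depending on the input only through its norm.

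The crux of the invariance claim is then a symmetry argument. Conditioned on $G=g_n$, the prior $p_{\mathbf{R}|G}$ is isotropic (a point mass at $\mathbf{0}$ mixed with a $\mathcal{CN}(\mathbf{0},g_n^2\mathbf{I})$ law), and under the inductive hypothesis $\mathbf{U}^t\sim\mathcal{CN}(\mathbf{0},\tau_t^2\mathbf{I})$ is isotropic as well; hence for any unitary $\mathbf{\Phi}\in\mathbb{C}^{M\times M}$ the pair $(\mathbf{R}\mathbf{\Phi},\mathbf{U}^t\mathbf{\Phi})$ is equal in distribution to $(\mathbf{R},\mathbf{U}^t)$. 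Because a radial denoiser commutes with right multiplication by $\mathbf{\Phi}$, the error vector $\mathbf{D}^t$ satisfies $\mathbf{D}^t(\mathbf{R}\mathbf{\Phi},\mathbf{U}^t\mathbf{\Phi}) = \mathbf{\Phi}^T\mathbf{D}^t(\mathbf{R},\mathbf{U}^t)$, so that $\mathbb{E}[\mathbf{D}^t(\mathbf{D}^t)^*]$ is invariant under conjugation by every unitary matrix. A matrix commuting with all unitaries must be a scalar multiple of $\mathbf{I}$, which gives $\mathbb{E}[\mathbf{D}^t(\mathbf{D}^t)^*] = \mathrm{MSE}(\tau_t)\,\mathbf{I}$ with $\mathrm{MSE}(\tau_t) \triangleq \tfrac{1}{M}\mathbb{E}[\|\mathbf{D}^t\|_2^2]$. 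Substituting into \eqref{eq.stateevo.multiple} yields $\mathbf{\Sigma}_{t+1} = \tau_{t+1}^2\mathbf{I}$ with the stated scalar recursion, closing the induction.

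It remains to evaluate $\mathrm{MSE}(\tau_t) = \tfrac{1}{M}\mathbb{E}[\operatorname{tr}\operatorname{Cov}(\mathbf{R}\mid\mathbf{\tilde{R}}^t,G)]$ in each case, which I would do in parallel with the single-antenna computations behind Propositions~\ref{P.Prop5} and \ref{P.Prop6}. The decomposition into a first term representing the MSE under known activity and a second term representing the extra cost of unknown activity carries over verbatim, and indeed the first term matches that of the single-antenna case exactly. The new ingredient is that, with $\mathbf{\tilde{R}}^t$ an $M$-dimensional isotropic Gaussian conditioned on activity and $g_n$, every expectation depends on $\mathbf{\tilde{r}}$ only through $s\triangleq\|\mathbf{\tilde{r}}\|_2^2$; integrating out the angular coordinates replaces the scalar Gaussian weight by a Gamma-type radial density with Jacobian proportional to $s^{M-1}$. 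This is precisely what promotes the index from $1$ to $M$ in the functions $\nu_M,\xi_M,\mu_M$ of \eqref{A.func.nu}, \eqref{eq.xi}, \eqref{A.func.mu}, introduces the normalizer $\Gamma(M+1)$ and the $\varphi_M$ of \eqref{A.func.varphi}, and recovers Propositions~\ref{P.Prop5} and \ref{P.Prop6} upon setting $M=1$.

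I expect the symmetry step to be clean once the radial structure of the denoiser is exposed, so the main obstacle is the bookkeeping in the MSE evaluation: correctly tracking the powers of $(g_n^2+\tau_t^2)$ and of $s$ produced by the $M$-dimensional radial integration, and matching the Gamma-function normalizations so that the collapsed one-dimensional integrals coincide exactly with the claimed $\nu_M,\xi_M,\mu_M,\varphi_M$ forms.
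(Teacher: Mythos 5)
Your proposal is correct, and its key structural step takes a genuinely different route from the paper's proof in Appendix~\ref{A:diag}. The paper establishes that $\mathbb{E}\big[\mathbf{D}^{t}(\mathbf{D}^{t})^{*}\big]$ is a scalar multiple of $\mathbf{I}$ by direct computation: under the inductive hypothesis it writes out the conditional covariance $\operatorname{Cov}(\mathbf{R}\mid\mathbf{\tilde{R}}^{t},G)$ explicitly --- a scalar multiple of $\mathbf{I}$ plus a rank-one term proportional to $\mathbf{\tilde{r}}_{t}^{T}(\mathbf{\tilde{r}}_{t}^{T})^{*}$ --- then kills the off-diagonal entries of its expectation by an odd-function parity argument and observes by inspection that the diagonal entries coincide. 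Your unitary-invariance argument (radial denoiser, hence $\mathbf{D}^{t}$ equivariant under right multiplication by unitaries, hence the error covariance commutes with every unitary and is $c\mathbf{I}$ by Schur's lemma) reaches the same conclusion in one stroke, handles diagonality and equality of the diagonal entries simultaneously, and is more general: it works for any isotropic prior combined with any radial denoiser, not only the MMSE denoiser of this particular channel model. What the paper's computation buys in exchange is that the explicit conditional covariance is needed anyway to produce the closed forms: after your symmetry step you must still evaluate $\tfrac{1}{M}\mathbb{E}\big[\|\mathbf{D}^{t}\|_2^2\big]$, and there your plan coincides with the paper's --- Gaussian integration for the known-activity term (which is indeed $M$-independent and matches the single-antenna first term exactly, as you claim) and radial integration with the $s^{M-1}$ Jacobian for the remainder, which is what produces the $\lambda a s^{M}/\Gamma(M+1)$ weight and promotes $\nu_1,\xi_1,\mu_1,\varphi_1$ to $\nu_M,\xi_M,\mu_M,\varphi_M$, recovering Propositions~\ref{P.Prop5} and \ref{P.Prop6} at $M=1$. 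One point worth making explicit when you write this up: your identity $\mathrm{MSE}(\tau_t)=\tfrac{1}{M}\mathbb{E}\big[\operatorname{tr}\operatorname{Cov}(\mathbf{R}\mid\mathbf{\tilde{R}}^{t},G)\big]$ relies on the denoiser being the exact posterior mean under the current effective noise covariance $\tau_t^2\mathbf{I}$, which is precisely what the inductive hypothesis guarantees; the paper invokes the same fact through $\mathbb{E}\big[\mathbf{D}_{t}\mathbf{D}_{t}^{*}\big]=\mathbb{E}_{G}\big[\mathbb{E}_{\mathbf{\tilde{R}}_{t}|G}[\operatorname{Cov}]\big]$, so the induction closes in either formulation.
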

\begin{proof}
See Appendix~\ref{A:diag}.
\end{proof}

Note that $\mathbf{\Sigma}_{0}$ is the noise covariance matrix after the first matched filtering, which is indeed a diagonal matrix with identical diagonal entries. Based on Proposition~\ref{prop.evo}, the MMSE denoiser in \eqref{eq.vamp.nolsf.mmse} can be further simplified as
%xx
\begin{align}\label{eq.vamp.nolsf.mmse.simp}
\mathbb{E}[\mathbf{R}|\mathbf{\tilde{R}}^{t}=\mathbf{\tilde{r}}_{n}^{t}]=\mathbf{\tilde{r}}^t_n\frac{\nu_{M}(\|\mathbf{\tilde{r}}^{t}_n\|_2^2)}{\xi_{M}(\|\mathbf{\tilde{r}}^{t}_n\|_2^2)},
\end{align}
where $\nu_i(s)$ and $\xi_i(s)$ are defined in \eqref{A.func.nu} and \eqref{eq.xi}, respectively,
and the MMSE denoiser in \eqref{eq.vamp.mmse} can be simplified as
%41
\begin{multline}\label{eq.vamp.mmse.simp}
\mathbb{E}[\mathbf{R}|\mathbf{\tilde{R}}^{t}=\mathbf{\tilde{r}}_{n}^{t},G=g_n] \\
= \frac{g_n^2(g_n^2+\tau_t^2)^{-1}\mathbf{\tilde{r}}^{t}_{n}}{1+\frac{1-\lambda}{\lambda}(\frac{g_n^2+\tau_t^2}{\tau_t^2})^{M}\exp(-\Delta\|\mathbf{\tilde{r}}^{t}_n\|_2^2)},
\end{multline}
where $\Delta$ is defined in (\ref{eq:Delta}).
Note that if we let $M=1$, \eqref{eq.vamp.nolsf.mmse.simp} and \eqref{eq.vamp.mmse.simp} reduce to the denoisers for the single-antenna case in \eqref{eq.mmseEst} and \eqref{eq.mmseEst.lsf}. As mentioned before, we can also pre-compute and store the functions $\nu_M(\cdot)$ and $\xi_M(\cdot)$ in \eqref{eq.vamp.nolsf.mmse.simp} as table lookup.

After the AMP algorithm has converged, we use the likelihood ratio test to perform the user activity detection. Recall that $\mathbf{\tilde{R}}^{t}=\mathbf{R}+\mathbf{U}^{t}$ where $\mathbf{U}^{t}$ follows complex Gaussian distribution. For the case where the large-scale fading is unknown, based on $\eqref{eq.vamp.nolsf.distri}$ and $p_{\mathbf{\tilde{R}}^{t}}$ derived in \eqref{A.dist.tildr.nolsf} in Appendix~\ref{A.mmse.v}, the likelihood probabilities given that the user is inactive and active are, respectively
%xx
\begin{align}
p_{\mathbf{\tilde{R}}^{t}|\mathbf{R}}(\mathbf{\tilde{r}}_n^t|\mathbf{R}= \mathbf{0}) &= \frac{\exp\left(-\|\mathbf{\tilde{r}}^{t}_n\|_2^2\tau_t^{-2}\right)}{\pi^M \tau_t^{2M}},\label{eq.vamp.like1}\\
p_{\mathbf{\tilde{R}}^{t}|\mathbf{R}}(\mathbf{\tilde{r}}_n^t|\mathbf{R}\neq
\mathbf{0}) &=
\int_{0}^{\infty}\frac{ag_n^{-\gamma}Q(g_n)}{\pi^M(g_n^2+\tau_t^2)^M} \nonumber \\
 & \qquad \qquad \exp\left(\frac{-\|\mathbf{\tilde{r}}^{t}_n\|_2^2}{g_n^2+\tau_{t}^2}\right)dg_n.\label{eq.vamp.like2}
\end{align}
For the case where the large-scale fading coefficient is known, noting that $\mathbf{R}$ follows a Beroulli-Gaussian distribution, and $\mathbf{\tilde{R}}^{t}$ follows a mixed Gaussian distribution, then the likelihood probabilities can be computed as, respectively
%42 and 43
\begin{align}
p_{\mathbf{\tilde{R}}^{t}|\mathbf{R},G}(\mathbf{\tilde{r}}^{t}_n|\mathbf{R}=\mathbf{0},G=g_n)&= \frac{\exp\left(-\|\mathbf{\tilde{r}}^{t}_n\|_2^2\tau_t^{-2}\right)}{\pi^M \tau_t^{2M}},\label{eq.vamp.like1}\\
p_{\mathbf{\tilde{R}}^{t}|\mathbf{R},G}(\mathbf{\tilde{r}}_n|\mathbf{R}\neq \mathbf{0},G=g_n) &= \frac{\exp\left(-\|\mathbf{\tilde{r}}^{t}_n\|_2^2(\tau_t^2+g_n^2)^{-1}\right)}{\pi^M (\tau_t^2+g_n^2)^{M}}.\label{eq.vamp.like2}
\end{align}
For both cases, we immediately obtain the $\operatorname{LLRs}$ as, respectively
%44
\begin{align}
\label{Foad_LLR_vAMP}
&\operatorname{LLR} = \log\int_{0}^{\infty}\frac{a\tau_t^{2M} g_n^{-\gamma}Q(g_n)}{(g_n^2+\tau_t^2)^{M}}\exp\big(\|\mathbf{\tilde{r}}^{t}_n\|_2^2\Delta\big)dg_n,\\
&\operatorname{LLR}(g_n) = \log\Big(\frac{\tau_t^{2M}}{(\tau_t^2+g_n^2)^M}\exp\big(\|\mathbf{\tilde{r}}^{t}_n\|_2^2\Delta\big)\Big). \label{Foad_LLR_vAMP_lsf}
\end{align}
Observing that LLRs are monotonic in $\|\mathbf{\tilde{r}}^{t}_n\|_2$, we can set a threshold $l_n$ on $\|\mathbf{\tilde{r}}^{t}_n\|_2$ to perform the detection. When the large-scale fading is unknown at the BS, the probabilities of false alarm or missed detection are then given as, respectively
\begin{align}
P_F & = \int_{\|\mathbf{\tilde{r}}^{t}_n\|_2>l_n}\frac{\exp\left(-\|\mathbf{\tilde{r}}^{t}_n\|_2^2\tau_t^{-2}\right)}{\pi^M \tau_t^{2M}}d\mathbf{\tilde{r}}^{t}_n\nonumber\\
&\overset{(a)}{=}1-\frac{1}{\Gamma(M)}\bar{\gamma}\left(M,l_n^2\tau_t^{-2}\right),\label{eq.vamp.pf.lsf}
\end{align}
and
\begin{align}
P_M & =
\int_{\|\mathbf{\tilde{r}}^{t}_n\|_2<l_n}\int_{0}^{\infty}\frac{ag_n^{-\gamma}Q(g_n)}{\pi^M(g_n^2+\tau_t^2)^M}
\nonumber \\
& \qquad \qquad \qquad \qquad \qquad \exp\left(\frac{-\|\mathbf{\tilde{r}}^{t}_n\|_2^2}{g_n^2+\tau_{t}^2}\right)dg_n d\mathbf{\tilde{r}}^{t}_n\nonumber\\
&\overset{(b)}{=}\int_{0}^{\infty}\frac{ag_n^{-\gamma}Q(g_n)}{\Gamma(M)}\bar{\gamma}\left(M,l_n^2(g_n^2+\tau_t^{2})^{-1}\right)dg_n,
\label{eq.vamp.pm.lsf}
\end{align}
where $\bar{\gamma}(\cdot,\cdot)$ is the lower incomplete Gamma function, and $(a)$ and $(b)$ are simply obtained by noticing that the integral of $\mathbf{\tilde{r}}^{t}_n$ can be interpreted as the cumulative distribution function (cdf) of a $\chi^2$ distribution with $2M$ degrees of freedom since $\|\mathbf{\tilde{r}}^{t}_n\|_2^2$ can be regarded as a sum of the squares of $2M$ identical real Gaussian random variables. Using the same approach, when the large scale fading is known to the BS, the probabilities of false alarm and missed detection can be evaluated as
%45 and 46
\begin{align}
P_F(g_n) &= \int_{\|\mathbf{\tilde{r}}^{t}_n\|_2>l_n} \frac{\exp\left(-\|\mathbf{\tilde{r}}^{t}_n\|_2^2\tau_t^{-2}\right)}{\pi^M \tau_t^{2M}}d\mathbf{\tilde{r}}^{t}_n,\nonumber\\
&=1-\frac{1}{\Gamma(M)}\bar{\gamma}\left(M,l_n^2\tau_t^{-2}\right),\label{eq.vamp.pf}\\
P_M(g_n) &= \int_{\|\mathbf{\tilde{r}}^{t}_n\|_2<l_n}\frac{\exp\left(-\|\mathbf{\tilde{r}}^{t}_n\|_2^2(\tau_t^2+g_n^2)^{-1}\right)}{\pi^M (\tau_t^2+g_n^2)^{M}}d\mathbf{\tilde{r}}^{t}_n\nonumber\\
&=\frac{1}{\Gamma(M)}\bar{\gamma}\left(M,l_n^2(g_n^2+\tau_t^{2})^{-1}\right).\label{eq.vamp.pm}
\end{align}
It is easy to verify that when $M=1$, \eqref{eq.vamp.pf.lsf}, \eqref{eq.vamp.pm.lsf}, \eqref{eq.vamp.pf} and \eqref{eq.vamp.pm} reduce to \eqref{eq.pf}, \eqref{eq.pm}, \eqref{eq.pf.lsf} and \eqref{eq.pm.lsf}, respectively.

Based on \eqref{eq.vamp.pf.lsf}, \eqref{eq.vamp.pm.lsf}, \eqref{eq.vamp.pf} and
\eqref{eq.vamp.pm}, we can design the threshold $l_n$ to achieve a trade-off
between the probability of false alarm and probability of missed detection.
%Since the probability of false alarm for multiple-antenna case in
%\eqref{eq.vamp.pf} also does not depend on $g_n$,
The proposed thresholding
strategy in the single-antenna case can still be used in multiple-antenna
scenario.

%%%%%%%%%%%%%%%%%%%%%%%%%%%%%%%%%%
% 	V.B   User Activity Detection by Parallel AMP-MMV
%%%%%%%%%%%%%%%%%%%%%%%%%%%%%%%%%%
\subsection{User Activity Detection by Parallel AMP-MMV}
The outline of the parallel AMP-MMV algorithm is as presented in Algorithm~\ref{Alg:1}. In this section, we adopt the parallel AMP-MMV algorithm for our problem setup; we present the expression of the denoiser, $\eta(\cdot,\mathbf{g},t,i,m)\triangleq [\eta_{t,i,m}(\cdot,g_1),\cdots,\eta_{t,i,m}(\cdot,g_N)]^{T}$ and the probability of a device being active based on the decision at the $m$th AMP-SMV stage, $\harpoonr{\pi}_{nm}$. Here we only discuss the case where the large-scale fading is known. The extension to the scenario where the large-scale fading is unknown is similar.

Since the parallel AMP-MMV algorithm employs $M$ parallel AMP-SMVs, the expression of the scalar denoiser for each AMP-SMV is in the form of the MMSE denoiser for the single-antenna case in \eqref{eq.mmseEst.lsf}. However, instead of using the prior $\lambda$ as the probability of being active for each user, the algorithm has access to a better estimate for the probability of activities as $\harpoonl{\pi}_{nm}$ as algorithm proceeds. Therefore, the expression for the MMSE denoiser can be written as
%49
\begin{equation}
\eta_{t,i,m}(\tilde{x}_{nm}^{t,i},g_n) =
%\mathbb{E}[_{nm}|\tilde{x}_{nm}^{t,i},g_n]
\frac{g_n^2(g_n^2+\tau_{t,i}^2)^{-1}\tilde{x}_{nm}^{t,i}}{1+\frac{1-\harpoonl{\pi}_{nm}}{\harpoonl{\pi}_{nm}}\frac{g_n^2+\tau_{t,i}^2}{\tau_{t,i}^2}\exp\left(\frac{-g_n^2|\tilde{x}_{nm}^{t,i}|^2}{\tau_{t,i}^2(g_n^2+\tau_{t,i}^2)}\right)},
\end{equation}
where $\tilde{x}_{nm}^{t,i}$ and ${x}_{nm}$ are the elements in the $n$th row and the $m$th column of $\tilde{\mathbf{X}}^{t,i}$ and ${\mathbf{X}}$, respectively. At the end of the $i$th outer iteration, i.e., $t=T$, the likelihood probabilities given that the user is inactive or active can be written as
%50 and 51
\begin{align}
p(\tilde{x}_{nm}^{T,i}|X=0,G=g_n) &= \frac{\exp(-|\tilde{x}_{nm}^{T,i}|^2\tau_{T,i}^{-2})}{\pi\tau_{T,i}^2} \label{Foad-P0}, \\
p(\tilde{x}_{nm}^{T,i}|X\neq0,G=g_n) &= \frac{\exp(-|\tilde{x}_{nm}^{T,i}|^2(g_n^2+\tau_{T,i}^2)^{-1})}{\pi(\tau_{T,i}^2+g_n^2)}.\label{Foad-P1}
\end{align}

Further, using equations \eqref{Foad-P0} and \eqref{Foad-P1}, the probability that user $n$ is active based on the decision at the $m$th AMP-SMV can be calculated as
%52
\begin{align} \nonumber
\harpoonr{\pi}_{nm} &= \frac{p(\tilde{x}_{nm}^{T,i}|X \neq 0,
G=g_n)}{p(\tilde{x}_{nm}^{T,i}|X\neq 0,G=g_n)+p(\tilde{x}_{nm}^{T,i}|X = 0,G=g_n)} \\
&= \left(1+\frac{\tau_{T,i}^2+g_n^2}{\tau_{T,i}^2} \exp\left(\frac{-g_n^2 |\tilde{x}_{nm}^{T,i}|^2}{\tau_{T,i}^2(g_n^2+\tau_{T,i}^2)}\right)\right)^{-1}.
\end{align}

After the parallel AMP-MMV is terminated, we use likelihood ratio test to perform the user activity detection.
%Using the fact that $p(\tilde{\mathbf{x}}_n^{T,I}| {\mathbf{x}}_n \not= \mathbf{0},g_n) = \prod_m p(\tilde{x}_{nm}^{T,I}|x_{nm}\not=0,g_n) $ and $p(\tilde{\mathbf{x}}_n^{T,I}| {\mathbf{x}}_n = \mathbf{0},g_n) = \prod_m p(\tilde{x}_{nm}^{T,I}|x_{nm}=0,g_n) $,
It can be shown that the $\operatorname{LLR}$ for user $n$ can be calculated as
%53
\begin{equation}
\operatorname{LLR}(g_n) = \log\left(\frac{\tau_{T,I}^{2M}}{(\tau_{T,I}^2+g_n^2)^M}\exp\left(\frac{-g_n^2\sum_m |{\tilde{x}}_{nm}^{T,I}|^2}{(\tau_{T,I}^2+g_n^2)\tau_{T,I}^2}
\right)\right).
\end{equation}
It can be seen that the LLR expression for AMP-MMV algorithm is in a similar form as in \eqref{Foad_LLR_vAMP}. Therefore, with the same discussion, we can show that the probabilities of false alarm and missed detection can be further simplified in a form similar to \eqref{eq.vamp.pf} and \eqref{eq.vamp.pm}, respectively.
To have complete performance prediction analysis, we also need to determine $\tau^2_{T,I}$ in the parallel AMP-MMV algorithm. However, due to the soft information exchange between the antennas, deriving an analytic state evolution for $\tau^2_{t,i}$ is very challenging. The numerical experiments in Section~\ref{sec.numerical} show that the performance of parallel AMP-MMV is very similar to AMP with vector denoiser. This observation suggests that the parameter $\tau^2_{T,I}$ for the AMP-MMV algorithm should be similar to the final value of $\tau^2_{t}$ in AMP with vector denoiser.

We briefly discuss the complexities of AMP with vector denoiser and AMP-MMV. For both algorithms the computational complexities mainly lie in the matched filtering and residual calculation, which depend on the problem size as $O(NLM)$, at each iteration. The advantage of AMP-MMV is that parallel computation is allowed due to the division of MMV problem into several SMV problems.

%%%%%%%%%%%%%%%%%%%%%%%%%%%%%%%%%%
% 	VI. Simulation Results
%%%%%%%%%%%%%%%%%%%%%%%%%%%%%%%%%%
\section{Simulation Results}
\label{sec.numerical}
We evaluate the performance of the proposed method in a cell of radius $R=1000$m with potential $N=4000$ users among which $200$ are active, i.e., $\lambda=0.05$. The channel fading
parameters are $\alpha=15.3, \beta=37.6$ and $\sigma_{\mathrm{SF}}=8$,
and the background noise is $-169$dBm/Hz over 10MHz.

We first consider the single-antenna case with only statistical knowledge of large-scale fading. Fig.~\ref{fig.roc.nolsf} shows the tradeoff between the probabilities of missed detection and
the false alarm of AMP with MMSE denoiser when the
pilot sequence length is set as $L=800$ and the transmit power is set as $5$dBm, $15$dBm, and $25$dBm. We see that the predicted $P_M$ and $P_F$ match the analysis very well. We also plot a lower bound using $\tau_\infty=\sigma_{w}$.
The lower bounds are very close to the actual performance, indicating that
after convergence AMP is able to almost completely eliminate multiuser
interference; the remaining error is dominated by the background noise.

%%%%%%%%%%%%%%%%%%%%%%%%%%%%%%%%%%
% 	Fig 2
%%%%%%%%%%%%%%%%%%%%%%%%%%%%%%%%%%
\begin{figure}
\centerline{\epsfig{figure=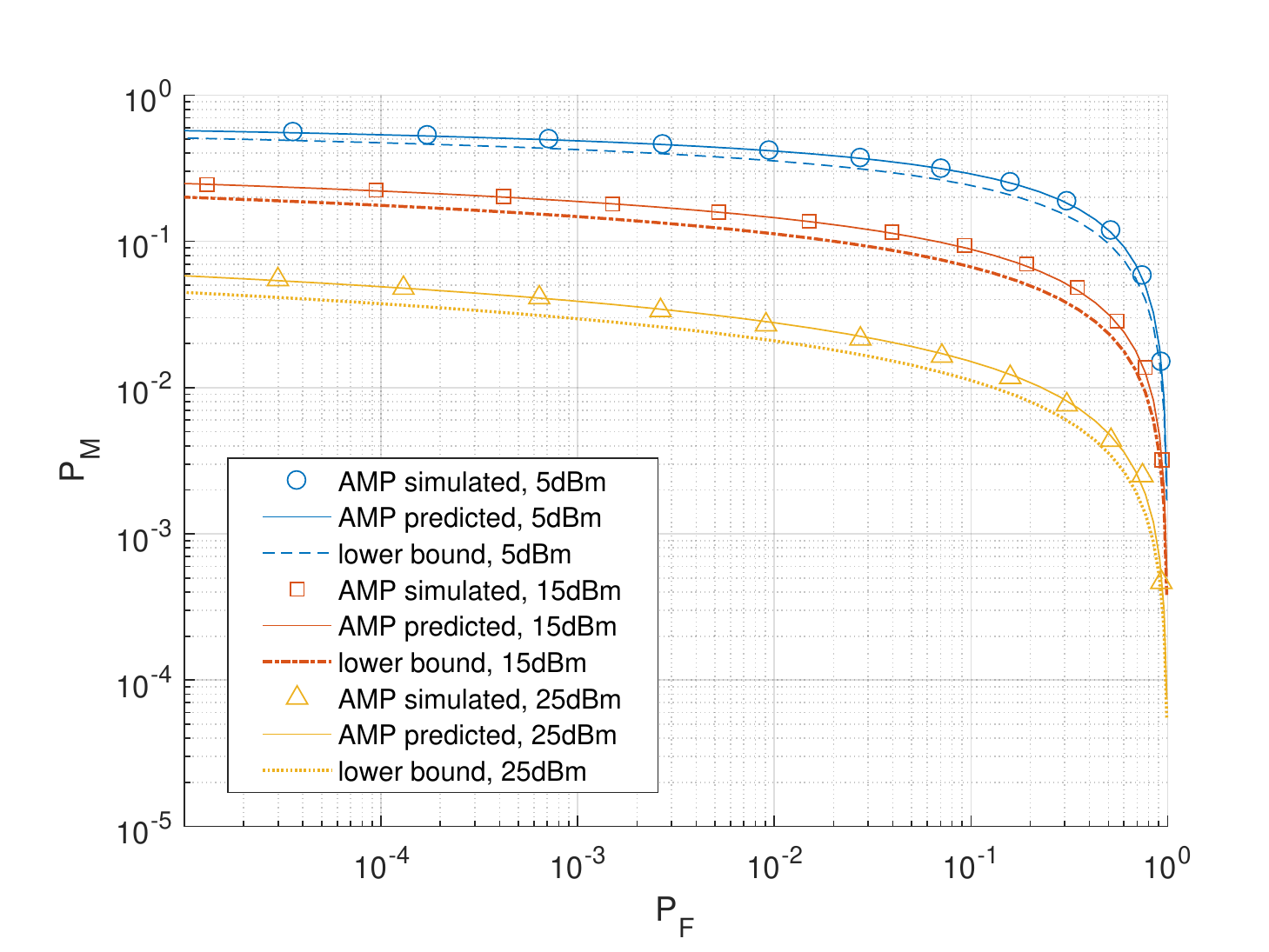,width=9.6cm}}
\caption{Performance of AMP based user activity detection with only statistical knowledge of the large-scale fading.}
\label{fig.roc.nolsf}
\end{figure}

Fig.~\ref{fig.roc} shows the performance of the AMP algorithm with MMSE denoiser when the exact large-scale fading coefficients are known. For comparison, the performance with only statistical knowledge of the large-scale fading is also demonstrated (only simulated performance is included since the predicted performance is almost the same as depicted in Fig.~\ref{fig.roc.nolsf}). Fig.~\ref{fig.roc} shows that the predicted curves match the simulated curves very well. More interestingly, it indicates that the performance improvement for knowing the exact large-scale fading coefficients is negligible
%Since that both cases employ the same thresholding strategy with only different denoiser designs,
%this result indicates that $\mathbb{E}[\operatorname{Var}(X|\tilde{X}^{t})]\approx\mathbb{E}[\operatorname{Var}(X|\tilde{X}^{t},G)]]$.
which suggests that knowing the distribution of the large-scale fading (rather than the exact value) is already enough for good user activity detection performance.

%%%%%%%%%%%%%%%%%%%%%%%%%%%%%%%%%%
% 	Fig 3
%%%%%%%%%%%%%%%%%%%%%%%%%%%%%%%%%%
\begin{figure}
\centerline{\epsfig{figure=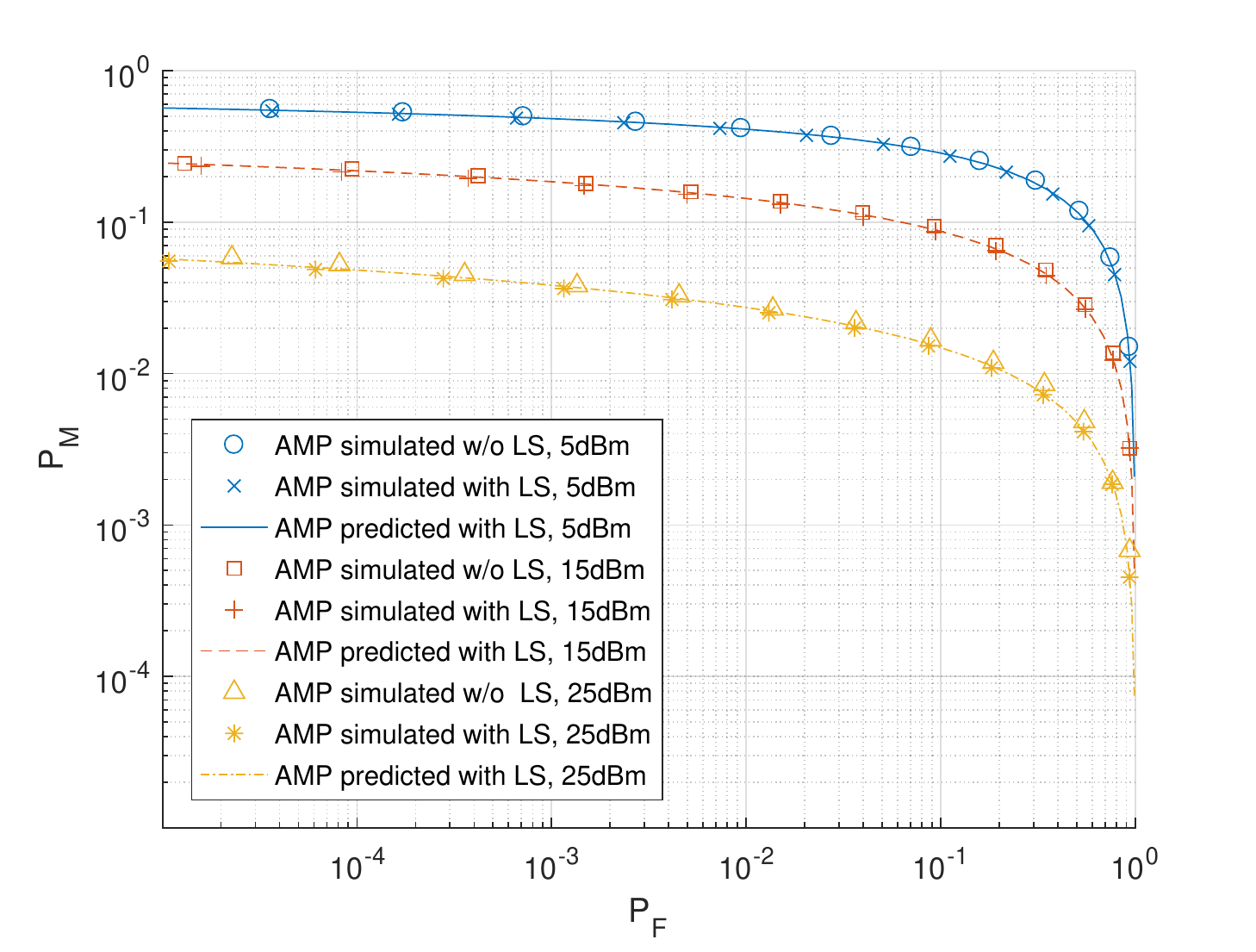,width=9.6cm}}
\caption{Performance of AMP based user activity detection with knowledge of the large-scale fading.}
\label{fig.roc}
\end{figure}

The next simulation compares the AMP algorithm with MMSE denoiser with two other algorithms widely used in compressed sensing: CoSaMP \cite{Needell2009}, and AMP but with soft thresholding denoiser \cite{Donoho2009}. Compare to AMP, CoSaMP is based on the matching pursuit technique. Compare to AMP with MMSE denoiser, AMP with soft thresholding does not exploit the statistical knowledge of $x_n$.
%In Fig.~\ref{fig.compare}, we only consider the scenario where the large-scale fading coefficients are known.
Fig.~\ref{fig.compare} shows that AMP with MMSE denoiser outperforms both CoSaMP and AMP with soft thresholding denoiser. This is partly due to the fact that both CoSaMP and AMP with soft thresholding denoiser do not exploit the the statistical knowledge of $x_n$. Note that AMP with soft thresholding implicitly solves the LASSO problem \cite{DonohoMalekiMontanari2011,Maleki2013}, i.e., the sparse signal recovery problem as an $\ell_1$-penalized least squares optimization. Therefore, the results in Fig.~\ref{fig.compare} indicate that AMP with MMSE denoiser also outperforms LASSO.

%%%%%%%%%%%%%%%%%%%%%%%%%%%%%%%%%%
% 	Fig 4
%%%%%%%%%%%%%%%%%%%%%%%%%%%%%%%%%%
\begin{figure}
\centerline{\epsfig{figure=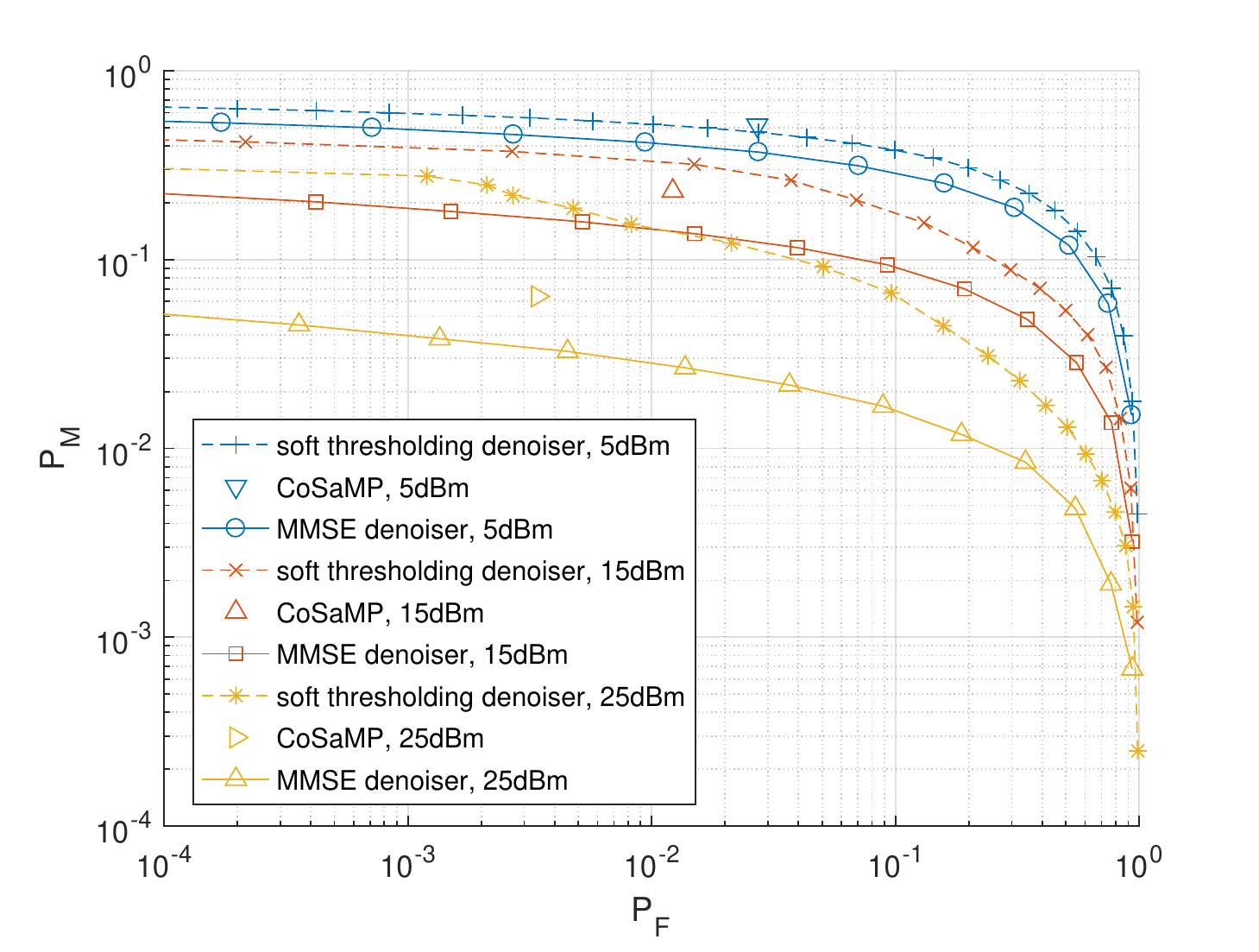,width=9.6cm}}
\caption{Performance comparison of AMP with MMSE denoiser, AMP with soft threshoding denoiser, and CoSaMP.}
\label{fig.compare}
\end{figure}

Fig.~\ref{fig.pfpm} compares the performance of the AMP algorithm with MMSE denoiser and the AMP algorithm with soft
thresholding denoiser as function of transmit power and pilot length.
For convenience, we set $P_F=P_M$ by properly choosing
the threshold $l$. We observe first that the MMSE denoiser outperforms
soft thresholding denoiser significantly,
%as expected from Fig.~\ref{fig.compare},
but more importantly, we observe that the minimum $L$ needed to drive $P_F$ and $P_M$ to
zero as transmit power increases is between $300$ and $400$ for the MMSE denoiser, whereas the minimum $L$ is between $600$ and $800$ for the soft threshloding denoiser, indicating the clear
advantage of accounting for channel statistics in user activity
detector design.

%%%%%%%%%%%%%%%%%%%%%%%%%%%%%%%%%%
% 	Fig 5
%%%%%%%%%%%%%%%%%%%%%%%%%%%%%%%%%%
\begin{figure}
\centerline{\epsfig{figure=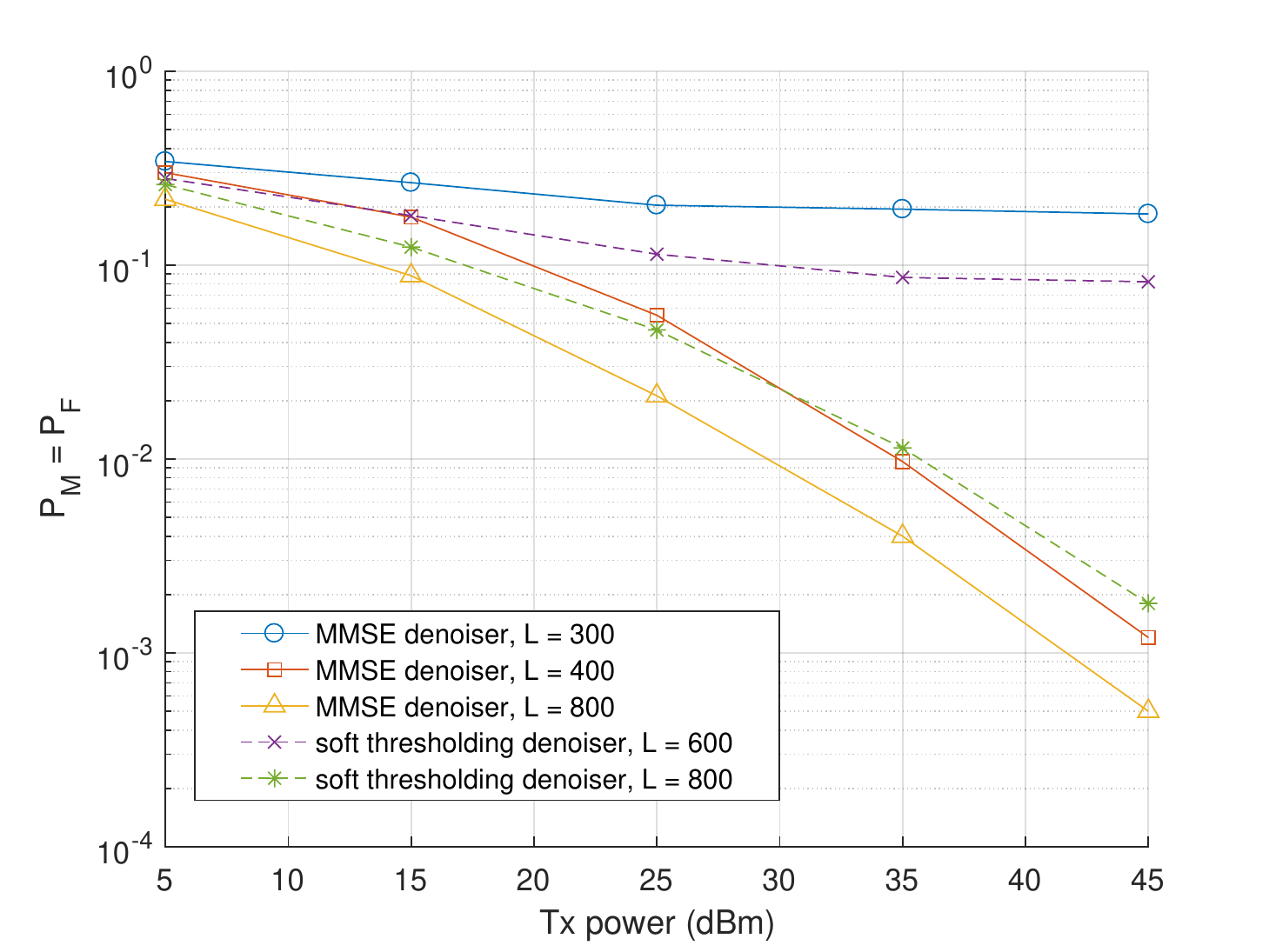,width=9.6cm}}
\caption{Impact of transmit power and length of pilot on user activity detection performance: MMSE denoiser vs. soft threholding denoiser.}
\label{fig.pfpm}
\end{figure}

Finally, we consider the multiple-antenna case assuming the knowledge of large-scale fading coefficients. Fig.~\ref{fig.mmv} illustrates the probabilities of false alarm and missed detection under different numbers of antennas for both AMP with vector denoiser and parallel AMP-MMV algorithms. For comparison, the single-antenna $M=1$ case is also included. Fig.~\ref{fig.mmv} shows that for AMP with vector denoiser, the simulated results match the predicted results very well. Further, it shows that the performances of AMP with vector denoiser and parallel
AMP-MMV
are approximately the same, indicating that although these two algorithms employ different strategies, they both exploit the statistical knowledge of the channel in the same way, resulting in similar performances.

%%%%%%%%%%%%%%%%%%%%%%%%%%%%%%%%%%
% 	Fig 6
%%%%%%%%%%%%%%%%%%%%%%%%%%%%%%%%%%
\begin{figure}
\centerline{\epsfig{figure=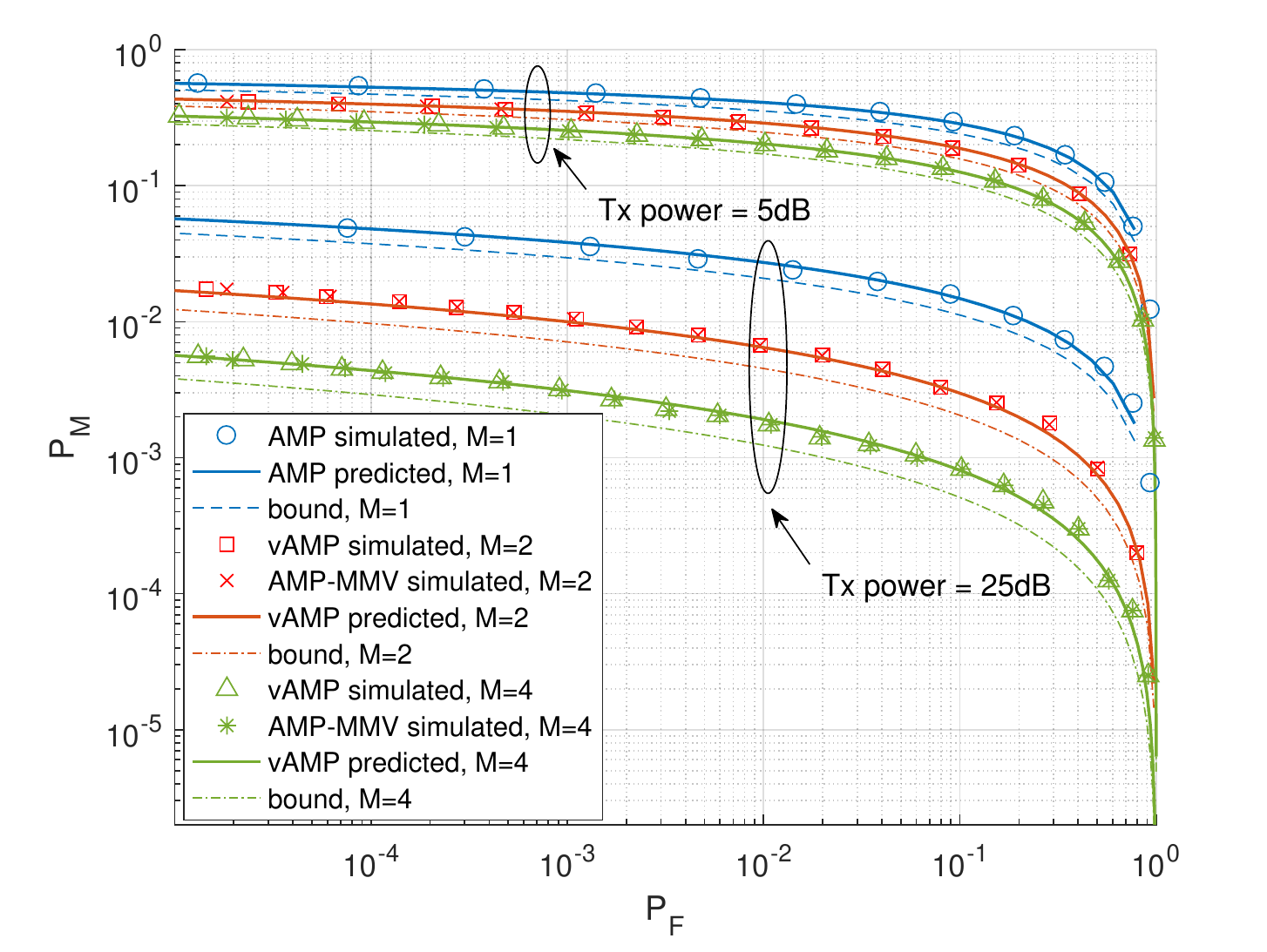,width=9.6cm}}
\caption{Performance of AMP with vector denoiser (vAMP) and AMP-MMV for user activity detection in the multiple-antenna case.}
\label{fig.mmv}
\end{figure}

The impact of the pilot length $L$ and the number of antennas $M$ on the probabilities of false alarm and missed detection as the transmit power increases is shown in Fig.~\ref{fig.pfpm.mmv}. We set $L=300, 600$, and $M=1, 2, 4$. We make $P_F=P_M$ for convenient comparison by properly choosing the threshold $l$.
Note that in the scenario where the exact large-scale fading coefficients are known, different users have the same probabilities of false alarm but different probabilities of missed detection. Thus we have to use the average probability of missed detection over all users. Fig.~\ref{fig.pfpm.mmv} shows that increasing $L$ or $M$ brings significant improvement. Specifically, when $L=300, M=1$, $P_F$ and $P_M$ tend to remain unchanged as the transmit power increases. However, by either increasing $L$ or increasing $M$, $P_F$ and $P_M$ can be driven to zero as the transmit power increases. In other words, the minimum $L$ required to drive $P_F$ and $P_M$ to zeros can be reduced by increasing $M$.

%%%%%%%%%%%%%%%%%%%%%%%%%%%%%%%%%%
% 	Fig 7
%%%%%%%%%%%%%%%%%%%%%%%%%%%%%%%%%%
\begin{figure}
\centerline{\epsfig{figure=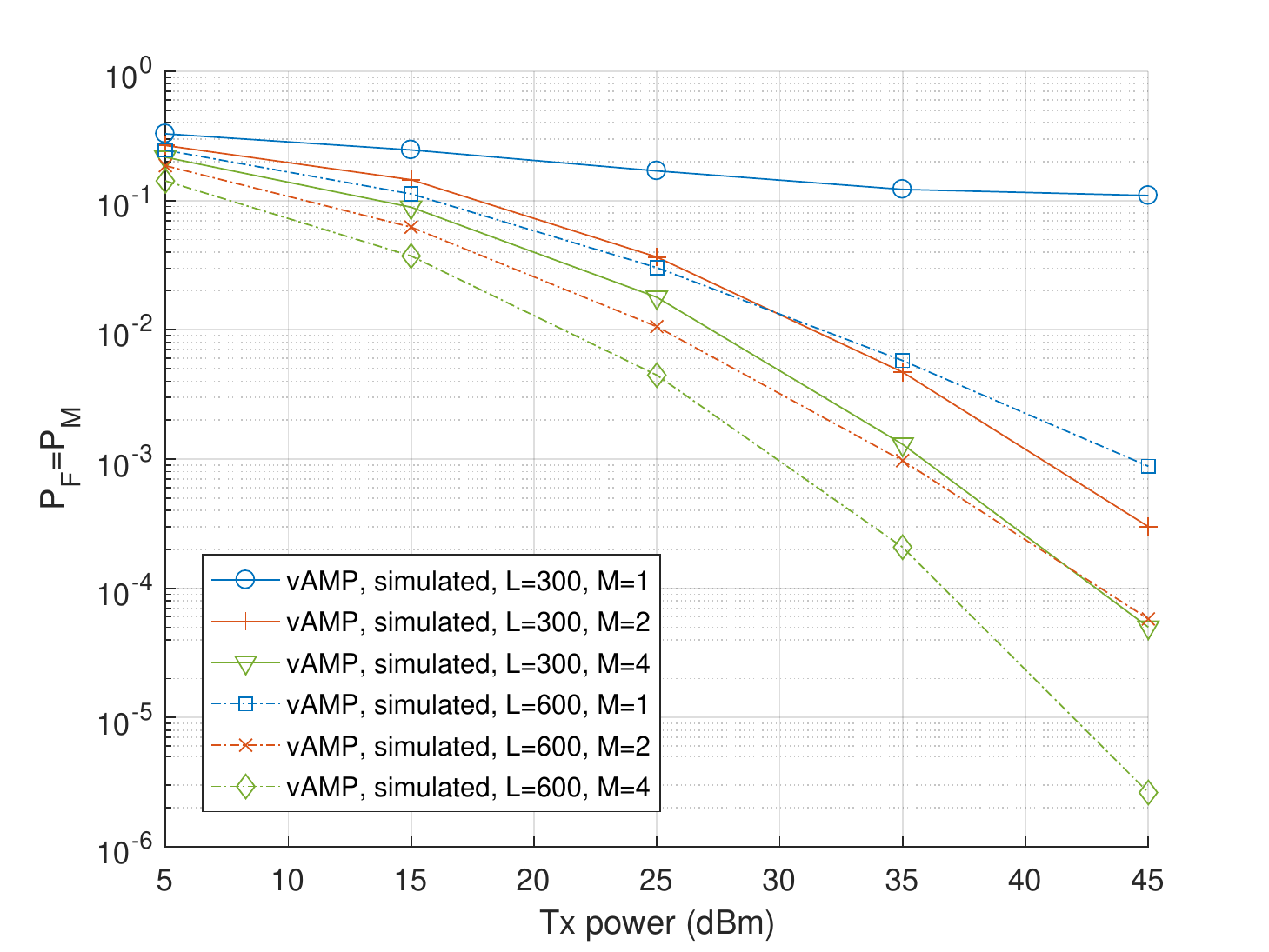,width=9.6cm}}
\caption{Impact of length of pilot, number of antennas, and transmit power.}
\label{fig.pfpm.mmv}
\end{figure}

%%%%%%%%%%%%%%%%%%%%%%%%%%%%%%%%%%
% 	Conclusion
%%%%%%%%%%%%%%%%%%%%%%%%%%%%%%%%%%
\section{Conclusion}
\label{sec.con}
This work shows that compressed sensing is a viable strategy for sporadic device activity detection for massive connectivity applications with random non-orthogonal signature sequences. Specifically, we propose an AMP-based user activity detection algorithm by exploiting the statistics of the wireless channel for the uplink of a cellular system with a large number of potential users but only a small fraction of them
are active at any time slot.
We show that by using the state evolution, a performance characterization
in terms of the probabilities of false alarm and missed detection
can be accurately predicted.
In particular, we consider both
cases in which the BS is equipped with a single antenna or multiple antennas.
We present the designs of the MMSE denoisers in the scenarios where the large-scale fading is either available exactly or when only its statistics is available at the BS. For the multiple-antenna case, we adopt two AMP algorithms, AMP with vector denoiser and parallel AMP-MMV, to tackle the detection problem. We derive a performance analysis for both the single-antenna case and the multiple-antenna case. Simulation results validate the analysis, and show that exploiting the statistics of the channel in AMP denoiser design can significantly improve the detection threshold, and further deploying multiple antennas at the BS can also bring significant performance improvement.

\appendix
\subsection{Proof of Proposition \ref{P.Prop1}}
%%%%%%%%%%%%%%%%%%%%%%%%%%%%%%%%%%
% 	APPENDIX A
%%%%%%%%%%%%%%%%%%%%%%%%%%%%%%%%%%
\label{A:probG}
Denote $d$, $x$, $y$, $z$ as the distance from a user to the BS, the shadowing in dB, the shadowing in linear scale, and the path-loss, respectively. Note that in appendices we slightly abuse some notations appeared in the paper due to the limited alphabet. However this should not cause confusion since they only used for derivations in the appendices. Denote $g\triangleq yz$ as the large-scale fading. We derive the pdfs of $d$, $x$, $y$, $z$ and $g$ as follows.

Assuming that all users are uniformly distributed in the cell with radius $R$, the pdf of $d$ is
% 55
\begin{align}
p_D(d)=\frac{2d}{R^2},\quad 0<d<R.
\end{align}
Since $x=-20\log_{10}(y)$ follows Gaussian distribution with zero mean and variance $\sigma_{\mathrm{SF}}^2$ , the pdf of $y$ can be derived as
%56
\begin{align}
p_Y(y) =\frac{20}{\ln(10)\sqrt{2\pi}\sigma_{\mathrm{SF}}y}\exp\left(-\frac{200\ln^2(y)}{\ln^2(10)\sigma_{\mathrm{SF}}^2}\right).
\end{align}
By using $z=10^{-(\alpha+\beta\log_{10}d)/20}$, we get the pdf of $z$ as
%57
\begin{align}
p_Z(z)=\frac{40}{R^2\beta}10^{-2\alpha/\beta}z^{-40/\beta-1},
\end{align}
where $z>10^{-(\alpha+\beta\log_{10}R)/20}$. The pdf of $g$ is
%58
\begin{align}
p_G(g)&=\int |y|^{-1}p_Y(y)p_Z\left(y^{-1}g\right)dy\triangleq ag^{-\gamma}Q(g),
\end{align}
where $Q(g)\triangleq \int_{b\ln g+c}^{\infty}\exp(-s^2)ds$, $\gamma \triangleq 40/\beta+1$, and $a, b$ and $c$ are constants given in the proposition.
%59 & 60
%\begin{align}
%a&=\frac{40}{R^2\beta\sqrt{\pi}}\exp\left(\frac{2(\ln{10})^2\sigma_{\mathrm{SF}}^2}{\beta^2}-\frac{2\ln(10)\alpha}{\beta}\right),\\
%b&=\frac{-10\sqrt{2}}{(\ln10)\sigma_{\mathrm{SF}}},\quad c=\frac{-\alpha-\beta\log_{10}(R)}{\sqrt{2}\sigma_{\mathrm{SF}}}-\frac{20}{\beta b}.
%\end{align}

%%%%%%%%%%%%%%%%%%%%%%%%%%%%%%%%%%
% 	APPENDIX B
%%%%%%%%%%%%%%%%%%%%%%%%%%%%%%%%%%
\subsection{Proof of Proposition \ref{P.Prop2}}
\label{A:probH}
Denote $g$ as the large-scale fading following \eqref{eq.distr.of.lsf}, and $x=x_R+jx_I$ as the Rayleigh fading, where $x_R$ and $x_I$ are the real and imaginary parts of $x$, respectively. Denote $h=h_R+jh_I$ as the channel coefficient accounting for both large-scale fading $g$ and Rayleigh fading $x$, i.e., $h=gx$. Then the pdf of $h$ is
% 61
\begin{align}
p_H(h)&=\int p_{H_R,H_I,G}(h_R,h_I,g)dg\nonumber\\
&=\int p_G(g)p_{X_R,X_I}\left(\frac{h_R}{g},\frac{h_I}{g}\right)\Big|\frac{\partial (x_R,x_I,g)}{\partial (h_R,h_I,g)}\Big| dg\nonumber\\\label{A.eq.h}
&=\frac{a}{\pi}\int_{0}^{\infty}g^{-\gamma-2}Q(g)\exp\left(\frac{-|h|^2}{g^2}\right)dg,
\end{align}
where $p_{H_R,H_I,G}(h_R,h_I,g)$ and $p_{X_R,X_I}(x_R,x_I)$ are the joint pdfs, and $|\frac{\partial (x_R,x_I,g)}{\partial (h_R,h_I,g)}|=g^{-2}$ is the Jacobian determinant. When
%the large-scale fading
$g$ is known,
%it is easy to see that
$p_{H|G}(h|g)$ follows the complex Gaussian distribution with zero mean and variance $g^2$.

%%%%%%%%%%%%%%%%%%%%%%%%%%%%%%%%%%
% 	APPENDIX C
%%%%%%%%%%%%%%%%%%%%%%%%%%%%%%%%%%
\subsection{Proof of Proposition \ref{P.Prop3}}
\label{A:mmse}
We omit superscript $t$ and  subscript $n$ for notation simplicity. The conditional expectation of $X$ given $\tilde{X}=\tilde{x}$ can be expressed as
\begin{align}
\mathbb{E}[X&|\tilde{X}=\tilde{x}]\nonumber\\
=&\int x \frac{p_{X}(x)}{p_{\tilde{X}}(\tilde{x})}p_{\tilde{X}|X}(\tilde{x}|x)dx\nonumber\\
=&\iint_0^{\infty}f(g)x\exp\left(\frac{-|x|^2}{g^2}+\frac{-|\tilde{x}-x|^2}{\tau^2}\right)dgdx\nonumber\\
=&\iint_0^{\infty}f(g)x\exp\left(\frac{-|\tilde{x}|^2}{g^2+\tau^2}+\frac{-|x-\delta\tilde{x}|^2}{\delta \tau^2}\right)dgdx\nonumber\\
\overset{(a)}{=}&\frac{\lambda a\tilde{x}}{p_{\tilde{X}}(\tilde{x})\pi}\int_0^{\infty}\exp\left(\frac{-|\tilde{x}|^2}{g^2+\tau^2}\right)\frac{Q(g)g^{2-\gamma}}{(g^2+\tau^2)^2}dg,\label{A.eq.mmse}
\end{align}
where $f(g)\triangleq \lambda aQ(g)/(p_{\tilde{X}}(\tilde{x})\pi^2\tau^2 g^{\gamma+2})$, $\delta\triangleq g^2/(g^2+\tau^2)$, $(a)$ is obtained by using Gaussian integral of $x$. By substituting the expression of $p_{\tilde{X}}(\tilde{x})$ derived in Appendix \ref{A:probHN}, we get $\mathbb{E}[X|\tilde{X}=\tilde{x}]$ as in \eqref{eq.mmseEst}.

\subsection{Proof of Proposition \ref{P.Prop4}}
\label{A:probHN}
We omit superscript $t$ and subscript $n$ in the following. Note that $p_{\tilde{X}|X}(\tilde{x}|X=0)=p_{W}(\tilde{x})$ where random variable $W$ is defined as $W\triangleq\tau V$, and $p_{\tilde{X}|X}(\tilde{x}|X\neq 0)=p_{Y}(\tilde{x})$ where random variable $Y$ is defined as $Y\triangleq H+W$ with $H$ following the distribution in \eqref{eq.distr.of.channel}.
Since $V$ follows complex Gaussian distribution with zero mean and unit variance, we get
\begin{align}\label{A.eq.w}
p_{\tilde{X}|X}(\tilde{x}|X=0)=\frac{1}{\pi\tau^2}\exp\left(\frac{-|\tilde{x}|^2}{\tau^2}\right),
\end{align}
To compute $p_{\tilde{X}|X}(\tilde{x}|X\neq 0)$, we derive $p_{Y}(y)$ as follows
\begin{align}\label{A:eq.y}
p_{Y}(y)&\overset{(a)}{=}\int p_{H}(y-w)p_{W}(w)dw\nonumber\\
&\overset{(b)}{=}\int_{0}^{\infty}\frac{ag^{-\gamma}Q(g)}{\pi(g^2+\tau^2)}\exp\left(\frac{-|y|^2}{g^2+\tau^2}\right)dg,
\end{align}
where $(a)$ is obtained by $p_{Y,W}(y,w)=p_{H}(y-w)p_{W}(w)$, and $(b)$ is obtained by substituting \eqref{A.eq.h}.
Combine the results in \eqref{A.eq.w} and \eqref{A:eq.y} with $p_{\tilde{X}|X}(\tilde{x}|X\neq 0)=p_{Y}(\tilde{x})$, we get
\begin{multline}\label{A.eq.dist.xtild}
p_{\tilde{X}}(\tilde{x})=\frac{1-\lambda}{\pi\tau^2}\exp\left(\frac{-|\tilde{x}|^2}{\tau^2}\right)
\\
+\int_{0}^{\infty}\frac{\lambda aQ(g)\exp\left({-|\tilde{x}|^2/(g^2+\tau^2)}\right)}{\pi g^{\gamma}(g^2+\tau^2)}dg.
\end{multline}

\subsection{Proof of Proposition \ref{P.Prop5}}
\label{A:MSE}
We omit superscript $t$ and subscript $n$ for notation simplicity. The conditional variance of $X$ given $\tilde{X}=\tilde{x}$ is
\begin{align}\label{A.eq.var}
\mathrm{Var}(X|\tilde{X}=\tilde{x})&=\mathbb{E}\big[|X|^2\big|\tilde{X}=\tilde{x}\big]-\big|\mathbb{E}[X|\tilde{X}=\tilde{x}]\big|^2.
\end{align}
Since we have derived $\mathbb{E}[X|\tilde{X}=\tilde{x}]$ in \eqref{A.eq.mmse}, we only need to derive $\mathbb{E}\big[|X|^2\big|\tilde{X}=\tilde{x}\big]$, which can be expressed as
\begin{align}
\mathbb{E}\big[|X&|^2\big|\tilde{X}=\tilde{x}\big]\nonumber\\
= &\int\frac{|x|^2p_{X}(x)}{p_{\tilde{X}}(\tilde{x})}p_{\tilde{X}|X}(\tilde{x}|x)dx\nonumber\\
=&\iint_0^{\infty}
f(g)|x|^2\exp\left(\frac{-|x|^2}{g^2}+\frac{-|\tilde{x}-x|^2}{\tau^2}\right)dgdx\nonumber\\
=&\iint_0^{\infty}f(g)|x|^2\exp\left(\frac{-|\tilde{x}|^2}{g^2+\tau^2}+\frac{-|x-\delta\tilde{x}|^2}{\delta \tau^2}\right)dgdx\nonumber\\
\overset{(a)}{=}&\int_0^{\infty}\frac{\lambda aQ(g)\hat{f}(g)}{g^{\gamma}p_{\tilde{X}}(\tilde{x})\pi}\exp\left(\frac{-|\tilde{x}|^2}{g^2+\tau^2}\right)dg,
\end{align}
where $f(g)$ is defined in Appendix \ref{A:mmse}, $(a)$ is obtained by using Gaussian integral of $x$,
$\hat{f}(g)\triangleq\tau^2g^2/(g^2+\tau^2)^2+|\tilde{x}|^2g^4/(g^2+\tau^2)^3$. By using \eqref{A.eq.var}, \eqref{A.eq.dist.xtild}, \eqref{A.eq.mmse}, and $\mathrm{MSE}(\tau)=\int \mathrm{Var}(X|\tilde{X}=\tilde{x})p_{\tilde{X}}(\tilde{x})d\tilde{x}$ with some algebraic manipulations, we obtain \eqref{eq.mseofmmse}.

\subsection{Proof of Proposition \ref{P.Prop6}}
\label{A.MSE.lsf}
Similar to Appendix \ref{A:MSE}, we first derive the conditional expectation $\mathbb{E}\big[|X|^2|\tilde{X}=\tilde{x},G=g\big]$ as
\begin{align}
\mathbb{E}\big[|X|^2|\tilde{X}=\tilde{x},G=g\big]=\frac{\lambda\hat{f}(g)\exp(-|\tilde{x}|^2/(g^2+\tau^2))}{p_{\tilde{X}|G}(\tilde{x}|g)\pi},
\end{align}
where $\hat{f}(g)$ is defined in Appendix \ref{A:MSE}. Then based on \eqref{eq.mmseEst.lsf} and
$\mathrm{Var}(X|\tilde{X}=\tilde{x},G=g)=\mathbb{E}\big[|X|^2\big|\tilde{X}=\tilde{x},G=g\big]-\big|\mathbb{E}[X|\tilde{X}=\tilde{x},G=g]\big|^2$, we have
\begin{align}
\mathrm{MSE} &= \iint \mathrm{Var}(X|\tilde{X}=\tilde{x},G=g)p_{\tilde{X}|G}(\tilde{x}|g)p_{G}(g)d\tilde{x}dg\nonumber\\
&\overset{(a)}{=}\int_0^{\infty}\left(\frac{\lambda g^2\tau^2}{g^2+\tau^2}+\frac{\lambda g^4\big(1-\varphi_1(g^2\tau^{-2})\big)}{g^2+\tau^2}\right)p_{G}(g)dg,
\end{align}
where $(a)$ is obtained by using Gaussian integral over $\tilde{x}$. After plugging $p_{G}(g)$, we finally get \eqref{eq.mseofmmse.lsf}.

\subsection{Proof of Proposition \ref{P.Prop8}}
\label{A.mmse.v}
We omit superscript $t$ and subscript $n$. For the case where the large scale fading is unknown, the proof is similar to Appendix \ref{A:mmse} except that we need to deal with random vectors rather than random scalars. By using
\begin{align}\label{A:eq.mmse.v}
\mathbb{E}[\mathbf{R}|\mathbf{\tilde{R}}=\mathbf{\tilde{r}}]
&=\int \mathbf{r} \frac{p_{\mathbf{R}}(\mathbf{r})}{p_{\mathbf{\tilde{R}}}(\mathbf{\tilde{r}})}p_{\mathbf{\tilde{R}}|\mathbf{R}}(\mathbf{\tilde{r}}|\mathbf{r})d\mathbf{r},
\end{align}
where $p_{\mathbf{\tilde{R}}}(\mathbf{\tilde{r}})$ can be derived as
\begin{multline}\label{A.dist.tildr.nolsf}
p_{\mathbf{\tilde{R}}}(\mathbf{\tilde{r}})=\frac{1-\lambda}{(\pi\tau^2)^{M}}\exp\left(\frac{-\|\mathbf{\tilde{r}}\|_2^2}{\tau^2}\right) \\
+\int_{0}^{\infty}\frac{\lambda aQ(g)\exp\left({-\|\mathbf{\tilde{r}}\|_2^2/(g^2+\tau^2)}\right)}{\pi^{M} g^{\gamma}(g^2+\tau^2)^{M}}dg.
\end{multline}
By plugging $p_{\mathbf{\tilde{R}}}(\mathbf{\tilde{r}})$ and $p_{\mathbf{R}}(\mathbf{r})$ into \eqref{A:eq.mmse.v}, and using multivariate Gaussian integral of $\mathbf{r}$ with some algebraic  manipulations, we can obtain \eqref{eq.vamp.nolsf.mmse}. For the case where the large-scale fading is known, the conditional expectation can be found in \cite{Kim2011}.

\subsection{Proof of Proposition \ref{prop.evo}}
\label{A:diag}
Since the proofs for both cases are similar, in the following we focus on the case where the large-scale fading is known. We omit subscript $n$, and use $t$ as \emph{subscript} instead of superscript for convenience. We use induction by assuming $\mathbf{\Sigma}_{t}=\tau_t^2\mathbf{I}$ holds. To evaluate the right hand side of \eqref{eq.stateevo.multiple}, we first derive the distribution of $\mathbf{\tilde{R}}_{t}$ based on \eqref{eq.vamp.z} as
\begin{align}
p_{\mathbf{\tilde{R}}_{t}|G}(\mathbf{\tilde{r}}_t|g)=\frac{\lambda\exp\big(-\|\mathbf{\tilde{r}}_t\|_2^2(g^2+\tau_t^2)^{-1}\big)}{\pi^{M}(g^2+\tau_t^2)^M}\phi(\mathbf{\tilde{r}}_t),
\end{align}
where $\phi(\mathbf{\tilde{r}}_t)\triangleq 1+(1-\lambda)(1+g^2\tau_t^{-2})^{M}\exp(-\Delta\|\mathbf{\tilde{r}}_t\|_2^2)/\lambda$, and $\Delta$ is defined in \eqref{eq:Delta}. We then compute the conditional covariance matrix of $\mathbf{R}_t$ given $\mathbf{\tilde{R}}_t=\mathbf{\tilde{r}}_t$ and $G=g$ as
\begin{align}\label{A:cov_matrix}
\mathrm{Cov}=&\mathbb{E}[\mathbf{R}_{t}^T(\mathbf{R}_{t}^T)^{*}|\mathbf{\tilde{R}}_t=\mathbf{\tilde{r}}_t,G=g]\nonumber\\
&-\mathbb{E}[\mathbf{R}_{t}^T|\mathbf{\tilde{R}}_t=\mathbf{\tilde{r}}_t,G=g]\big(\mathbb{E}[\mathbf{R}_{t}^T|\mathbf{\tilde{R}}_t=\mathbf{\tilde{r}}_t,G=g]\big)^{*}\nonumber\\
=&\frac{g^2\tau_t^2\phi^{-1}(\mathbf{\tilde{r}}_t)}{g^2+\tau_t^2}\mathbf{I}+\frac{\phi^{-1}(\mathbf{\tilde{r}}_t)-\phi^{-2}(\mathbf{\tilde{r}}_t)}{g^{-4}(g^2+\tau_t^2)^2}\mathbf{\tilde{r}}_{t}^{T}(\mathbf{\tilde{r}}_{t}^{T})^{*}.
\end{align}
Then by taking the expectation over $\mathbf{\tilde{R}}_{t}$, we obtain
\begin{align}\label{A:cov_matrix_int}
\mathbb{E}_{\mathbf{\tilde{R}}_{t}|G}[\mathrm{Cov}] = \int  p_{\mathbf{\tilde{R}}_{t}|G}(\mathbf{\tilde{r}}_t|g)\mathrm{Cov}d\mathbf{\tilde{r}}_t,
\end{align}
which is a diagonal matrix due to fact that the off-diagonal element is an integral of an odd function over a symmetric interval, which is zero. Furthermore, it is easy to observe from the integral that the diagonal elements of $\mathbb{E}_{\mathbf{\tilde{R}}_{t}|G}[\mathrm{Cov}]$ are identical. Note that when MMSE denoiser is employed, the right hand side of \eqref{eq.stateevo.multiple} can be rewritten as $\mathbb{E}\big[\mathbf{D}_{t}\mathbf{D}_{t}^{*}\big]=\mathbb{E}_{G}\big[\mathbb{E}_{\mathbf{\tilde{R}}_{t}|G}[\mathrm{Cov}]\big]$, which leads to the result that $\mathbb{E}\big[\mathbf{D}_{t}\mathbf{D}_{t}^{*}\big]$ is also a diagonal matrix with identical diagonal elements.

In the following, we derive an explicit expression of $\mathbb{E}\big[\mathbf{D}_{t}\mathbf{D}_{t}^{*}\big]$. To this end, we first compute $\mathbb{E}_{\mathbf{\tilde{R}}_{t}|G}[\mathrm{Cov}]$. Denote $c_{i}$ as the $i$th diagonal entry of $\mathbb{E}_{\mathbf{\tilde{R}}_{t}|G}[\mathrm{Cov}]$, and $\tilde{r}_{t,i}$ is the $i$th entry of $\mathbf{\tilde{r}}_{t}$. Based on \eqref{A:cov_matrix} and \eqref{A:cov_matrix_int}, we have
\begin{align}
c_i =&\int \frac{g^2\tau_t^2}{g^2+\tau_t^2}\cdot\frac{\lambda\exp\big(-\|\mathbf{\tilde{r}}_t\|_2^2/(g^2+\tau_t^2)\big)}{\pi^{M}(g^2+\tau_t^2)^M}d\mathbf{\tilde{r}}_t\nonumber\\
& \qquad
+ \int \frac{|\tilde{r}_{t,i}|^2\big(1-\phi^{-1}(\mathbf{\tilde{r}}_t)\big)}{g^{-4}(g^2+\tau_t^2)^2}\cdot \nonumber \\
& \qquad \qquad
\frac{\lambda\exp\big(-\|\mathbf{\tilde{r}}_t\|_2^2/(g^2+\tau_t^2)\big)}{\pi^{M}(g^2+\tau_t^2)^M}d\mathbf{\tilde{r}}_t\nonumber\\
=&\frac{\lambda g^2\tau_t^2}{g^2+\tau_t^2} + \frac{\lambda g^4}{g^2+\tau_t^2}\left(1-\frac{\varphi_{M}(g^2\tau_t^{-2})}{\Gamma(M+1)}\right),
\end{align}
where the fist term of the last step is obtained by using Gaussian integral, the second term is obtained by integrating in spherical coordinates instead of Cartesian coordinates, and function $\varphi_i(s)$ is defined in \eqref{A.func.varphi}. As expected, $c_i$ does not depend on $i$, indicating that the diagonal elements are indeed identical. By replacing $g$ by $G$ and $c_i$ by $C$, we get $\mathbb{E}\big[\mathbf{D}_{t}\mathbf{D}_{t}^{*}\big]=\mathbb{E}_G[C]\mathbf{I}$, where $C$ is a random variable depends on $G$, and
\begin{multline}
\mathbb{E}_G[C] =\int_{0}^{\infty}\frac{aQ(g)}{g^{\gamma}}\cdot\frac{\lambda g^2\tau_t^2}{g^2+\tau_t^2}dg \\
+\int_{0}^{\infty}\frac{aQ(g)}{g^{\gamma}}\cdot\frac{\lambda g^4}{g^2+\tau_t^2}\left(1-\frac{\varphi_M(g^2\tau_t^{-2})}{\Gamma(M+1)}\right)dg.
\end{multline}
The state evolution in \eqref{eq.stateevo.multiple} is then simplified to
\begin{align}\label{eq.vamp.se}
\Sigma_{t+1}=\sigma_w^2\mathbf{I}+\frac{N}{L}\mathbb{E}_{G}[C\mathbf{I}]\triangleq \tau_{t+1}^2\mathbf{I},
\end{align}
which completes the induction.

%For the case where the large-scale fading is unknown, the approach to prove that $\mathbb{E}\big[\mathbf{D}_{t}\mathbf{D}_{t}^{*}\big]$ is a diagonal matrix with identical diagonal is similar, which is omitted here.

%Here we directly give the expression of the diagonal element as
%\begin{align}
%\big(\mathbb{E}\big[\mathbf{D}_{t}\mathbf{D}_{t}^{*}\big]\big)_{i,i}=&\int_{0}^{\infty}\frac{a\lambda g^{2}\tau_t^2Q(g)}{g^{\gamma}(g^2+\tau_t^2)}dg \nonumber \\
%&+\int_{0}^{\infty}\frac{\mu_{4,M+2}(s)-\mu_{2,M+1}^{2}(s)\omega_{M}^{-1}(s)}{\Gamma(M+1)/(\lambda as^M)}ds,
%\end{align}
%where functions $\mu_{i,j}(s)$ and $\omega_{i}(s)$ of $s$ are defined in \eqref{A.func.mu} and \eqref{A.func.omega}, respectively.

%\begin{align}
%\hat{\mu}(s)&=\int_{0}^{\infty}\frac{g^{4-\gamma}Q(g)}{(g^2+\tau_{t}^2)^{M+2}}\exp\left(\frac{-s}{g^2+\tau_{t}^2}\right)dg,\label{A.func.mu}\\
%\hat{\nu}(s)&=\int_{0}^{\infty}\frac{g^{2-\gamma}Q(g)}{(g^2+\tau_{t}^2)^{M+1}}\exp\left(\frac{-s}{g^2+\tau_{t}^2}\right)dg,\label{A.func.nu}\\
%\hat{\omega}(s)&=\frac{1-\lambda}{\lambda
%a\tau_t^{2M}}\exp\left(\frac{-s}{\tau_t^{2}}\right)+\int_{0}^{\infty}\frac{Q(g)g^{-\gamma}}{(g^2+\tau_{t}^2)^M}\exp\left({\frac{-s}{g^2+\tau_{t}^2}}\right)dg.\label{A.func.omega}
%\end{align}

%\bibliographystyle{IEEEtrans}
\bibliography{chenbib}

\begin{IEEEbiography}[{\includegraphics[width=1in,height=1.25in,clip,keepaspectratio]{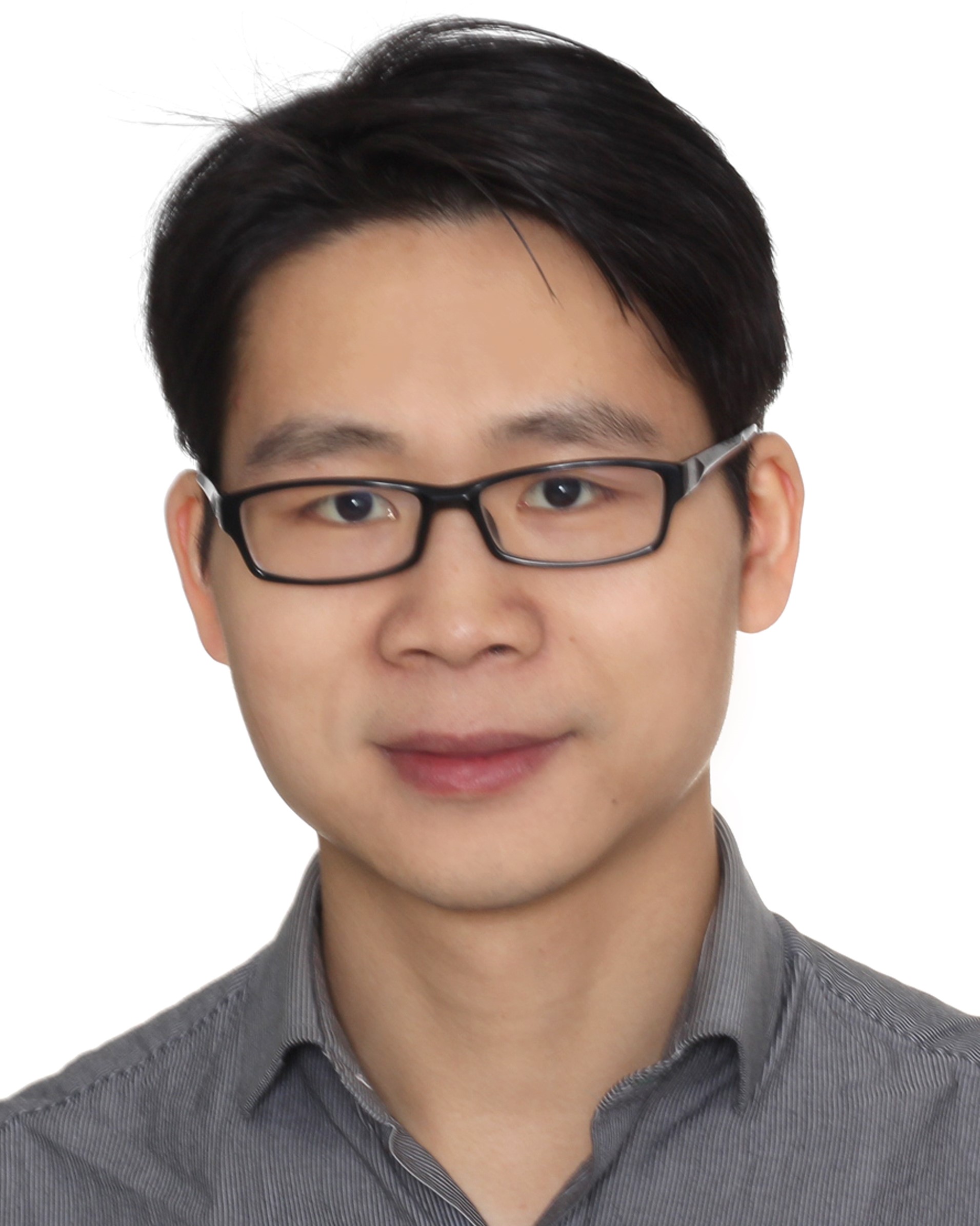}}]{Zhilin Chen}
(S'14) received the B.E. degree in electrical and information engineering and the M.E. degree in signal and information processing from Beihang University (BUAA), Beijing, China, in 2012 and 2015, respectively. He is currently pursuing the Ph.D. degree at the University of Toronto, Toronto, ON, Canada. His main research interests include wireless communication and signal processing.
\end{IEEEbiography}

\begin{IEEEbiography}[{\includegraphics[width=1in,height=1.25in,clip,keepaspectratio]{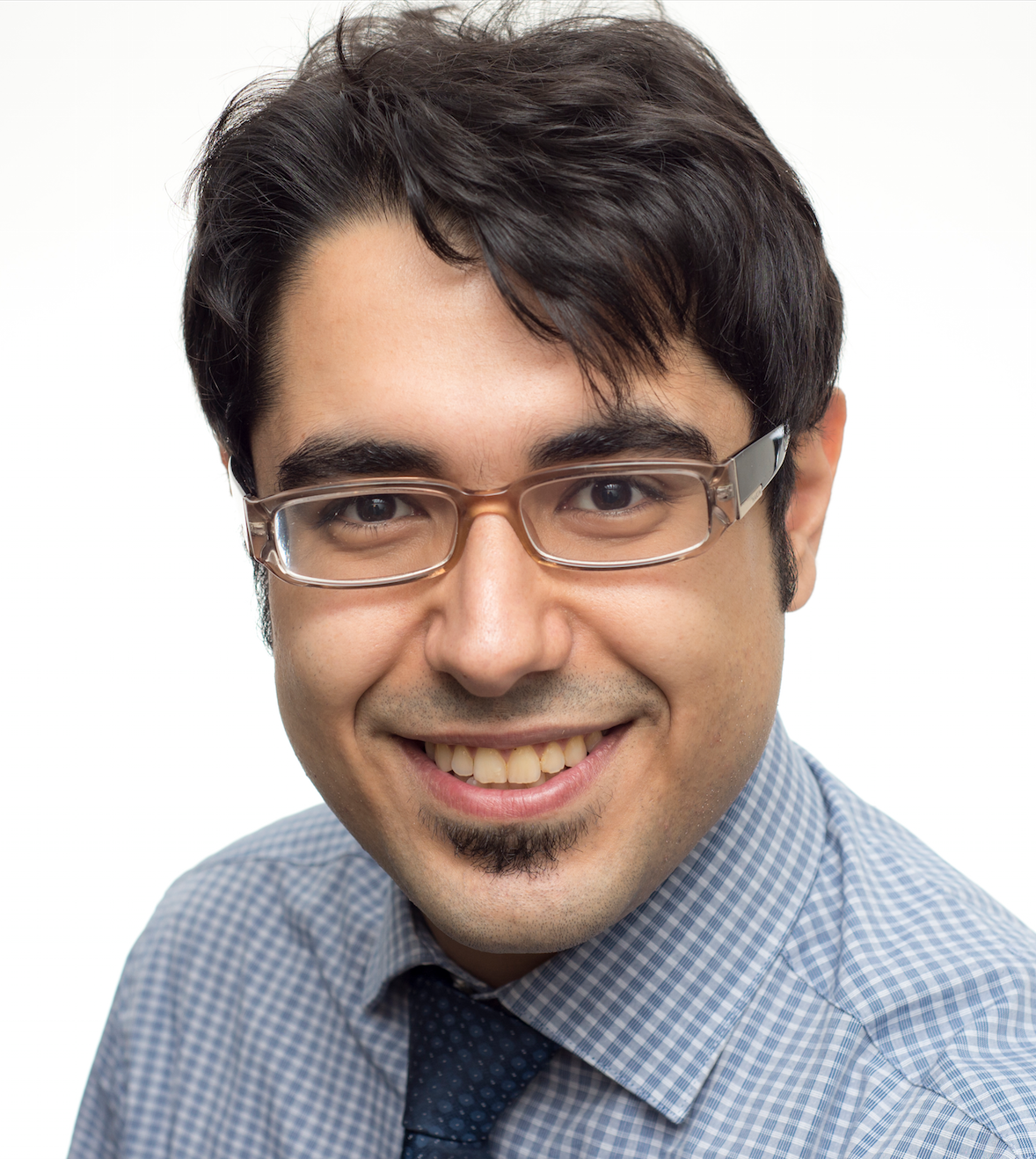}}]{Foad Sohrabi}
(S'13) received his B.A.Sc.\ degree in 2011 from the University of Tehran, Tehran, Iran, and his M.A.Sc.\ degree in 2013 from McMaster University, Hamilton, ON, Canada, both in Electrical and Computer Engineering. Since September 2013, he has been a Ph.D student at University of Toronto, Toronto, ON, Canada. Form July to December 2015, he was a research intern at Bell-Labs, Alcatel-Lucent, in Stuttgart, Germany. His main research interests include MIMO communications, optimization theory, wireless communications, and signal processing. He received an IEEE Signal Processing Society Best Paper Award in 2017.
\end{IEEEbiography}

\begin{IEEEbiography}[{\includegraphics[width=1in,height=1.25in,clip,keepaspectratio]{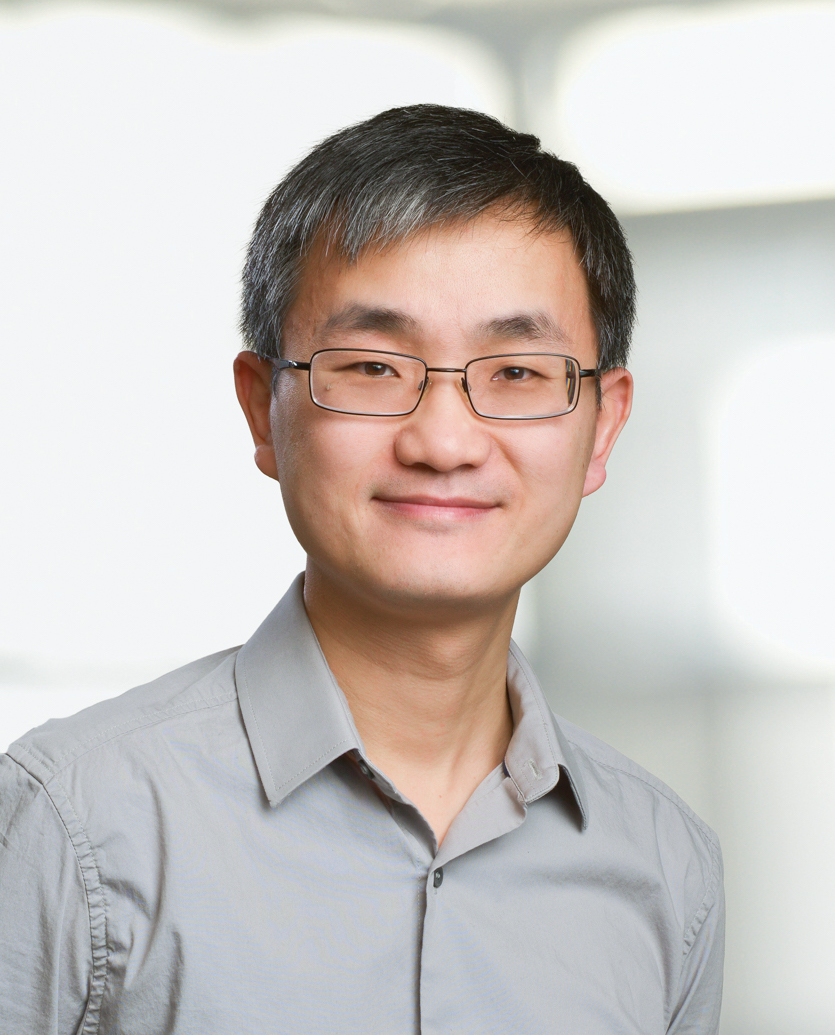}}]{Wei Yu}
(S'97-M'02-SM'08-F'14) received the B.A.Sc. degree in Computer Engineering and Mathematics from the University of Waterloo, Waterloo, Ontario, Canada in 1997 and M.S. and Ph.D. degrees in Electrical Engineering from Stanford University, Stanford, CA, in 1998 and 2002, respectively. Since 2002, he has been with the Electrical and Computer Engineering Department at the University of Toronto, Toronto, Ontario, Canada, where he is now Professor and holds a Canada Research Chair (Tier 1) in Information Theory and Wireless Communications. His main research interests include information theory, optimization, wireless communications and broadband access networks.

Prof. Wei Yu currently serves on the IEEE Information Theory Society Board of Governors (2015-20). He serves as the Chair of the Signal Processing for Communications and Networking Technical Committee of the IEEE Signal Processing Society (2017-18). He was an IEEE Communications Society Distinguished Lecturer (2015-16). He currently serves as an Area Editor of the IEEE \textsc{Transactions on Wireless Communications} (2017-19). He served as an Associate Editor for IEEE \textsc{Transactions on Information Theory} (2010-2013), as an Editor for IEEE \textsc{Transactions on Communications} (2009-2011), as an Editor for \textsc{IEEE Transactions on Wireless Communications} (2004-2007), and as a Guest Editor for a number of special issues for the \textsc{IEEE Journal on Selected Areas in Communications} and the \textsc{EURASIP Journal on Applied Signal Processing}. He was a Technical Program co-chair of the IEEE Communication Theory Workshop in 2014, and a Technical Program Committee co-chair of the Communication Theory Symposium at the IEEE International Conference on Communications (ICC) in 2012. Prof. Wei Yu received the IEEE Signal Processing Society Best Paper Award in 2017 and in 2008, a \textsc{Journal of Communications and Networks} Best Paper Award in 2017, a Steacie Memorial Fellowship in 2015, an IEEE Communications Society Best Tutorial Paper Award in 2015, an IEEE ICC Best Paper Award in 2013, the McCharles Prize for Early Career Research Distinction in 2008, the Early Career Teaching Award from the Faculty of Applied Science and Engineering, University of Toronto in 2007, and an Early Researcher Award from Ontario in 2006. Prof. Wei Yu is recognized as a Highly Cited Researcher. He is a Fellow of the Canadian Academy of Engineering.
\end{IEEEbiography}
\end{document}